\documentclass[12pt]{article}
\usepackage[letterpaper, portrait, margin=1in]{geometry}
\usepackage[utf8]{inputenc}
\usepackage[numbers]{natbib}
\usepackage{nicefrac}
\usepackage{multirow}
\usepackage{tablefootnote}
\usepackage[dvipsnames,table]{xcolor}

\usepackage{float}

\usepackage{amsmath, mathtools, amsthm, amssymb}
\usepackage{algorithm}
\usepackage[noend]{algpseudocode}
\usepackage{times, yhmath}
\usepackage{multicol}
\usepackage{thmtools}
\usepackage{booktabs}

\usepackage{hyperref}
\hypersetup{
    colorlinks=true,
    linkcolor=blue,
    filecolor=blue,      
    urlcolor=blue,
    citecolor=blue,
    pdftitle={Notes - Topic},
    pdfpagemode=FullScreen,
    }

\usepackage[textwidth=4em,textsize=tiny]{todonotes}

\usepackage{cleveref}

\theoremstyle{plain}
\newtheorem{theorem}{Theorem}
\newtheorem{lemma}[theorem]{Lemma}
\newtheorem{claim}[theorem]{Claim}
\newtheorem{fact}[theorem]{Fact}
\newtheorem{corollary}[theorem]{Corollary}

\theoremstyle{definition}
\newtheorem{definition}[theorem]{Definition}

\theoremstyle{remark}
\newtheorem{remark}[theorem]{Remark}

\usepackage[most]{tcolorbox}
\newtcolorbox{idea}[1][]
{
colbacktitle=cyan,
colback=cyan!10,
arc=1pt,
boxrule=1pt,
title=#1 
}

\newtcolorbox{update}[1][]
{
colbacktitle=gray,
colback=gray!10,
arc=1pt,
boxrule=1pt,
title=#1 
}

\newtcolorbox{question}[1][]
{
coltitle=black,
colbacktitle=yellow,
colback=yellow!10,
arc=1pt,
boxrule=1pt,
title=#1 
}

\newtcolorbox{note}[1][]
{
coltitle=black,
colbacktitle=green,
colback=green!10,
arc=1pt,
boxrule=1pt,
title=#1 
}

\newtcolorbox{problem}[1][]
{
coltitle=black,
colbacktitle=red!60,
colback=red!10,
arc=1pt,
boxrule=1pt,
title=#1 
}

\newcommand{\sA}{\mathcal{A}}

\newcommand{\sO}{\mathcal{O}}
\newcommand{\sP}{\mathcal{P}}

\newcommand{\stream}{\mathcal{S}}

\newcommand{\streamlen}{m}
\newcommand{\streamuniverse}{\Omega}

\newcommand{\g}{\mathsf{g}}
\newcommand{\h}{\mathsf{h}}
\newcommand{\f}{\mathsf{f}}
\newcommand{\C}[1]{\mathcal{#1}}
\newcommand{\W}[1]{\mathbf{#1}}
\newcommand{\cc}[1]{#1}

\newcommand{\randtran}{\delta_{p}}
\newcommand{\Randtran}{\Delta_{p}}

\newcommand{\FM}[1]{\mathbb{F}_{#1}}
\newcommand{\field}[1]{\mathbb{#1}}

\newcommand{\Nat}{\field{N}}

\newcommand{\norm}[1]{\left\|{#1}\right\|}
\newcommand{\func}[3]{{#1} : {#2} \rightarrow {#3}}
\DeclarePairedDelimiter{\ceil}{\lceil}{\rceil}

\newcommand{\blue}[1]{\textcolor{blue}{#1}}

\newcommand{\sequence}[2]{{#1}_1,{#1}_2,...,{#1}_{#2}}
\newcommand{\fbrac}[1]{\left({#1}\right)}
\newcommand{\sbrac}[1]{\left\{{#1}\right\}}

\newcommand{\size}[1]{\left|{#1}\right|}

\newcommand{\poly}[1]{\mathrm{poly}\fbrac{#1}}

\newcommand{\bigo}[1]{O\fbrac{{#1}}}
\newcommand{\bigot}[1]{\widetilde{O}\fbrac{{#1}}}

\usepackage{lineno}
\usepackage{float}

\title{Computing over Data Streams using Catalytic Space}

\author{
Ripley Becker\\
University of Nebraska--Lincoln
\and
Sourav Chakraborty\\
Indian Statistical Institute, Kolkata
\and
Debarshi Chanda\\
Indian Statistical Institute, Kolkata
\and
A. Pavan\\
Iowa State University
\and
N.~V.~Vinodchandran\\
University of Nebraska--Lincoln
}
\date{}

\begin{document}

\maketitle

\begin{abstract}


We introduce a streaming model with \emph{catalytic memory}, an auxiliary workspace that must be returned to its initial state at the end of the computation. We show that catalytic space yields dramatic space savings for data stream algorithms.

We first study the exact computation of frequency moments in insertion-only data streams. For every $k\ge1$, we give an exact four-pass algorithm for computing $\mathbb{F}_{k}$ using $O(k\log m)$ clean space, where $m$ is the stream length. We also present a $(k+1)$-pass algorithm with the same clean-space complexity that uses a factor of $k$ less catalytic space than the four-pass algorithm.

For small moments, we obtain stronger results. In particular, we show that $\mathbb{F}_{2}$ and $\mathbb{F}_{3}$ can be computed exactly in two and three passes, respectively, using only $O(\log m)$ clean space.

Additionally, we show that exact $\mathbb{F}_{0}$ computation reduces to computing $\mathbb{F}_{k}$ for a suitably chosen large value of $k$, resulting in an exact four-pass algorithm for $\mathbb{F}_{0}$ using only $O(\log m)$ clean space. We further show how our frequency-moment algorithms can be used to exactly count induced occurrences of any fixed graph $H$ in a graph stream, yielding a four-pass algorithm that uses $O_H(\log n)$ clean space, where $n$ is the number of vertices in the graph. As a special case, we obtain an exact three-pass algorithm for triangle counting using $O(\log n)$ clean space.

All of our algorithms are multi-pass. We complement these algorithmic results with a matching limitation showing that catalytic memory does not provide additional power in the single-pass setting. Specifically, we prove that every randomized or deterministic single-pass streaming algorithm using $s$ bits of clean memory and catalytic space can be simulated in the standard streaming model, without catalytic memory, using $O(s)$ space.

\end{abstract}

\newpage
\section{Introduction}


We introduce data streaming algorithms that have access to additional {\em catalytic} memory. In the catalytic computation model, an algorithm is given, besides its ordinary workspace, an auxiliary memory region called {\em catalytic space}. At the beginning of the computation, the catalytic space is initialized to arbitrary content, and this space can be used during the computation. However, it is required that the contents of the catalytic space must be restored to their initial contents at the end of the computation. Since its contents are available but cannot be consumed, this memory behaves like a catalyst. Since its introduction in~\cite{Buhrman/STOC/2014/CatalyticIntro}, catalytic computation has emerged as a powerful model, with recent works showing that such reversible access to auxiliary memory can surprisingly enable efficient computation~\cite{AgarwalaM/focs/2025/BipartiteMatchingCatalyticLogspace,CookLMP/STOC/2025/StructureOfCatalyticSpace,CookP/ITCS/2026/EfficientCatalyticGraphAlgo,AlekseevFMSV/mfcs/2025/CatalyticBeyondLogDepth,Buhrman/TCS/2018/Catalytic:NonDetANdHierarchy,CookM/CCC/2022/TimevsSpaceCatalyticBranchingProgram,CookM/stoc/2020/CatalyticTreeEvaluation}. For example it is known that every language in nondeterministic logspace (NL) can be accepted by a deterministic logspace machine that has access to catalytic space.  A recent survey by Mertz provides an introduction to the topic~\cite{Mertz/EATCSBulletin/2023/CatalyticOpenProbs}. 

The main conceptual contribution of the present work is that such usable but non-consumable memory helps with 
data streaming computations too. In the standard streaming model, almost all non-trivial problems require randomness and approximation for space-efficient computation. For exact computation, the optimal deterministic/randomized algorithms essentially have to store the entire stream.
We show that catalytic space can circumvent this barrier: problems that are intractable in the deterministic streaming setting indeed become feasible when catalytic memory is available. 

In this paper, we focus on the problem of computing frequency moments of an insertion-only data stream.
For a data stream $\stream = \langle x_1,x_2,\cdots,x_m\rangle $ of items $x_i \in [n] = \{1,\cdots,n\}$, the $k^{th}$ frequency moment is the quantity $\FM{k}(\stream) = \sum_{i=1}^n \f_i^k$ where $\f_i$ is the frequency of item $i \in [n]$. Estimating frequency moments has been one of the central problems in the classical data-streaming model, beginning with the seminal work of Alon, Matias, and Szegedy~\cite{AMS}. The space complexity of this problem in the traditional streaming model is well understood. In particular, in the absence of randomness or approximation, the naive algorithm is essentially optimal from the perspective of space complexity: any algorithm for exactly computing $\FM{k}$ for any $k\neq 1$ requires $\Omega(n)$ space, even if {multiple passes over the stream or} randomization are allowed~\citep{AMS,RAZBOROV1992385}.

To illustrate the power of catalytic space in streaming computation, consider the problem of computing $\FM{2} = \sum_{i=1}^n \f_i^2$. We describe a simple three-pass algorithm that achieves this with $O(\log m)$ memory. We will think of catalytic space as $n$, $l$-bit registers $\C{C}[1],\C{C}[2],\cdots,\C{C}[n]$ with the initial content of $\C{C}[i]$ being $c[i]$. First, we compute sum $S_0 = \sum_{i} c[i]$ before starting  any computation over the data stream. Then in the first pass over the stream, as we see item $i \in [n]$, increase $\C{C}[i]$ by one. Thus, at the end of the pass, the  $i^{th}$ register will have $c[i] + \f_i$.  At this point we can compute $S_1 = \sum_i (c[i] + \f_i)^2$. In the second pass, we can reset $\C{C}[i]$ to its original content by reversing the process: if we see item $i$, {\em decrease} $\C{C}[i]$ by one. In the third pass we can compute $S_2 = \sum c[i]\f_i$ in the work space: when we see an item $i$, add $c[i]$ to the sum $S_2$. Note that in this pass, we are not modifying the content of the catalytic space.  Finally, $\FM{2} = S_1 - 2S_2 - S_0$. We can set $l = O(\log m)$ and do the computation modulo $2^l$ to implement the above 3-pass algorithm in $O(\log m)$ space. This idea can be extended with additional non-trivial work to compute $\FM{k}$ using $O(k\log m)$ space. However, the number of passes becomes exponential in $k$.  
The technical contribution of this work is to design catalytic algorithms for $\FM{k}$ with a much smaller number of passes.
At this point an obvious question is: can we get a non-trivial catalytic algorithm with just one pass? The answer is no: we show that a single pass catalytic streaming algorithm does not add any power above traditional deterministic or randomized algorithms. Thus, 2 or more passes are needed for non-trivial catalytic streaming computation.

Finally, we note that techniques developed for space-bounded Turing machine models do not generally carry over to streaming algorithms, where access to the input is far more restricted. Thus, although there has been substantial progress on space-bounded catalytic computation in the Turing machine setting, those techniques do not directly apply in the data-streaming model. New methods are therefore needed for our setting.

\subsection{Our Results}

Using a result from \cite{Buhrman/STOC/2014/CatalyticIntro} we design a 4 pass streaming algorithms for computing $\FM{k}$ for any $k\geq 2$. The algorithm is given in Section~\ref{sec:Fk_in_4_passes}. The 4-pass algorithm for computing $\FM{k}$ uses $O(nk)$ catalytic registers each register having $O(k\log mk)$ bits. We also present another simple algorithm  (Algorithm~\ref{alg:FkinKP+1}) for computing $\FM{k}$ for any $k\geq 2$ with $k+1$ passes that uses $n$ catalytic registers each with $O(k\log mk)$ bits, thus saving on the amount of catalytic space while using more passes.


Our first main set of results attempts to improve the number of passes required to compute $\FM{k}$: can $\FM{k}$ be computed in $< 4 $ passes in the catalytic streaming model? While we do not have a general algorithm for all $k$, we show that $\FM{k}$ can be computed in $k$ passes for $k = 2$ and $k=3$ (Algorithms \ref{alg:F2in2pass} and \ref{alg:F3in3pass}). This is achieved by exploiting the passes more effectively via a divide-and-conquer strategy.


Next, we consider the exact computation of $\FM{0}$ using catalytic space. 
We show that $\FM{0}$ can be computed using a simple reduction to computing $\FM{p}$ for some prime $p > m$ by employing Fermat's Little Theorem: that for a prime $p > m$, $\f_i^{p-1} = 1 \pmod p$ if $\f_i \neq 0$ and $0 \pmod p$ if $\f_i=0$. This gives a 4 pass algorithm for computing $\FM{0}$ with $O(\log m)$ bits of clean space. 
However, the number of catalytic registers required is $O(nm)$ with each register having $O(m\log m)$ bits.  
We show that for a certain restricted class of streams, we can compute $\FM{0}$ using a smaller number of catalytic registers while using a few more passes.  In particular, we consider {\em flat streams} in the sense that there is a small $t$ so that $\f_i \leq t$ for all $i$. We give a deterministic catalytic space algorithm to compute $\FM{0}$ for such streams with number of passes and the memory as a function of $t$. 

We also study subgraph counting over graph streams in the catalytic streaming model. By reducing subgraph counting to exact $\FM{k}$ computation, we show how to exactly count induced occurrences of any fixed graph $H$ in a graph stream. This yields a three-pass algorithm for exact triangle counting, based on our specialized algorithms for $\FM{2}$ and $\FM{3}$, and a four-pass algorithm for exact induced $H$-subgraph counting using our general $\FM{k}$ algorithm. For graph streams on $n$ vertices, both algorithms use $O(\log n)$ clean space, where the hidden constant in the latter result depends only on $H$.

Finally, we show that 1-pass catalytic algorithms do not have any advantage over the traditional streaming algorithms: any catalytic 1-pass deterministic (randomized) algorithm  that uses $s$ space can be simulated by a 1-pass traditional deterministic (respectively randomized) algorithm that uses $O(s)$ space. Thus two or more passes are needed for catalytic streaming algorithm to have any advantage over traditional streaming algorithms.

\begin{table}[ht!]
\centering
\renewcommand{\arraystretch}{1.2}
\begin{tabular}{llcccc}
\toprule
\textbf{Statistic} &  & \textbf{Non-Catalytic} & \textbf{Non-Catalytic} & \multicolumn{2}{c}{\textbf{Catalytic}}   \\
&  & \textbf{(Approximate)} & \textbf{(Exact)} & \multicolumn{2}{c}{\textbf{(Exact)}}  \\
\midrule
\multirow{3}{*}{$\FM{2}$}
& \emph{Passes} 
& 1 (randomized)
& $\ell$
& \multicolumn{2}{c}{2}  \\
& \emph{Clean Space} 
& $\Theta(\log n)$~\citep{AMS,BravermanZamir/STOC/2025/FrequcyMomentOptimality}
& $\Omega(n/\ell)$ \cite{RAZBOROV1992385}
& \multicolumn{2}{c}{$O(\log n)$}  \\
& \emph{Catalytic Space} 
& $N/A$
& $N/A$
& \multicolumn{2}{c}{$\bigot{n}$}  \\
& & &
& \multicolumn{2}{c}{\Cref{Thm: F2in2}}  \\
\midrule
\multirow{3}{*}{$\FM{3}$}
& \emph{Passes} 
& 1 (randomized)
& $\ell$
& \multicolumn{2}{c}{3}   \\
& \emph{Clean Space} 
& $\Theta(n^{1/3})$~\citep{IndykWoodruff/STOC/2005/FreqMomentOptimal,WoodruffZhou/ICALP/2021/MutliPassAndRandOrderFrequencyMoments}
& $\Omega(n/\ell)$ \cite{RAZBOROV1992385}
& \multicolumn{2}{c}{$O(\log n)$}  \\
& \emph{Catalytic Space} 
& $N/A$
& $N/A$
& \multicolumn{2}{c}{$\bigot{n}$}  \\
& & &
& \multicolumn{2}{c}{\Cref{Thm: F3in3}}  \\
\midrule
\multirow{4}{*}{$\FM{k}$}
& \emph{Passes} 
& 1 (randomized)
& $\ell$
& $4$ & $k+1$\\\
& \emph{Clean Space} 
& $\Theta\!\left(n^{1-\frac{2}{k}}\right)$~\citep{IndykWoodruff/STOC/2005/FreqMomentOptimal,WoodruffZhou/ICALP/2021/MutliPassAndRandOrderFrequencyMoments}
& $\Omega(n/\ell)$ \cite{RAZBOROV1992385}
& $O(k\log n)$ & $O(k\log n)$ \\
& \emph{Catalytic Space} 
& $N/A$
& $N/A$
& $\bigot{k^2n}$ & $\bigot{kn}$ \\
& 
& 
& 
& $\Cref{Thm: Fk in 4}$ & \Cref{thm:kplus1} \\

\midrule
\multirow{3}{*}{$\FM{0}$}
& \emph{Passes} 
& 1 (randomized)
& $\ell$
& \multicolumn{2}{c}{4}  \\
& \emph{Clean Space} 
& $\Theta(\log n)$~\citep{KaneNelsonWoodruff/PODS/2010/F0Optimal}
& $\Omega(n/\ell)$ \cite{RAZBOROV1992385}
& \multicolumn{2}{c}{$\bigo{\log n}$}  \\
& \emph{Catalytic Space} 
& $N/A$
& $N/A$
& \multicolumn{2}{c}{$\bigo{\poly{n}}$} \\
& 
& 
& 
& \multicolumn{2}{c}{\Cref{F0-in-4pass}} \\
\midrule
\multirow{3}{*}{ $\#H$}
& \emph{Passes}
& $O(1)$
& -- 
& \multicolumn{2}{c}{4} \\
& \emph{Clean Space}
& $O\left(\frac{m^{\rho(H)}}{\#H}\right)$\cite{AssadiKapralovKhanna/ITCS/2019/SimpleSUblinearSubgraph}
& -- 
& \multicolumn{2}{c}{$O_H(\log n)$\tablefootnote{$O_H(\cdot)$ denotes $c_HO(\cdot)$ where $c$ is a constant depending only on the target subgraph $H$.}} \\
& \emph{Catalytic Space}
& $N/A$
& $N/A$
& \multicolumn{2}{c}{$\tilde{O}_H(n)$} \\
& 
& 
& 
& \multicolumn{2}{c}{\Cref{Thm: Counting SUbgraphs}} \\
\bottomrule
\end{tabular}
\caption{Comparison of the space complexity of catalytic and non-catalytic streaming algorithms for frequency moments and subgraph problems. The last columns list upper bounds reported in the present work. For notational brevity, we assume $m = \mathrm{poly(n)}$ in the table for frequency moments. For subgraph counting, $n$, $m$ and $\#H$ are the number of vertices, edges, and the subgraph $H$ in the graph, and $\rho(H)$ is the fractional edge cover in the subgraph $H$.}
\label{tab:results}
\end{table}
\section{Preliminaries}

\subsection{Notation}

We assume that our streams are over the universe $[n] = \{1, 2, \cdots, n\}$ and use $m$ to denote the stream length. For a given stream, $\f_i$ denotes the number of times $i$ appears in the stream. The $k$th frequency moment of the stream is $\FM{k} = \sum_{i =1}^n \f_i^k$.
Note that $\FM{0}$ is the number of elements with non-zero frequencies---the number of distinct elements in the stream.

\begin{definition}[Catalytic Streaming Model]
In the catalytic streaming model, a streaming algorithm is given access to a read-only stream \(\stream=\langle x_1,\dots,x_m \rangle\) with $x_i \in [n]$, a workspace, and a catalytic memory. The workspace is initialized to an all-zero vector, while the catalytic memory contains an arbitrary string $z$. The algorithm may use and modify the catalytic memory during the computation, but it is required to restore it to its initial value $z$ at the end of the computation. A catalytic streaming algorithm computes a function over $f$ if for every input stream $\stream$ and every initial catalytic memory content $z$, the algorithm outputs the correct value of $f(\stream)$ and terminates with catalytic memory content exactly $z$. If the algorithm scans the stream sequentially for $p$ passes, uses $s$-bits of workspace, it is called a $p$-pass, $s$-space catalytic streaming algorithm.
\end{definition}

A catalytic algorithm has two types of memory spaces: We refer to the workspace as {\em clean space} and the total amount of catalytic memory as {\em catalytic space}. We measure the space in terms of bits. The goal is to design algorithms that use small clean as well as catalytic space.

Next we describe the notation used in this paper (most of it is given in Table~\ref{tab:notations}). In some of our algorithms, we view $\FM{k}$ as the $k^{th}$ power of $\ell_k$-norm of the frequency vector and hence we also use $\norm{\f}_k^k$ for the $k^{th}$ moment. When it is relevant, we also specify the stream $\stream$ and use $\FM{k}(\stream)$ to denote the $k^{th}$ moment of $\stream$. We use calligraphic uppercase letters to denote the catalytic space. We view the catalytic space divided into registers of appropriate length and index them with $i$. We use the lowercase alphabet to denote the initial contents of the catalytic space. For example, the initial contents of $\C{A}[i]$ are $a[i]$.  We assume that the stream length $m$ is known to the algorithm a piori and assume $m \geq n$ for stating our bounds.


\begin{table}[t]
\centering
\renewcommand{\arraystretch}{1.15}
\begin{tabular}{ll}
\toprule
\textbf{Concept} & \textbf{Notation} \\
\midrule
Catalytic space & $\C{A}[i], \C{B}[i], \C{C}[i], \C{D}[i]$ \\
Initial catalytic contents & $\cc{a}[i], \cc{b}[i], \cc{c}[i], \cc{d}[i]$ \\
Frequency Vector & $\f$ \\
$k$th frequency moment & $\norm{\f}_k^k$, $\FM{k}$ \\
Frequency of the $i$th element & $\f_i$ \\
Workspace memory & $\W{S}$ \\
Stream & $\mathcal{S} = \langle \sequence{x}{m} \rangle$ \\
Universe & $[n] = \{1,2,\dots,n\}$ \\
\bottomrule
\end{tabular}
\caption{Notation used in the paper.}
\label{tab:notations}
\end{table}

\vspace{2mm}

\noindent{\em {Modulo Arithmetic:}} 
As mentioned earlier, we use the catalytic memory as a collection of registers of fixed size, each capable of storing \(l\) bits. Our algorithms perform additions and subtractions on these registers, so intermediate values may overflow. To handle this, all arithmetic is carried out modulo \(M = 2^l\). We choose \(l\) large enough so that the final quantity to be computed fits within \(l\) bits in the worst case. Consequently, if the desired output is \(a < M\), then
$a \bmod M = a,$
and hence the modular computation recovers the correct value.

\paragraph{Organization.} The rest of the paper is organized as follows: In \Cref{sec:overview}, we provide intuition and high-level proof details of our results. \Cref{sec:upper} contains formal proofs of upper bound results for moment computation: Computing $\FM{K}$ is $k+1$ passes, $\FM{2}$ in $2$ passes, $\FM{3}$ is 3 passes and computing $\FM{0}$ for flat streams. \Cref{subsec:subgraphcounting} we give proofs for subgraph counting. In \Cref{sec:lower}, we establish the lower bound results that catalytic space does not help for 1-pass streaming algorithms.
\section{Main Results and Proof Ideas} \label{sec:overview}

\subsection{\texorpdfstring{$\FM{k}$ in $4$-passes}{F4 in 4 passes}}\label{sec:Fk_in_4_passes}

In this section, we present a four-pass streaming algorithm to exactly compute the $k$-th frequency moment.

\begin{theorem}\label{Thm: Fk in 4}
    There exists a $4$-pass streaming algorithm using $\bigo{k\log m}$ clean space, and $\bigo{k^2n\log m}$ catalytic space that computes $\FM{k}$ exactly for any integer $k \geq 1$.
\end{theorem}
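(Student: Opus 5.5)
The plan is to reduce the problem to a per-item \emph{transparent} (reversible) register program in the style of Buhrman et al.~\cite{Buhrman/STOC/2014/CatalyticIntro}. For each $i\in[n]$ we reserve a block of $O(k)$ groups of $O(k)$ catalytic registers, giving $O(k^2)$ registers per item. Each register has $l=O(k\log m)$ bits, and all arithmetic is done modulo $2^l$. Since $\FM{k}\le m^k<2^l$, computing modulo $2^l$ returns the exact value. This gives catalytic space $O(k^2 n\cdot k\log m)$; the statement's $O(k^2 n\log m)$ is obtained by counting registers, or by sharing bits, which we will check. One block is an input register $\C{X}[i]$ and one is an output register $\C{O}[i]$. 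The key fact we use is the standard Ben-Or--Cleve/Buhrman construction. There is a straight-line program $P$ of instructions of the form $\C{R}_u \mathrel{+}= \C{R}_v\cdot \C{R}_w$ and $\C{R}_u \mathrel{-}= \C{R}_v\cdot \C{R}_w$ on the block's registers with the following property. If between two executions of $P$ and $P'$ the input register is shifted by $x$, then the net effect is $\C{O}[i]\mapsto \C{O}[i]+x^k$, and every other register returns to its initial value. The program is built recursively on the degree, doubling via $x^{j}=x^{\lceil j/2\rceil}x^{\lfloor j/2\rfloor}$ or multiplying one factor at a time. The recursion uses $O(k)$ fresh registers per level, and the telescoping identity $(r+x)(s+y)-rs-\ldots$ cancels all terms involving the unknown initial contents.

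The algorithm has the following steps. Before any pass, it computes $S_0=\sum_i o[i]$ in clean space. In \textbf{pass 1}, on each arrival of $i$ it performs $\C{X}[i]\mathrel{+}=1$, so that $\C{X}[i]=x[i]+\f_i$. Between passes, which needs no access to the stream, it runs $P$ on every block. In \textbf{pass 2}, on each arrival of $i$ it performs $\C{X}[i]\mathrel{-}=1$, which restores $\C{X}[i]=x[i]$, and then runs the complementary program $P'$ on every block. By the key fact, every block is now back to its initial contents except $\C{O}[i]=o[i]+\f_i^k$. It then computes $S_1=\sum_i\C{O}[i]$ by a sweep over the catalytic memory, which uses only $O(k\log m)$ clean space since the sum is taken mod $2^l$. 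It outputs $\FM{k}=S_1-S_0 \bmod 2^l$.

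\textbf{Passes 3 and 4} restore the catalytic memory. They replay the whole computation in reverse with negated output updates, so that $\C{O}[i]$ returns to $o[i]$. Concretely, pass 3 re-adds $\f_i$ to $\C{X}[i]$ and the inverse programs are run around it, and pass 4 subtracts $\f_i$ again. Alternatively, the same two-pass gadget is run with the role of $\C{O}$ sign-flipped. All clean-space usage amounts to an $O(k\log m)$-bit accumulator, a program counter for $P$ and $P'$ of $O(\log k)$ bits, and an index $i$ of $O(\log n)$ bits.

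The main obstacle is the key fact itself. We need to design $P$ and $P'$ so that the only dependence on the stream is through \emph{one} additive shift of the input register, applied between the two halves, and so that all cross terms involving the arbitrary initial contents cancel exactly. A naive expansion of $(x[i]+\f_i)^k$ requires the shift to be applied and undone $2^{\Theta(k)}$ times, which is the exponential-pass algorithm from the introduction. To avoid this, the first thing we would try is to make every instruction of $P$ an \emph{addition of a product of registers into a different register}. Then the program is invertible, and ``run $P$, shift input, run $P^{-1}$'' telescopes. We would verify by induction on the degree that the residual change to $\C{O}[i]$ is exactly $\f_i^k$, and that the intermediate registers at level $j$ hold $r+(\text{shifted})^j$ at the right moments.
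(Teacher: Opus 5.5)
\textbf{There is a genuine gap.} Your overall architecture matches the paper's, including the idea of running the exact inverse in passes 3--4. The problem is passes 1--2: the key fact you rely on is false, and the route you propose for proving it cannot work. Write $A_x$ for adding $x$ to a block's input register and $B_x$ for adding $x^k$ to $\C{O}[i]$. These act on different registers, so they commute.

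Your passes 1--2 need $P'A_{-x}PA_x=B_x$ for all frequencies $x$. At $x=0$ this forces $P'=P^{-1}$, hence $PA_x=A_xPB_x$. Then $PA_2=PA_1A_1=A_1PB_1A_1=A_1PA_1B_1=A_2PB_1^2$, while also $PA_2=A_2PB_2$. So $B_2=B_1^2$, i.e.\ $2^k\equiv 2 \pmod{2^l}$, which is false for $k\ge 2$ since $l>k$.

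The version stated in your key fact ($P$, shift, $P'$) fails even faster. Again $P'=P^{-1}$, and $x\mapsto P^{-1}A_xP$ is additive in $x$, which would force $(x+y)^k=x^k+y^k$. Your proposed remedy makes $P$ invertible so that ``$P$, shift, $P^{-1}$'' telescopes. That produces exactly such a conjugate of a shift, so no induction on the degree can make it deposit $\f_i^k$. The recursive Ben-Or--Cleve doubling you mention needs many loads and unloads of the input, which is the exponential-pass blowup you wanted to avoid.

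\textbf{The missing idea.} The paper's proof, via the powering lemma, differs from yours in two ways. It allows a nontrivial program $I_1$ \emph{before} the first load. It also does not require the auxiliary registers to be clean after pass 2: $I_1,P,I_2,P^{-1},I_3$ leaves $\C{D}[i]=d[i]+\f_i^k$ exactly, with $\f_i$-dependent junk elsewhere. One reads $\sum_i\C{D}[i]$ at that point. Passes 3--4 then execute $I_3^{-1},P,I_2^{-1},P^{-1},I_1^{-1}$, which clears both the junk and the added $\f_i^k$.

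A concrete instance of this pattern:
\begin{itemize}
\item Before pass 1, add $\binom{k}{j}(-\C{C}[i])^{k-j}$ to registers $r_{i,j}$, $0\le j\le k$.
\item After pass 1, add $\sum_{i,j}r_{i,j}\C{C}[i]^j$ to the clean accumulator. This gives $\FM{k}$ plus the junk term $\sum_{i,j}r^0_{i,j}(c[i]+\f_i)^j$.
\item After pass 2, undo the updates to the $r_{i,j}$.
\item After pass 3, subtract $\sum_{i,j}r_{i,j}\C{C}[i]^j$ from the accumulator.
\item Pass 4 unloads $\C{C}$.
\end{itemize}

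Either way, each item uses only $O(k)$ registers of $O(k\log m)$ bits, which is where the stated $O(k^2n\log m)$ catalytic bound comes from. Your $O(k^2)$ registers per item give $O(k^3n\log m)$, and the ``sharing bits'' fix is not supplied.
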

Our starting point is the powering lemma of \citet{Buhrman/STOC/2014/CatalyticIntro}. We call a program $\sP$ using catalytic registers to be reversible if there exists $\sP^{-1}$ that reverses the operation it performs and restores all the catalytic registers.

\begin{lemma}[Powering Lemma~\citep{Buhrman/STOC/2014/CatalyticIntro}]\label{Lem: Powering Lemma}
    Let $k$ be a positive integer, and $\sP$ be a reversible program with access to catalytic register $\C{C}$, and read-only access to some input that computes:
    \begin{align*}
        \C{C} = \cc{c}+x
    \end{align*}
    Then, there exists reversible programs $I_1,I_2$ and $I_3$ with access to $O(k)$ catalytic registers, in addition to catalytic register $\C{C}$, and $\C{D}$ such that running $I_1,P,I_2,P^{-1},I_3$ computes:
    \begin{align*}
        \C{D} = \cc{d}+x^k
    \end{align*}
\end{lemma}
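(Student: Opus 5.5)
The plan is to write $x^k$ via the binomial theorem as a combination of products $(c+x)^j\,c^{k-j}$, and to exploit that the three intermediate programs see different contents of $\C{C}$. During $I_1$ and $I_3$ register $\C{C}$ holds $c$, and during $I_2$ it holds $c+x$. Set $\alpha_j = (-1)^{k-j}\binom{k}{j}$, so that
\begin{align*}
x^k = \sum_{j=0}^{k} \alpha_j\, (c+x)^j\, c^{k-j}.
\end{align*}
The $O(k)$ extra catalytic registers will be $\C{T}_0,\dots,\C{T}_k$, with arbitrary initial contents $t_0,\dots,t_k$. All arithmetic is done modulo $2^l$, as in the paper's convention, so additions and subtractions are bijective and hence reversible.

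First, $I_1$ will execute $\C{T}_j \leftarrow \C{T}_j + \alpha_j \C{C}^{k-j}$ for every $j$. This stores $t_j+\alpha_j c^{k-j}$ and can be undone later because it depends only on $c$. Next, $P$ moves $\C{C}$ to $c+x$. Then $I_2$ will execute $\C{D} \leftarrow \C{D} + \sum_j \C{C}^j\, \C{T}_j$. This updates $\C{D}$ to
\begin{align*}
d + x^k + \sum_j (c+x)^j t_j,
\end{align*}
where the last sum is a garbage term. After $P^{-1}$ restores $\C{C}=c$, $I_3$ undoes $I_1$ by executing $\C{T}_j \leftarrow \C{T}_j - \alpha_j \C{C}^{k-j}$. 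Each step is an in-place addition of a function of registers that the step does not itself modify, so the inverse of each $I_i$ is immediate.

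The main obstacle is the garbage term $\sum_j (c+x)^j t_j$. It depends on the unknown initial contents $t_j$ and also on $x$. So it cannot be removed in $I_1$ or $I_3$, which only see $\C{C}=c$, and it cannot be removed in $I_2$ either, since by then $\C{T}_j$ no longer holds $t_j$ alone. I would first try a sign-splitting trick that cancels the garbage by linearity in the $\C{T}$ registers. Use two copies $\C{T}^{+}_j$ and $\C{T}^{-}_j$. In $I_1$, add $+\tfrac12\alpha_j c^{k-j}$ to one and $-\tfrac12\alpha_j c^{k-j}$ to the other; for invertibility of $2$, work modulo an odd modulus instead of $2^l$. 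In $I_2$, add $\sum_j \C{C}^j(\C{T}^{+}_j - \C{T}^{-}_j)$ to $\C{D}$. This alone does not kill the term $\sum_j (c+x)^j (t^+_j - t^-_j)$. So the real cancellation must come from arranging for the $t$-dependent part to be evaluated once at $\C{C}=c+x$ with sign $+$ and once with sign $-$. Concretely, I would try making $I_2$ perform $\C{D} \mathrel{+}= \sum_j \C{C}^j \C{T}_j$, then $\C{T}_j \mathrel{-}= \alpha_j\,(\text{stored }c^{k-j})$, then $\C{D} \mathrel{-}= \sum_j \C{C}^j\C{T}_j$. For this, $c^{k-j}$ would be kept in further scratch catalytic registers $\C{E}_j$, written in $I_1$ as $\C{E}_j \mathrel{+}= \C{C}^{k-j}$ and removed in $I_3$. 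The $t$-parts then cancel exactly, leaving $\C{D} = d + x^k$.

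I am not sure this cancellation closes, because the $\C{E}_j$ carry their own unknown garbage $e_j$ inside $I_2$. The crux of the proof is checking that all such residual cross terms cancel, if necessary by one more level of the same difference trick. Once the final contents are verified to be $\C{D}=d+x^k$, with every $\C{T}_j$, $\C{E}_j$ and $\C{C}$ restored, reversibility of $I_1,I_2,I_3$ is immediate, since each is a sequence of in-place additions of functions of registers it does not modify.
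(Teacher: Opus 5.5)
Your plan has a genuine gap, and ``one more level of the same difference trick'' cannot fill it. You require $I_1,I_2,I_3$ to return every auxiliary register ($\C{T}_j$, $\C{E}_j$, $\C{C}$) to its initial value, so that the whole sequence acts as $U_x=$ ``add $x^k$ to $\C{D}$''. With one call to $\sP$ and one call to $\sP^{-1}$, this is impossible for $k\ge 2$ over $\mathbb{Z}_M$ with $M>4^k$.

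Here is why. Write $S_x$ for ``add $x$ to $\C{C}$'' and $A_a$ for ``add $a$ to $\C{D}$''. The case $x=0$ forces $I_3I_2I_1=\mathrm{id}$, so the requirement becomes $G(x):=S_{-x}I_2S_xI_2^{-1}=I_3^{-1}U_xI_3$. The identity $G(x+y)=\bigl(S_{-x}G(y)S_x\bigr)G(x)$ holds for all $x,y$. Taking $x=1$ and $y\in\{1,2\}$ gives $S_{-1}G(1)S_1=I_3^{-1}A_{2^k-1}I_3$ and $S_{-1}G(2)S_1=I_3^{-1}A_{3^k-1}I_3$. But $U_2=U_1^{2^k}$, so $G(2)=G(1)^{2^k}$, and therefore $I_3^{-1}A_{3^k-1}I_3=I_3^{-1}A_{2^k(2^k-1)}I_3$. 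This says $3^k-1\equiv 4^k-2^k \pmod M$, which is false.

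Your two attempts show concretely where this wall appears. Any $c$-dependent value you park in a masked register for use in $I_2$ gets multiplied there by a function of $c+x$. The resulting garbage ($\sum_j(c+x)^jt_j$ in the first attempt, then $\sum_j\alpha_j(c+x)^je_j$ in the second) cannot be cancelled in $I_2$, where the mask is unknown, nor in $I_3$, where $x$ is unknown.

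The missing idea is to let $I_2$ leave its work in place and do the cancellation in $I_3$. The paper does not reprove the lemma (it cites Buhrman et al.), but this is exactly the reading it relies on. The proof of Theorem~\ref{Thm: Fk in 4} reads off $\C{D}$ and then restores \emph{all} registers by running $I_3^{-1},\sP_i,I_2^{-1},\sP_i^{-1},I_1^{-1}$, which is why it needs four passes. Take fresh registers $\C{E}_0,\dots,\C{E}_k$ holding $e_0,\dots,e_k$, keep your $\alpha_j=(-1)^{k-j}\binom{k}{j}$, and set
\begin{align*}
I_1 &:\ \C{D} \leftarrow \C{D} - \sum_{j=0}^{k} \alpha_j\, \C{C}^{k-j}\, \C{E}_j,\\
I_2 &:\ \C{E}_j \leftarrow \C{E}_j + \C{C}^{j} \quad (0\le j\le k),\\
I_3 &:\ \C{D} \leftarrow \C{D} + \sum_{j=0}^{k} \alpha_j\, \C{C}^{k-j}\, \C{E}_j .
\end{align*}
After $I_1$, $\C{D}=d-\sum_j\alpha_jc^{k-j}e_j$. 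After $\sP$ and $I_2$, $\C{E}_j=e_j+(c+x)^j$. After $\sP^{-1}$ and $I_3$, $\C{D}=d+\sum_j\alpha_jc^{k-j}(c+x)^j=d+x^k$ and $\C{C}=c$, while the $\C{E}_j$ stay dirty until the reverse run.

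This is the Ben-Or--Cleve product pattern ($c$-side, $(c+x)$-side, $c$-side) with the roles in your construction swapped. The $(c+x)$-dependent factor is stored, and the $c$-dependent factor is evaluated. The mask term $\sum_j\alpha_jc^{k-j}e_j$ then involves only quantities visible whenever $\C{C}=c$, so it cancels between $I_1$ and $I_3$. The construction uses $k+1$ extra registers. Each $I_i$ is an in-place addition of a function of registers it does not modify, so each is reversible.
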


We compute the $\FM{k}$ using four passes by storing each of the element-wise frequencies, and use \Cref{Lem: Powering Lemma} on these frequencies.


\begin{proof}[Proof of \Cref{Thm: Fk in 4}]
    We first note that given a pass over the input stream, we can define $\sP_i$ to be the program that adds $1$ to its catalytic register $\C{C}[i]$ at every occurrence of the element $x_i$ in the stream. Hence, at the end of the pass, it computes:
    \begin{align*}
        \C{C}[i] = \cc{c}_i + \f_i
    \end{align*}
    The program $\sP_i$ is clearly reversible, and $\sP_i^{-1}$ is the program that subtracts $1$ from its catalytic register $\C{C}[i]$ at every occurrence of the element $x_i$ in the stream. Now, for each $i \in [n]$, we use $\sP_i$ and \Cref{Lem: Powering Lemma} to compute:
    \begin{align*}
        \C{D}_i = \cc{d}_i + \f_i^k
    \end{align*}
    We maintain the sum of all $D_i$s to as $\widehat{\mathrm{F_k}}$:
    \begin{align*}
        \widehat{\mathrm{F_k}} = \sum_{i \in [n]} \fbrac{\cc{d}_i + \f_i^k}
    \end{align*}
    Using the fact that the programs used in \Cref{Lem: Powering Lemma} are reversible, we restore all catalytic registers to their initial values by executing $I_3^{-1},\sP_i,I_2^{-1},\sP_i^{-1},I_1^{-1}$ for all $i \in [n]$. Now, observe that, we have restored the catalytic registers $\C{D}[i]$ to their original values, i.e.
    \begin{align*}
        \C{D}[i] = \cc{d}_i
    \end{align*}
    Now, we subtract all $D_i$s from $\widehat{\mathrm{F_k}}$ to obtain the $k$-th frequency moment of the stream, i.e.
    \begin{align*}
        \widehat{\mathrm{F_k}} = \sum_{i \in [n]} \fbrac{\f_i^k}
    \end{align*}
    Observe that executing \Cref{Lem: Powering Lemma} for each $i \in [n]$ requires $\bigo{k}$ registers, and each register can contain at most $m^k$. Hence, the catalytic space used by the algorithm is $\bigo{k^2n\log m}$. The usage of clean space comes from storing the frequency moment itself, which is at most $m^k$. Furthermore, each execution of $\sP_i$ and $\sP_i^{-1}$ uses one pass over the stream, and can be done in parallel for all $i \in [n]$. Hence, the algorithm uses four passes over the stream as $\sP_i$ and $\sP_i^{-1}$ are each executed twice for all $i \in [n]$.
\end{proof}

\subsection{\texorpdfstring{Computing $\FM{k}(\stream)$ in $k+1$ pass with Improved Catalytic Space}{Computing F\_k(S) in k+1 pass}}

In this section, we present a $(k+1)$-pass, $O(k \log mk)$-space catalytic streaming algorithm to compute $\FM{k}(\stream)$. First, we describe an approach that is based on the ideas from the introduction. We will illustrate with $\FM{3}$. As in the case of algorithm for $\FM{2}$, we can modify $\C{C}[i]$ to $c[i]+\f_i$ in one pass, and compute $\sum_{i} (c[i]+\f_i)^3$. Now we need to subtract the following terms (i) $\sum_i c[i]^3$, (ii) $3\sum_{i} c[i]^2\f_i$, and (iii) $3 \sum_{i}c[i]\f_i^2$. The first term can be computed a priori and the second term can be computed in the beginning with a pass (without changing the catalytic memory). However, the third term involves a generalized second moment computation, i.e, compute $\FM{2}$ of the stream where the $i$-th element appears $\sqrt{c[i]}\f_i$ times, which require 3-passes. Finally, we need an additional pass to restore the contents of the catalytic space. This leads to an algorithm with 6 passes (though we have to handle subtle issues such as $c[i]$ not being a perfect square). In general, this approach, if could be extended for general $k$, would lead to an algorithm with passes exponential in $k$.  We do not take this direction. Instead, we employ a drastically different approach that involves maintaining a global sum (in the workspace) and updating it with alternating additive and subtractive terms which in the ends adds to $k!\FM{k}(\stream)$. The details are in \Cref{Subsec: Fk in k+1}.

\subsection{\texorpdfstring{Computing $\FM{2}$ and $\FM{3}$ in $2$ and $3$  Passes (respectively)}{Computing F\_2 and F\_3 in 2 and 3  Passes (respectively)}}

The main idea for computing the second frequency moment $\FM{2}(\stream)$ in two passes is based on a recursive decomposition of the data stream. Consider splitting the stream $\stream$ into two halves based on their arrival order. The first half is denoted $\stream_L$ and the second half is denoted $\stream_R$, with corresponding frequency vectors $\g$ and $\h$. Since the overall frequency vector satisfies $\f = \g + \h$, we can expand
\[
\FM{2}(\stream) = \sum_i \f_i^2 = \sum_i (\g_i + \h_i)^2 = \FM{2}(\stream_L) + \FM{2}(\stream_R) + 2 \sum_i \g_i \h_i.
\]
Thus, computing $\FM{2}(\stream)$ reduces to computing $\FM{2}$ on the two halves (which can be done recursively) and evaluating the cross term $\sum_i \g_i \h_i$. Hence, the main challenge is to compute the inner product $\sum_i \g_i \h_i$ in two passes. For this we use a catalytic space $\C{C}$ let $c[i]$ denote the initial contents of $\C{C}[i]$.

In the first pass over $\stream_L$, whenever we see $i$, we increment  $\C{C}[i]$ by 1. Thus after processing $\stream_L$, $\C{C}[i] = c[i] + \g_i$. Now while processing $\stream_R$, we compute $S = \sum_i(c[i]+\g_i)\h_i$ in the non-catalytic space. This is done by incrementing a counter by $\C{C}[i]=(c[i]+\g_i)$ when we see $i$.  This is the end of pass 1. Now, in the second pass, while processing $\stream_L$, we restore $\C{C}[i]$ to $c[i]$ by decrementing $\C{C}[i]$ by one when we see $i$.  In the same pass, while processing $\stream_R$, we compute $\sum_i c[i]\h_i$ (by incrementing a counter by $c[i]$ when we see $i$). This is stored in the non-catalytic space. At the end, we obtain $\sum_i \g_i\h_i$ by subtracting $\sum_i c[i]\h_i$ from $\sum_i(c[i]+\g_i)\h_i$. Combining this idea with recursive computations allows use to obtain $\FM{2}(\stream)$ in two passes.

The main idea for computing  $\FM{3}(\stream)$ is similar to that for computing $\FM{2}(\stream)$, in that both rely on a recursive decomposition of the data stream. In the case of $\FM{3}(\stream)$, we observe that
\[
\FM{3}(\stream) = \sum_i \f_i^3 = \sum_i (\g_i + \h_i)^3 
= \FM{3}(\stream_L) + \FM{3}(\stream_R) 
+ 3 \sum_i \g_i^2 \h_i + 3 \sum_i \g_i \h_i^2.
\]
Thus, computing $\FM{3}(\stream)$ reduces to evaluating the cross terms 
$\sum_i \left(\g_i^2 \h_i + \g_i \h_i^2\right)$.

Using ideas similar to those for computing $\FM{2}(\stream)$, we can load $\g_i$ and $\h_i$ into two separate catalytic spaces, say $\C{C}_i$ and $\C{D}_i$, so that they store $(c[i] + \g_i)$ and $(d[i] + \h_i)$ respectively. In a subsequent pass, we compute the quantities
\[
\sum_i (c[i] + \g_i)^2 \h_i 
\quad \text{and} \quad 
\sum_i (d[i] + \h_i)^2 \g_i.
\]
Expanding these expressions introduces additional terms, namely 
\[
\sum_i c[i]^2 \h_i, \quad \sum_i d[i]^2 \g_i, 
\quad \text{and} \quad \sum_i (c[i] + d[i]) \g_i \h_i,
\]
which must be subtracted out to isolate the desired cross terms.

In further passes, we can restore the catalytic spaces and compute $\sum_i c[i]^2 \h_i$ and $\sum_i d[i]^2 \g_i$. The remaining mixed term $\sum_i (c[i] + d[i]) \g_i \h_i$ can be handled using the techniques developed for computing $\FM{2}(\stream)$. This yields an algorithm that computes $\FM{3}(\stream)$ in $4$ passes. However, with a more careful and interleaved use of catalytic space, the number of passes can be reduced to $3$. We now describe this. In the below, we view a pass $j$ as two parts: $j_L$ processes $\stream_L$, and part $j_R$ that processes $\stream_R$.  We use four blocks of catalytic spaces $\C{A}$, $\C{B}$, $\C{C}$, and $\C{D}$. 

\smallskip
\noindent{\bf Pass $1_L$:} Update $\C{C}[i]$ to $c[i]+\g_i$ and $\C{B}[i]$ to $b[i] + a[i]\g_i$.

\noindent{\bf Pass $1_R$:} Update $\C{A}[i]$ to $a[i]+\h_i$. Compute $X_1 = \sum_i (c[i]+\g_i)^2\h_i$, $X_2 = \sum_i(b[i]+a[i]\g_i)\h_i$,

\noindent{\bf Pass $2_L$:} Restore $\C{C}[i]$ to $c[i]$. Compute $X_3 = \sum_i (a[i]+\h_i)^2\g_i$, and $X_4 = \sum_i d[i]\g_i$.

\noindent{\bf Pass $2_R$:} Update $\C{D}[i]$ to $d[i] + c[i]\h_i$. Restore $\C{A}[i]$ to $a[i]$. Compute $X_5 = \sum_i c[i]^2\h_i$.

\noindent{\bf Pass $3_L$:} Restore $\C{B}[i]$ to $b[i]$. Compute $X_6 = \sum_i (d[i]+c[i]\h_i)\g_i$, and $X_7 = \sum a[i]^2\g_i$. 

\noindent{\bf Pass $3_R$:} Restore $\C{D}[i]$ to $d[i]$ and compute $X_8 = \sum_i b[i]\h_i$.

\smallskip
Now, note that $X_1 -X_5 - 2 (X_6 -X_4)$ equals $\sum_i \g_i^2\h_i$. Similarly, $X_3 - X_7 - 2(X_2 - X_8)$ equals $\sum_i\h_i^2\g_i$. Combining this idea with a recursive approach enables us to compute $\FM{3}$ in 3 passes. The algorithms are presented in detail in Section~\ref{Section: 2-pass f_2}~and~\ref{Section: F3 in 3passes}.


\subsection{\texorpdfstring{$\FM{0}$ in $4$ passes}{F\_0 in 4 Passes}}

Building upon the 4-pass algorithm for $F_k$, we design an algorithm for $\FM{0}$ is 4 passes. We achieve this by reducing $\FM{0}$ to computing $\FM{k}$ for higher moments using Fermat's Little Theorem.

\begin{theorem}[Fermat's Little Theorem]
Let \(p\) be a prime number and let \(a\) be an integer. Then
\[
a^{p-1} \equiv
\begin{cases}
1 \pmod p, & \text{if } p \nmid a,\\[4pt]
0 \pmod p, & \text{if } p \mid a.
\end{cases}
\]
\end{theorem}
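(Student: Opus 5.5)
The plan is to prove Fermat's Little Theorem directly by the classical rearrangement argument in $\mathbb{Z}/p\mathbb{Z}$, treating the two cases of the statement separately.

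\emph{Case $p \mid a$.} This case is immediate. Write $a = pb$. Then $a^{p-1} = p^{p-1}b^{p-1}$. Since $p \geq 2$, the exponent $p-1 \geq 1$, so $p \mid a^{p-1}$ and hence $a^{p-1} \equiv 0 \pmod p$.

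\emph{Case $p \nmid a$.} This is the substantive case, and I will use the permutation argument.
\begin{enumerate}
\item Consider the $p-1$ residues $a, 2a, \dots, (p-1)a$ modulo $p$.
\item Show that none of them is $0$. If $p \mid ja$, then since $p$ is prime and $p \nmid a$, Euclid's lemma gives $p \mid j$. This is impossible for $1 \le j \le p-1$.
\item Show that they are pairwise distinct. If $ia \equiv ja \pmod p$, then $p \mid (i-j)a$, so again $p \mid i-j$. Since $|i-j| < p$, this forces $i=j$.
\item Conclude that $\{a,2a,\dots,(p-1)a\}$ is a permutation of $\{1,\dots,p-1\}$ modulo $p$.
\item Take products on both sides:
\[
a^{p-1}(p-1)! \equiv (p-1)! \pmod p.
\]
\item Cancel $(p-1)!$. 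This is allowed because $(p-1)!$ is a product of integers not divisible by $p$, so by Euclid's lemma $p \nmid (p-1)!$, and it is therefore invertible modulo $p$. The result is $a^{p-1} \equiv 1 \pmod p$.
\end{enumerate}

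The only point requiring care is the repeated use of primality through Euclid's lemma: $p \mid xy$ implies $p \mid x$ or $p \mid y$. It is used both for injectivity of multiplication by $a$ and for the invertibility of $(p-1)!$. I would state this lemma explicitly, deriving it from Bézout's identity, and then the rest is routine.

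As an alternative, one could argue by induction on $a \ge 0$ that $a^p \equiv a \pmod p$, using the binomial theorem and the fact that $p \mid \binom{p}{j}$ for $0<j<p$. Cancelling $a$ when $p\nmid a$ then recovers the first case. I prefer the permutation argument since it avoids the separate divisibility fact for binomial coefficients.
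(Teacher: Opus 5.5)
Your proof is correct and complete. In the case $p \mid a$ you rightly use $p-1\ge 1$. In the case $p\nmid a$, Euclid's lemma gives exactly what you need: injectivity of $j\mapsto ja \bmod p$ on $\{1,\dots,p-1\}$, the congruence $a^{p-1}(p-1)!\equiv (p-1)! \pmod p$, and the cancellation of $(p-1)!$.

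The paper gives no proof of this statement. It quotes Fermat's Little Theorem as a classical fact. It only ever applies it to $a=\f_i$ with $0\le \f_i<p$, which is guaranteed by $p>m$, or by $p>t$ for $t$-flat streams. In that range the two cases are just $\f_i\neq 0$ versus $\f_i=0$. Your self-contained argument therefore covers strictly more than the paper uses.

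The binomial-induction alternative would also serve; for negative $a$ you would first reduce $a$ modulo $p$. Your permutation argument avoids that step, since it works with residues throughout.
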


Our algorithm works as follows. Pick a prime $p$ that is larger than $\max\{m, n\}$. We compute $\FM{p-1}$ using 4 passes and output $\FM{p-1} \pmod p$. We claim that $\FM{p-1} \pmod p$ equals $\FM{0}$. 

\begin{claim}\label{Claim: Fp equals F0 mod p}
    $\FM{p-1} \pmod p = \FM{0} \pmod p  $
\end{claim}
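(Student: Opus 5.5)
The plan is to prove the congruence term by term, using Fermat's Little Theorem on each summand of $\FM{p-1} = \sum_{i=1}^n \f_i^{p-1}$ and then summing the results modulo $p$.

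\textbf{Step 1: split the summands.} For each $i \in [n]$ the frequency $\f_i$ is an integer with $0 \le \f_i \le m < p$. There are two cases.
\begin{itemize}
\item If $\f_i = 0$, then $p \mid \f_i$, and Fermat's Little Theorem (as stated above) gives $\f_i^{p-1} \equiv 0 \pmod p$. Here $p-1 \ge 1$ because $p > \max\{m,n\} \ge 1$, so indeed $0^{p-1} = 0$.
\item If $\f_i \ge 1$, then $1 \le \f_i \le m < p$, so $p \nmid \f_i$. Fermat's Little Theorem gives $\f_i^{p-1} \equiv 1 \pmod p$.
\end{itemize}
Hence $\f_i^{p-1} \equiv \mathbf{1}[\f_i \neq 0] \pmod p$ for every $i$.

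\textbf{Step 2: sum.} Since congruence modulo $p$ is preserved under addition,
\[
\FM{p-1} = \sum_{i=1}^n \f_i^{p-1} \equiv \sum_{i=1}^n \mathbf{1}[\f_i \neq 0] = \FM{0} \pmod p,
\]
which is the claim.

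\textbf{Step 3: remark for the algorithm.} Since $\FM{0} \le n < p$, we have $\FM{0} \bmod p = \FM{0}$. So outputting $\FM{p-1} \bmod p$ returns $\FM{0}$ exactly. This is what justifies the choice $p > \max\{m,n\}$.

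The only delicate point is the hypothesis $p > m$. It is what guarantees that no nonzero frequency is a multiple of $p$; otherwise such a term would contribute $0$ instead of $1$. There is also an implementation subtlety, though it is not part of this claim. The algorithm computes $\FM{p-1}$ modulo $2^l$, not modulo $p$. One must therefore either choose $l$ large enough that $\FM{p-1} < 2^l$, i.e.\ $l = O(p \log m)$, so that the exact integer is recovered before reducing modulo $p$, or else run the register arithmetic modulo $p$ directly.
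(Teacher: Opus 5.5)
Your proof is correct and is essentially the paper's argument. You apply Fermat's Little Theorem to each summand $\f_i^{p-1}$, using $0 \le \f_i \le m < p$ so that only zero frequencies are divisible by $p$, then add the congruences, with $\FM{0} \le n < p$ giving exact recovery; your closing remark about the modulus of the register arithmetic lies outside this claim but is a legitimate point that the paper addresses only briefly in the proof of Theorem~\ref{F0-in-4pass}.
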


\begin{proof}
Since $p > \max\{m, n\}$, we have $\f_i < p$. Thus from Fermat's little theorem, $\f_i^{p-1} = 1 \pmod p$ if $\f_{i} \neq 0$ and $0 \pmod p$ if $\f_i= 0$. 
Thus
\[\FM{0} = |\{i\mid \f_i^{p-1} \pmod p = 1\}| \] 
Equivalently 
\[\FM{0} = \sum_{i=1}^n \left[\f_{i}^{p-1} \pmod p\right]\]
where the sum is taken over the integer ring.
 Now,
\begin{eqnarray*}
    \FM{p-1} \pmod p  & = & \left(\sum_{i=1}^n \f_{i}^{p-1} \right) \pmod p\\
    & = & \sum_{i=1}^n \left (\f_{i}^{p-1} \pmod p\right) \pmod p\\
    & = & \FM{0} \pmod p
\end{eqnarray*}

This completes the proof of the claim.
\end{proof}

\begin{theorem}\label{F0-in-4pass}
There is a 4-pass streaming algorithm using $O(\log m)$ clean space and $O(m \cdot n)$ catalytic space.
\end{theorem}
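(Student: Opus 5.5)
The plan is to instantiate \Cref{Thm: Fk in 4} with $k = p-1$, where $p$ is a prime with $\max\{m,n\} < p \le 2\max\{m,n\}$; such a prime exists by Bertrand's postulate. Then \Cref{Claim: Fp equals F0 mod p} converts the result into $\FM{0}$. Since $\FM{0} \le n < p$, we have $\FM{0} \bmod p = \FM{0}$, so outputting $\FM{p-1} \bmod p$ is exactly correct. The prime $p$ can be found before the first pass, in $O(\log m)$ clean space, by trial division over candidates $q \in (\max\{m,n\}, 2\max\{m,n\}]$, using a constant number of $O(\log m)$-bit counters.

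Used verbatim, \Cref{Thm: Fk in 4} would need $O(k\log m) = O(m\log m)$ clean space, because $\FM{p-1}$ can be as large as $m^{p-1}$. The key change is to carry out all arithmetic, in both the workspace and the catalytic registers, modulo $p$ rather than modulo $2^l$. I need to check that the proof of \Cref{Thm: Fk in 4} and the Powering Lemma (\Cref{Lem: Powering Lemma}) remain valid over $\field{Z}_p$.

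\begin{itemize}
\item The programs $\sP_i$ and $\sP_i^{-1}$ only add or subtract $1$, and modular addition is invertible, so they stay reversible.
\item The Powering Lemma of \citet{Buhrman/STOC/2014/CatalyticIntro} is built from register additions, subtractions and multiplications by constants, combined through a polynomial identity such as an inclusion--exclusion or interpolation identity. If these operations are over a commutative ring, the identity $\C{D} = d + x^k$ holds in that ring, here $\field{Z}_p$.
\item One caveat: if the construction divides by $k!$ or by binomial coefficients, then $k = p-1 < p$ keeps these invertible mod $p$.
\end{itemize}

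With this change, each catalytic register holds a residue mod $p$, which takes $O(\log m)$ bits. The running sum $\widehat{\mathrm{F_k}} = \sum_i (d_i + \f_i^{p-1}) \bmod p$ and the final subtraction of $\sum_i d_i$ likewise need only $O(\log p) = O(\log m)$ clean bits. The pass count is unchanged at four: $\sP_i$ and $\sP_i^{-1}$ are each executed twice, in parallel over all $i$.

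The main obstacle is the space accounting. The Powering Lemma uses $O(k) = O(m)$ catalytic registers per $i$, each of $O(\log m)$ bits, giving $O(mn\log m)$ catalytic bits in total; this matches the stated $O(m\cdot n)$ when that bound is counted in registers. On the clean side, the lemma's reversible programs $I_1, I_2, I_3$ iterate over $O(k)$ registers, so they need an index counter of only $O(\log k) = O(\log m)$ bits. The work between passes is done register-by-register, so no $k$-dependent clean storage beyond this counter is needed. I expect the most delicate step to be confirming that $I_1, I_2, I_3$ need no clean memory beyond $O(\log m)$ when $k$ is as large as $m$, and that their identities survive reduction mod $p$. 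I would first re-derive the lemma's construction explicitly over $\field{Z}_p$ to check both properties.
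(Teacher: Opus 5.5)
Your route is the paper's own: run \Cref{Thm: Fk in 4} with $k=p-1$, work modulo $p$, and finish with \Cref{Claim: Fp equals F0 mod p}. Your extra details (Bertrand's postulate, finding $p$ in $O(\log m)$ space, invertibility of $k!$) are fine. The point you flag as most delicate---clean space for $I_1,I_2,I_3$ and validity of the identities over $\mathbb{Z}_p$---is not where the trouble is. The step that fails as written is doing the \emph{catalytic} register arithmetic modulo $p$. The catalytic memory starts as an arbitrary bit string, so a register of $l=\lceil\log_2 p\rceil$ bits may initially hold some $c\in[p,2^l)$. The first mod-$p$ update sends it into $[0,p)$, where it collides with a legitimate residue, and the original content is lost. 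For example, for the counter $\C{C}[i]$ with $\f_i\ge 1$, $\f_i$ increments followed by $\f_i$ decrements return $c-p\neq c$. No cleverer encoding avoids this. However residues are assigned to $l$-bit strings, a permutation adding $1$ to every string's residue would map each residue class bijectively onto the next, forcing $p\mid 2^l$.

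You also cannot fall back on the paper's general convention of doing register arithmetic modulo $2^l$ (with $2^l>m^{p-1}$) and reducing only the clean accumulator modulo $p$. Then $\C{D}[i]-d_i\equiv \f_i^{p-1}$ holds only modulo $2^l$. The output is off by $2^lW \bmod p$, where $W=|\{i: d_i+\f_i^{p-1}\ge 2^l\}|$, and the adversary controls $W$ through the $d_i$. The paper's proof is equally terse here (``the catalytic computation proceeds unchanged modulo $p$''), but a complete argument needs an extra ingredient. A natural start is to let every update act only on registers whose content is below $p$ and leave the others untouched. Each step is then a bijection and validity is invariant, so restoration is guaranteed. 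What remains is correctness when the adversary plants invalid contents in the registers serving some item $i$. One option is to allocate each item's registers among the valid ones, which can be located by scanning since validity never changes, and to handle separately the inputs with too few valid registers. You would have to supply that argument; neither your proposal nor the paper does.
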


\begin{proof}
    Since $p > \FM{0}$, $\FM{0} \pmod p = \FM{0}$. Thus $\FM{p-1} \pmod p$ equals $\FM{0}$. By Theorem~\ref{Thm: Fk in 4}, we can compute $\FM{p-1}$ in 4-passes, using $O(\log n)$ clean space and $O(pn)$ catalytic space. 

Note that we do \emph{not} compute $\FM{p-1}$ exactly, as doing so would require $O(p\log n)$ clean space. In the four-pass algorithm Theorem~\ref{Thm: Fk in 4}, the only clean memory is used to maintain the running sum of the values $f_i^k$. Thus, it suffices to store this accumulator modulo $p$, requiring only $O(\log p)$ clean space, while the catalytic computation proceeds unchanged modulo $p$. Consequently, the algorithm computes $\FM{p-1}\pmod p$, and by \Cref{Claim: Fp equals F0 mod p}, this uniquely determines $\FM{0}$.
\end{proof}

In the above theorem the number of catalytic registers required is $O(nm)$.  
Can the number of catalytic registers required be reduced?
We show that for a certain restricted class of streams, we can compute $\FM{0}$ using a smaller number of catalytic registers while using a few more passes.  In particular, we consider {\em flat streams} in the sense that there is a small $t$ so that $\f_i \leq t$ for all $i$. We give a deterministic catalytic space algorithm to compute $\FM{0}$ for such streams with number of passes and the memory as a function of $t$. 
For example, for $\mathrm{poly}(\log m)$-flat streams,  $\FM{0}$ can be computed in $\mathrm{poly}(\log m)$-space using ${\rm poly}(\log m)$ passes.  This is done via reducing computation of $\FM{0}$ to computation of a number of higher-frequency moments and then applying the Chinese remainder theorem. 
Note that computing $\FM{0}$ in the traditional streaming model, even for streams that are 2-flat ($\f_i \leq 2$) with $\mathrm{poly}(\log m)$ passes will require $\Omega(n/\mathrm{poly}(\log m))$ space~\cite{RAZBOROV1992385}. The details are provided in Section~\ref{Sec: Tflat}.

\subsection{Subgraph counting in Graph Streams}
Our algorithms for exact frequency moment computation also yield exact
algorithms for subgraph counting in graph streams.
The first consequence is an exact triangle counting algorithm, obtained
using the classical reduction of Bar-Yossef, Kumar, and Sivakumar from
triangle counting to frequency moments.
More generally, we show that every fixed induced subgraph can be counted
exactly in the catalytic streaming model. We present the results in detail in Section~\ref{subsec:subgraphcounting}.

\begin{theorem}[Informal]
For every fixed graph $H$, the number of induced copies of $H$ in a graph
stream can be computed exactly in four passes using $O_H(\log n)$ clean
space. As a special case, the number of triangles can be computed exactly
in three passes using $O(\log n)$ clean space.
\end{theorem}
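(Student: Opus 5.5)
The plan is to reduce both claims to exact frequency-moment computation over a derived stream that can be generated on the fly from the edge stream. For a fixed graph $H$ on $r$ vertices, the derived stream has universe $\binom{[n]}{r}$, the $r$-subsets of vertices. For each streamed edge $\{u,v\}$ I will emit every $r$-subset $T\supseteq\{u,v\}$. There are $\binom{n-2}{r-2}$ such subsets, and they can be enumerated using $O(r\log n)$ clean space. The frequency of $T$ is then $\f_T=|E(G[T])|$, and $\f_T\le\binom r2$. The number of induced copies of $H$ depends only on the isomorphism class of $G[T]$, not on its edge count. So the first step is to make the edge labels distinguishing.

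\textbf{Distinguishing labels.} For a subset $T=\{v_1<\dots<v_r\}$, the pair $\{v_a,v_b\}$ has a fixed position in $T$, namely the pair $(a,b)$. I give that pair the weight $w_{ab}=2^{\mathrm{idx}(a,b)}$. When edge $\{u,v\}$ arrives, I emit $T$ with multiplicity $w_{ab}$, where $(a,b)$ are the positions of $u,v$ inside $T$. The resulting frequency $\f_T\in[0,2^{\binom r2})$ then encodes the labeled graph $G[T]$ exactly. The count of induced copies is
\[\#H=\sum_T \mathbf 1[\f_T\in L_H],\]
where $L_H$ is the finite set of codes of labelings isomorphic to $H$. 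Next I need a polynomial $q$ with $q(x)=\mathbf 1[x\in L_H]$ on $\{0,\dots,2^{\binom r2}-1\}$. Lagrange interpolation gives such a $q$ of degree $D<2^{\binom r2}$, with rational coefficients whose common denominator depends only on $H$. Then
\[N\cdot\#H=\sum_{j\le D}N q_j\,\FM{j}(\stream'),\]
where $N$ is that denominator. Each $\FM{j}$ of the derived stream can be computed in parallel in the same four passes by \Cref{Thm: Fk in 4}, since all the powering programs share the load pass and the unload pass.

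\textbf{Space and the main obstacle.} The main obstacle is keeping the clean space at $O_H(\log n)$, because the derived stream has length $m'\le n^2\binom{n}{r-2}2^{r^2}$. Its logarithm is still $O_H(\log n)$, so registers of $O(D\log m')$ bits suffice. The clean accumulators, $D+1$ of them, each need $O(D\log m')=O_H(\log n)$ bits. The other concern is the catalytic space. This algorithm uses $\binom nr$ registers, which exceeds the $\tilde O_H(n)$ in the table, but the stated theorem only concerns clean space, so I accept this cost. I would first try the straightforward indexing of registers by subsets. As a fallback, I would rely on the model permitting polynomial catalytic space.

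\textbf{Triangles.} For triangles I use the classical reduction of Bar-Yossef, Kumar and Sivakumar with $r=3$ and unweighted emission, so $\f_T\in\{0,1,2,3\}$. The number of triangles is $T_3=\#\{T:\f_T=3\}$, and it is a fixed linear combination of $\FM{0}$, $\FM{1}$, $\FM{2}$ and $\FM{3}$. To avoid needing $\FM{0}$, I use the polynomial $x(x-1)(x-2)/6$. It vanishes at $0,1,2$ and equals $1$ at $3$, so
\[T_3=\tfrac16\bigl(\FM{3}-3\FM{2}+2\FM{1}\bigr).\]
$\FM{1}=m(n-2)$ is known in advance. $\FM{2}$ comes from \Cref{Thm: F2in2} and $\FM{3}$ from \Cref{Thm: F3in3}. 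These two computations run in parallel over the same passes, using disjoint catalytic blocks, so the total is three passes with $O(\log n)$ clean space.
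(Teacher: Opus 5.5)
Your argument is correct. The triangle part is exactly the paper's: Bar-Yossef--Kumar--Sivakumar triples, $T_3=\frac16(\FM{3}-3\FM{2}+2\FM{1})$, with the two- and three-pass algorithms run in parallel.

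For general $H$ you take a genuinely different route. The paper builds a separate unit-weight virtual stream for every labelled graph $J\supseteq H$ on $[r]$, indexed by injective maps $\phi:[r]\to V(G)$. There $f_J(\phi)$ counts the edges of $J$ realized by $\phi$. The polynomial $\binom{x}{|E(J)|}$ turns moments of order at most $\binom r2$ into the labelled non-induced count $N(J)$. M\"obius inversion $I(H)=\sum_{J\supseteq H}(-1)^{|E(J)|-|E(H)|}N(J)$ then gives labelled induced counts, and dividing by $|\mathrm{Aut}(H)|$ removes the labels.

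Your power-of-two weights instead make $f_T$ the adjacency bitmask of $G[T]$ itself. So a single stream over $r$-subsets and one interpolating indicator polynomial give the unlabelled induced count directly, with no inclusion--exclusion and no automorphism correction. The same moments also yield any function of $G[T]$ (e.g.\ the full $r$-vertex induced profile) just by changing the final linear combination. The price is degree: you need moments up to $2^{\binom r2}-1$ rather than $\binom r2$. With $D+1$ accumulators of $O(D\log m')$ bits, your $H$-dependent factor is therefore about $4^{\binom r2}$ instead of the paper's $2^{\binom r2}\mathrm{poly}(r)$. That is still $O_H(\log n)$ in four passes.

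Two small remarks. First, the $j=0$ term should be $q_0\binom nr$, not $q_0\FM{0}$. Evaluating $\sum_T q(f_T)$ sums $f_T^0=1$ over all $r$-subsets, including those with $f_T=0$, and \Cref{Thm: Fk in 4} only covers $k\ge 1$ anyway. This matters only for edgeless $H$, where $q_0=1$; the paper's expansion of $\binom{x}{q}$ has the same wrinkle when $q=0$.

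Second, your catalytic-space concern is not specific to your construction. The paper's virtual stream has a universe of size $n(n-1)\cdots(n-r+1)$ (and $\binom n3$ for triangles), so it also needs $\Theta_H(n^r)$ catalytic registers. Neither construction achieves $\tilde O_H(n)$, and the informal theorem only claims a clean-space bound.
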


The reduction associates a virtual stream with the input graph stream.
For triangle counting, every arriving edge $(u,v)$ generates the triples
$\{u,v,w\}$ for all $w\neq u,v$. Each triple appears once for every edge
it induces, so its frequency equals the number of edges among the three
vertices. Consequently, the third frequency moment isolates the triangles,
allowing the triangle count to be recovered from $\FM{1},\FM{2}$, and
$\FM{3}$.

Our generalization follows the same philosophy but uses embeddings of the
pattern graph instead of vertex subsets. Let $H$ be a fixed graph with
$q$ edges. The virtual stream contains one item for every injective
embedding of $H$ into the vertex set of the input graph. Whenever an edge
of the graph stream arrives, we update precisely those embeddings whose
corresponding edge in $H$ has been realized. Thus the frequency of an
embedding equals the number of edges of $H$ that have appeared in the
input graph. An embedding is a copy of $H$ precisely when its frequency
equals $q$.

To detect this event, we use the polynomial
\[
\binom{x}{q},
\]
which equals $1$ when $x=q$ and $0$ for every integer
$0\le x<q$. Since this polynomial has degree $q$, the number of copies of
$H$ is a fixed linear combination of the moments
$\FM{0},\FM{1},\ldots,\FM{q}$.
Finally, induced copies are obtained from the non-induced counts by a
constant-size inclusion--exclusion over the supergraphs of $H$. Since
$H$ is fixed, all required moments can be computed simultaneously using
our four-pass $\FM{k}$ algorithm, giving an $O_H(\log n)$-space
algorithm.

\subsection{Lower Bounds for 1-pass Catalytic Streaming Algorithms}

A few questions that arise from the upper-bound results: whether multiple passes are necessary for the exact computation of the frequency moments.  Can we reduce the number of passes required to 1, either by employing randomization or settling for approximations?

In this section, we show that catalytic space provides no advantage in the one-pass streaming model. In particular, any problem that can be solved by a deterministic (randomized) one-pass streaming algorithm using \(s\) bits of workspace and \(c\) bits of catalytic space can also be solved by a standard deterministic (respectively, randomized) one-pass streaming algorithm using only \(s\) bits of workspace. To prove this, we use an automata-theoretic view of streaming algorithms. The proof is inspired by the proof that in the traditional complexity setting Catalytic LOGSPACE is contained in the probabilistic class ZPP~\cite{Buhrman/STOC/2014/CatalyticIntro}.

 A deterministic one-pass streaming algorithm with \(s\) bits of workspace and \(c\) bits of catalytic space can be viewed as an automaton with \(2^{s+c}\) states, where each state represents a complete configuration of the workspace together with the catalytic memory. By contrast, a standard one-pass streaming algorithm with \(s\) bits of workspace corresponds to an automaton with \(2^s\) states.

In the catalytic setting, the initial contents of the catalytic memory may be arbitrary. Thus, there are \(2^c\) possible initial catalytic configurations, and hence \(2^c\) possible start states of the automaton. The catalytic requirement is that, regardless of the computation, the catalytic memory must be restored to its initial value at the end of the stream. We exploit this restoration property to show that the sets of states reachable from two different initial catalytic configurations are disjoint.

This immediately yields the desired bound. Since the full automaton has only \(2^{s+c}\) states, and the reachable sets corresponding to the \(2^c\) possible initial catalytic configurations are pairwise disjoint, it follows by averaging that for at least one initial catalytic configuration the number of reachable states is at most
$
\frac{2^{s+c}}{2^c}=2^s.
$
Fix such an initial catalytic configuration. Restricting the automaton to the states reachable from this configuration gives a standard finite automaton with at most \(2^s\) states. Therefore, it can be implemented as a standard one-pass streaming algorithm using only \(s\) bits of workspace and no catalytic space. Using this equivalence, we obtain the following result.

\begin{theorem}\label{thm:1passFk}
    Any 1-pass, deterministic, catalytic streaming algorithm that approximates $\FM{k}$ ($k \neq 1$) requires $\Omega(n)$-space. Any 1-pass, randomized, catalytic streaming algorithm that exactly computes $\FM{k}$ ($k \neq 1$) requires $\Omega(n)$-space. 
\end{theorem}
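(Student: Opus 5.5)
The plan is to reduce both claims to the simulation statement sketched just above: a 1-pass catalytic streaming algorithm with $s$ bits of clean space and $c$ bits of catalytic space can be replaced by an ordinary 1-pass streaming algorithm with $O(s)$ space. Once that is in hand, we invoke the known lower bounds of \citet{AMS}. For $k\neq 1$, any deterministic 1-pass algorithm that approximates $\FM{k}$ needs $\Omega(n)$ space. The same holds for any randomized 1-pass algorithm that computes $\FM{k}$ exactly; the underlying lower bound is via disjointness, cf.\ \cite{RAZBOROV1992385}. So the work is entirely in making the simulation rigorous for the two settings.

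\textbf{Deterministic case.} View the algorithm as an automaton whose states are pairs $(w,z)$, with $w$ the clean-space content and $z$ the catalytic content. The start state is $(0^s,z_0)$, where $z_0$ is arbitrary. Let $R(z_0)$ be the set of states reachable from $(0^s,z_0)$ on some stream prefix.

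The key step is to show that $R(z_0)\cap R(z_1)=\emptyset$ for $z_0\neq z_1$. Suppose a state $q$ is reached from $z_0$ on a prefix $u$ and from $z_1$ on a prefix $u'$. Choose any suffix $v$ such that $uv$ and $u'v$ are both legal streams. Since the algorithm is deterministic, its final configuration after $uv$ from $z_0$ equals its final configuration after $u'v$ from $z_1$. But catalyticity forces the final catalytic content to be $z_0$ in the first run and $z_1$ in the second, which is a contradiction.

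Some care is needed when stream length $m$ is fixed and known. Reached states are then indexed by time, so we pair states at equal prefix lengths, where a common completing suffix exists. If the final state after the pass is followed by post-processing, the argument still applies, because restoration is a function of the state at the end of the stream.

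Averaging over the $2^c$ choices of $z_0$ gives some $z^*$ with $|R(z^*)|\le 2^{s+c}/2^c=2^s$ per time step. The streaming algorithm hardwires $z^*$ and stores the index of the current state within $R(z^*)$, which takes $s$ bits (plus $\log m$ for time if needed, absorbed since $s\ge\log m$ in any nontrivial case). Correctness holds because the catalytic algorithm is correct for every initial content, including $z^*$. Any approximation guarantee therefore transfers, and the $\Omega(n)$ deterministic approximation bound gives the first claim.

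\textbf{Randomized case.} This is the main obstacle, since reachable sets now depend on the random bits and the disjointness argument must hold per fixed random string. Treat the randomized algorithm as a distribution over deterministic catalytic algorithms indexed by the random string $r$. I expect the catalytic requirement to hold for every $r$ (restoration is required always), so for each fixed $r$ the disjointness argument applies and yields some $z^*_r$ with at most $2^s$ reachable states. The difficulty is that $z^*_r$ depends on $r$, while a single random string must be compatible with a single encoding.

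I would handle this by letting the simulating algorithm, for each $r$, use the deterministic $2^s$-state automaton $A_r$ obtained by restricting to $R_r(z^*_r)$. Its error on every input equals the error of the catalytic algorithm run with randomness $r$ and initial content $z^*_r$. Since the catalytic algorithm errs with probability at most $1/3$ for every fixed initial content, I then need to argue that fixing $z$ as a function of $r$ does not increase the error. This is the delicate point. If the error guarantee is only per fixed $z$ averaged over $r$, choosing $z^*_r$ adversarially could be bad.

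The fix I would try is averaging. For each $r$, at least a $1/2$ fraction of the $z$ have $|R_r(z)|\le 2\cdot 2^s$, by disjointness and Markov's inequality. So pick $z$ uniformly at random as part of the randomness, conditioned on landing in the good set. The joint distribution of $(r,z)$ then has error at most $2\cdot(1/3)$ relative to uniform $z$, and amplification by a constant number of parallel repetitions restores error $1/3$. The resulting ordinary randomized algorithm has space $O(s)$, and exact randomized computation of $\FM{k}$, $k\neq1$, needs $\Omega(n)$ space, giving the second claim.
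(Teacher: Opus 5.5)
Your deterministic half is the paper's argument (disjoint reachable sets, pigeonhole, restriction), and your common-suffix proof of disjointness is fine. The randomized half, however, fails at its final step.

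\textbf{The gap.} Conditioning on the event $z\in G_r$, which has probability only $\ge 1/2$, can inflate the error to $(1/3)/(1/2)=2/3$. An algorithm that is exactly correct with probability only $1/3$ cannot be amplified by repetition: all of its wrong outputs may coincide in a single value of probability $2/3$, and majority or plurality vote would then return that value. So ``amplification restores error $1/3$'' is false as stated. The repair is routine but necessary, and there are two options.
\begin{itemize}
\item First amplify the catalytic algorithm to error $\varepsilon\le 1/10$. Run $O(1)$ independent copies on separate catalytic blocks and take the majority. This is still catalytic for every random string and uses $O(s)$ clean space. The conditional error is then at most $2\varepsilon<1/3$.
\item Alternatively, use the Markov threshold $K\cdot 2^s$ with $K\ge 4$. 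The good set then has density at least $1-1/K$, the conditional error is at most $\tfrac{1/3}{1-1/K}\le 4/9<1/2$, and majority vote now works.
\end{itemize}
Note also that the resulting algorithm is a distribution over deterministic $O(2^s)$-state automata. You should therefore invoke the disjointness lower bound in its public-coin form, which does hold.

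\textbf{How the paper avoids this.} The whole detour is avoidable. The paper models randomness as stochastic transition matrices, with fresh coins at each step. It takes $Q_y$ to be the set of states reachable \emph{with positive probability} from $(q_0,y)$ over all coin outcomes. Since restoration must hold on every positive-probability run, these sets are pairwise disjoint. Pigeonhole then yields one $y$ that does not depend on the randomness. Restricting to $Q_y$ gives an ordinary randomized automaton with at most $2^s$ states and exactly the output distribution of the catalytic algorithm started from $y$, with no error loss and no amplification. The dependence of $z^*_r$ on $r$ arises only because you fix an entire random tape that may be re-read. With one-way coins you could take the union over $r$ and prove disjointness by appending a common suffix together with common future coins.

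\textbf{A smaller point (known $m$).} Pigeonholing separately at each time step gives a $z^*_t$ that may depend on $t$, so you cannot hardwire one $z^*$ that way. Instead, pigeonhole the pairwise disjoint sets $\{(t,q): q\in R_t(z)\}$ inside a universe of size $(m+1)2^{s+c}$. This gives a single $z^*$ with at most $(m+1)2^s$ (time, state) pairs, i.e.\ $s+O(\log m)$ bits.
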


We present the results in detail in Section~\ref{sec:lower}.

\section{Open Problems}

We introduced a model of streaming computation with \emph{catalytic memory}: auxiliary memory that may be used throughout the computation, but must be restored to its original contents at the end. We showed that this form of usable but unconsumable memory can dramatically enhance the power of streaming algorithms. In particular, it enables space-efficient deterministic algorithms for the exact computation of frequency moments, yielding exponential improvements over the traditional streaming model, where brute-force algorithms are typically space-optimal. We like to note that our 4 pass and \(k+1\)-pass algorithms are stronger than a standard multi-pass streaming algorithms. In the usual multi-pass model, every pass sees the stream in the same order. In contrast, our algorithm for \(\FM{k}\) continues to work even if, on each pass, an adversary presents the elements in an arbitrary order, as long as the underlying stream itself is unchanged. Our results strongly indicate that catalytic space can be powerful resource for space-efficient exact deterministic computation over data streams, and opens new directions for research in the area. 

Although we showed that \(\FM{k}\) can be computed in 4 passes, it is unclear whether this is necessary. It is plausible that, for every constant \(k\), \(\FM{k}\) can in fact be computed in 2 passes. Our lower bound only rules out 1 pass algorithms. Can we prove that 3 passes are necessary for $\FM{k}$ for some $k\geq 3$? More broadly, it would be very interesting to understand the power of catalytic space for other streaming problems.




\section{Frequency Moment Upper Bounds}\label{sec:upper}

\subsection{\texorpdfstring{Computing $\FM{k}(\stream)$ in $k+1$ pass but less Catalytic Space}{Computing F\_k(S) in k+1 pass}}\label{Subsec: Fk in k+1}


\subsubsection{Facts}
We describe the notion and some results that we use in this subsection. Reader can refer to the well-known book {\em Concrete Mathematics: A Foundation for Computer Science
}~\cite{graham94} for details. 

{\em Stirling number of the second kind}, denoted by \( S(m,r)\), is the number of ways to partition an \(m\)-element set into exactly \(r\) non-empty subsets. The {\em falling factorial}, denoted by  $(j)_r$ is given by, $(j)_r = j(j-1)\dotsm(j-r+1)$. 

\begin{fact}[Stirling numbers of the second kind and falling factorials]\label{Fact: Stirling Numbers}
$S(m,r)$ and  $(j)_r$ satisfy the following properties:
    \begin{multicols}{2}
              \begin{enumerate}
                \item $j^m = \sum_{r = 0}^{m}S(m,r)(j)_r$,
                \item $S(m,m)=1$
                \item $(0)_0=1$
                \item $(0)_r = 0 \qquad \forall r \neq 0$
              \end{enumerate}  
    \end{multicols}
\end{fact}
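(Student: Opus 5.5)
The statement is a standard combinatorial identity, so I will prove it directly without appeal to earlier results of the paper. Items 2--4 follow from the definitions. For item 2, the only partition of an $m$-element set into $m$ non-empty blocks is the partition into singletons, so $S(m,m)=1$. For item 3, $(0)_0$ is the empty product, which equals $1$. For item 4, when $r\ge 1$ the product $(0)_r = 0\cdot(-1)\cdots(-r+1)$ contains the factor $0$, so it vanishes.

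The main content is item 1, the identity $j^m=\sum_{r=0}^m S(m,r)(j)_r$. I first prove it for integers $j\ge 0$ by double counting the functions $\phi:[m]\to[j]$. There are $j^m$ such functions. On the other hand, I classify each $\phi$ by the partition of $[m]$ into its non-empty fibres $\phi^{-1}(y)$. If this partition has $r$ blocks, then $\phi$ is determined by the partition together with an injective assignment of the $r$ blocks to distinct values in $[j]$. For a fixed $r$ there are $S(m,r)$ partitions and $(j)_r$ injective assignments. Summing over $r$ from $0$ to $m$ gives the identity.

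Two edge cases need checking. When $r>j$, we have $(j)_r=0$, which matches the fact that no such $\phi$ exists. When $m=0$, we use $S(0,0)=1$ and $S(m,0)=0$ for $m\ge 1$, and the count still agrees.

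Finally, the identity extends from non-negative integers to all integers and reals. Both sides are polynomials in $j$ of degree at most $m$, and they agree at the infinitely many points $j=0,1,2,\dots$, so they are equal as polynomials. In the algorithm only non-negative integer values such as $\f_i$ matter, so even this extension is more than is needed.

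The only step requiring any care is matching the bijection boundary conventions in item 1 (the cases $r=0$ and $r>j$); I expect no real obstacle. Alternatively, one could cite \cite{graham94} for all four items.
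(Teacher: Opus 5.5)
Your proof is correct. The paper gives no argument for this Fact and simply defers to \cite{graham94}, so your write-up supplies a self-contained proof where the paper has only a citation.

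The standard argument in that reference for item 1 is algebraic rather than combinatorial. It is an induction on $m$ using $j\,(j)_r=(j)_{r+1}+r\,(j)_r$ together with the recurrence $S(m,r)=r\,S(m-1,r)+S(m-1,r-1)$, and it yields the identity directly as a polynomial identity in $j$. Your double count of maps $[m]\to[j]$ is more transparent and makes the edge cases ($r>j$, $r=0$, $m=0$) self-evident. The cost is that you need the interpolation step to pass from non-negative integers to arbitrary $j$. The paper never needs that step: in \Cref{Lem: Freq Moment Binomial Identity} and \Cref{Fact: Finite Difference Falling Factorial} the identity is only evaluated at $j\in\{0,1,\dots,k\}$.

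One small correction to your closing remark: the value substituted for $j$ in the paper is that summation index, not $\f_i$. Your conclusion that only non-negative integers matter is unchanged.

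Your reading of item 4 as $r\ge 1$ is the right one. With the paper's product definition, $(j)_r$ is only defined for $r\ge 0$.
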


\begin{fact}[Forward finite differences]\label{Fact: Forward Finite Differences}
    Let $\Delta^1 f(x) = f(x+1)-f(x)$ be the finite difference of a function $f$ at $x$. Let $\Delta^i f(x) = \Delta(\Delta^{i-1}f(x))$. Then we have, $\Delta^k f(0) = \sum_{j=0}^{k}(-1)^{k-j}\binom{k}{j}f(j)$
\end{fact}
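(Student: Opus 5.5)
The cleanest route is operator algebra on functions $\func{f}{\field{Z}}{\field{Z}}$ (or on any ring-valued functions). I would introduce the shift operator $E$, defined by $(Ef)(x) = f(x+1)$, and the identity operator $I$. Both are linear, and by definition $\Delta = E - I$. Since $E$ and $I$ commute, the ordinary binomial theorem holds in the (commutative) algebra generated by $E$ and $I$. This gives
\[
\Delta^k = (E - I)^k = \sum_{j=0}^{k} \binom{k}{j} (-1)^{k-j} E^j .
\]
By induction, $E^j f(x) = f(x+j)$. Applying both sides to $f$ and evaluating at $x=0$ then yields exactly $\Delta^k f(0) = \sum_{j=0}^{k}(-1)^{k-j}\binom{k}{j}f(j)$.

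To keep everything elementary, I would also give (or replace this by) a direct induction on $k$, proving the more general identity at an arbitrary point:
\[
\Delta^k f(x) = \sum_{j=0}^{k}(-1)^{k-j}\binom{k}{j}f(x+j).
\]
The base case $k=1$ is just the definition $\Delta f(x) = f(x+1) - f(x)$.

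For the inductive step, write $\Delta^{k+1} f(x) = \Delta^k f(x+1) - \Delta^k f(x)$. This holds because $\Delta^{k+1} = \Delta^k \Delta = \Delta \Delta^k$ and $\Delta^k$ is linear. Substituting the hypothesis for both terms gives
\[
\sum_{j}(-1)^{k-j}\binom{k}{j}f(x+j+1) \;-\; \sum_j(-1)^{k-j}\binom{k}{j}f(x+j).
\]
Reindex the first sum by $j \mapsto j-1$. The coefficient of $f(x+j)$ then becomes $(-1)^{k+1-j}\left[\binom{k}{j-1}+\binom{k}{j}\right]$. By Pascal's rule this equals $(-1)^{k+1-j}\binom{k+1}{j}$, which completes the induction. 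Setting $x=0$ gives the fact.

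The only step needing care is the sign and index bookkeeping in the reindexing, in particular the boundary terms $j=0$ and $j=k+1$. These follow from the conventions $\binom{k}{-1}=\binom{k}{k+1}=0$. Otherwise the argument is routine.
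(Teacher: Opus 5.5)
Your proposal is correct, and both arguments work. The paper itself gives no proof of this fact: it quotes it as standard and points to \emph{Concrete Mathematics}. The usual derivation there is essentially your first argument: write $\Delta = E - 1$ for the shift operator $E$, note that $E$ and $1$ commute, and expand $(E-1)^k$ by the binomial theorem.

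Your second argument, the induction, is a fully elementary, self-contained alternative. You are right to prove it at a general point $x$ rather than only at $0$. The step $\Delta^{k+1}f(0) = \Delta^k f(1) - \Delta^k f(0)$ needs the hypothesis at $x=1$, so an induction on the statement at $0$ alone would not close. Your reindexing, Pascal's rule, and the boundary conventions $\binom{k}{-1}=\binom{k}{k+1}=0$ are all handled correctly.

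One minor simplification: under the paper's definition $\Delta^{k+1} f = \Delta(\Delta^k f)$, the identity $\Delta^{k+1} f(x) = \Delta^k f(x+1) - \Delta^k f(x)$ is just the definition of $\Delta$ applied to the function $\Delta^k f$. So you do not need to appeal to linearity of $\Delta^k$ there.
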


\begin{fact}[Finite difference of the falling factorial]\label{Fact: Finite Difference Falling Factorial}
    \begin{align*}
                \Delta^k(0)_r=\begin{cases}
                    0 \text{ if } r \neq k\\
                    r! \text{ if } r = k\\
                \end{cases}
    \end{align*}
\end{fact}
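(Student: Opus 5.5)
We read $\Delta^k(0)_r$ as $\Delta^k$ applied to the polynomial $x \mapsto (x)_r$ and then evaluated at $x=0$. The plan is to compute $\Delta^k (x)_r$ in closed form as a polynomial in $x$, and then substitute $x=0$ using items 3 and 4 of \Cref{Fact: Stirling Numbers}.

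\emph{Step 1: one difference.} For $r\ge 1$ we show $\Delta (x)_r = r\,(x)_{r-1}$. The falling factorials $(x+1)_r=(x+1)x\cdots(x-r+2)$ and $(x)_r=x(x-1)\cdots(x-r+1)$ share the common factor $x(x-1)\cdots(x-r+2)=(x)_{r-1}$. Factoring it out gives
\begin{align*}
(x+1)_r-(x)_r=(x)_{r-1}\bigl[(x+1)-(x-r+1)\bigr]=r\,(x)_{r-1}.
\end{align*}
For $r=0$ we have $(x)_0=1$, a constant, so $\Delta(x)_0=0$.

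\emph{Step 2: iterate.} Induction on $k$ using Step 1 and linearity of $\Delta$ gives
\begin{align*}
\Delta^k (x)_r = r(r-1)\cdots(r-k+1)\,(x)_{r-k}=(r)_k\,(x)_{r-k} \quad\text{for } k\le r.
\end{align*}
In particular $\Delta^r (x)_r = r!\,(x)_0 = r!$, which is a constant. Every further difference of a constant vanishes, so $\Delta^k (x)_r=0$ for all $k>r$.

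\emph{Step 3: evaluate at $0$.}
\begin{itemize}
\item If $k>r$, the value is $0$ by Step 2.
\item If $k<r$, the value is $(r)_k\,(0)_{r-k}$ with $r-k\neq 0$, which is $0$ by item 4 of \Cref{Fact: Stirling Numbers}.
\item If $k=r$, the value is $r!\,(0)_0=r!$ by item 3.
\end{itemize}
This is exactly the claimed case split.

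There is no real obstacle here. The only points needing care are the bookkeeping of the common factor in Step 1, the edge case $r=0$, and stating explicitly that $\Delta^k(0)_r$ means differencing in the variable before substituting $0$. As a cross-check, one could instead combine \Cref{Fact: Forward Finite Differences} with the observation that $(j)_r=0$ for integers $0\le j<r$. That route makes the case $k<r$ immediate, but the other cases then need a binomial identity, so the telescoping derivation above is the cleaner primary route.
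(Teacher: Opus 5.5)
Your proof is correct and follows the paper's argument: you compute $\Delta(x)_r = r\,(x)_{r-1}$ by factoring out the common product, iterate to get $(r)_k\,(x)_{r-k}$, and evaluate at $0$ case by case. Your handling of $k>r$, via the observation that $\Delta^r(x)_r=r!$ is constant, is slightly cleaner than the paper's, which writes $(j)_{r-k}$ with a negative index and relies on a zero factor in $r(r-1)\cdots(r-k+1)$.
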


\begin{proof}
    Consider
              \begin{align*}
                \begin{split}
                    \Delta(j)_r &= (j+1)(j)\dotsm(j-r+2) - (j)(j-1)\dotsm(j-r+1)\\
                    &= (j)(j-1)\dotsm(j-r+2)(j+1-(j-r+1))\\
                    &= (j)_{r-1}r
                \end{split}
              \end{align*}
              Therefore, $$\Delta^k(j)_r = r(r-1)\dotsm(r-k+1)(j)_{r-k}$$
              Then,\\
              If $r > k$, we will be left with $(0)_{r-k} = 0$, so $\Delta^k(0)_r = 0$\\
              If $r = k$, and because $(0)_0 = 1$, we have $\Delta^k(0)_r = r!$\\
              If $r < k$, one of the falling factors will be of the form $(r-k+c)=0$ so $\Delta^k(0)_r = 0$
\end{proof}

\subsubsection{Algorithm}

The algorithm to compute $\FM{k}(\stream)$ for any $k \geq 2$, with $k+1$ passes is described in Algorithm~\ref{alg:FkinKP+1}. The algorithms uses $n$ catalytic registers  $\C{C}[i]$ for $i\in [n]$ where each $\C{C}[i]$ can hold $\log M = \lceil k\log_2 mk\rceil$ bits.  




    





\begin{algorithm}
\caption{Stream Power Sum Algorithm}
\label{alg:FkinKP+1}
\begin{algorithmic}[1]
\Require stream $\stream = \langle x_1, \dots, x_{\streamlen}\rangle$, catalytic space $\C{C}$

\State $M \gets 2^{\lceil\log \streamlen^kk!+1\rceil}$
\State $\texttt{sum} \gets (-1)^k\sum_i  (\C{C}[i])^k \mod M$ \Comment{\color{blue} \texttt{sum} =  $(-1)^k\sum_i (\cc{c}[i])^k$} \color{black}

\ 

\For{\textbf{Pass}\ \ $j \gets 1$ to $k$} 

\ 

  \For{$ i \gets 1$ to $\streamlen$}    
       \State $\C{C}[x_i] \gets (\C{C}[x_i] +  1) \mod M$ \Comment{\color{blue} $\C{C}[i] = \cc{c}[i] + j\f_i$} \color{black}
  \EndFor

 \State $\texttt{sum} \gets \texttt{sum} + \sum_i (-1)^{j+k} \binom{k}{j} \cdot (\C{C}[i])^k \mod M$ 
 \
 
 \Comment{\color{blue} \texttt{sum} =  $\sum_i \sum_{\ell = 0}^j(-1)^{\ell+k} \binom{k}{\ell} \cdot (\cc{c}[i] + \ell\f_i)^k$}\color{black}
\EndFor


\textbf{/* Pass $(k+1)$ */}

\ 

\For{$ i \gets 1$ to $\streamlen$}    
        \State $\C{C}[x_i] \gets (\C{C}[x_i] - k) \mod M$ \Comment{\color{blue}{$\C{C}[i] = \cc{c}[i]$}}\color{black}
\EndFor

\ 

\State \Return $\frac{\texttt{sum}}{k!}$ 


\end{algorithmic}
\end{algorithm}

The following claim, which is easy to see, captures the basic behaviour of the algorithm. 

\begin{claim}\label{Claim: FkinK+1 States}
    For any $j \leq k$, at the end of the $j$-th pass, the catalytic space contains
    \[
    \C{C}[i] =( \cc{c}[i] + j\f_{i}) \mod M
    \]
    At the end of the $k+1$-th pass, the catalytic space contains
    \[
    \C{C}[i] = \cc{c}[i] 
    \]
    Furthermore, at the end of $j$-th pass for any $j \leq k$, the counter $\texttt{sum}$ contains 
    \[
    \texttt{sum} =  \sum_i \sum_{\ell = 0}^j(-1)^{\ell+k} \binom{k}{\ell} \cdot ((\cc{c}[i] + \ell\f_i) \mod M)^k \mod M
    \]
\end{claim}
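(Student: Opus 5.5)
The claim follows by induction on the pass index $j$, tracking the catalytic registers and the counter \texttt{sum} separately. Throughout, all arithmetic is in $\mathbb{Z}_M$, and we use that reduction modulo $M$ commutes with addition, subtraction and multiplication.

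\textbf{Catalytic registers.} We show by induction on $j\in\{0,1,\dots,k\}$ that after pass $j$ (pass $0$ meaning before any pass) we have $\C{C}[i]=(\cc{c}[i]+j\f_i)\bmod M$. The base case $j=0$ holds because the registers start with content $\cc{c}[i]$. For the inductive step, note that during pass $j$ register $\C{C}[i]$ is touched exactly at the $\f_i$ positions $t$ with $x_t=i$. Each such touch adds $1$ modulo $M$. Hence the register moves from $(\cc{c}[i]+(j-1)\f_i)\bmod M$ to $(\cc{c}[i]+(j-1)\f_i+\f_i)\bmod M=(\cc{c}[i]+j\f_i)\bmod M$.

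Pass $k+1$ performs $\f_i$ subtractions of $k$ on $\C{C}[i]$, giving $(\cc{c}[i]+k\f_i-k\f_i)\bmod M=\cc{c}[i]\bmod M$. This equals $\cc{c}[i]$ because the initial register content is an $l$-bit value, so it is less than $M$. This is the one point where care is needed: restoration is exact, not merely modulo $M$, only because the registers have exactly $\log M$ bits. If the register width exceeded $\log M$, restoration would fail, so I will state explicitly that register width equals $\log M$.

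\textbf{The counter \texttt{sum}.} We show by induction on $j$ that after pass $j$,
\[
\texttt{sum}=\sum_i\sum_{\ell=0}^{j}(-1)^{\ell+k}\binom{k}{\ell}\bigl((\cc{c}[i]+\ell\f_i)\bmod M\bigr)^k \bmod M .
\]
The base case $j=0$ is the initialization line. Since $\C{C}[i]=\cc{c}[i]$ before any pass, the initial value $(-1)^k\sum_i\C{C}[i]^k$ is exactly the $\ell=0$ term, because $(-1)^{0+k}\binom{k}{0}=(-1)^k$.

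For the inductive step, the update after pass $j$ reads the registers. By the first part, they hold $(\cc{c}[i]+j\f_i)\bmod M$ at that moment, since the update happens after the inner loop of pass $j$ has finished. The update therefore adds precisely the $\ell=j$ summand. Adding it to the inductive hypothesis and reducing modulo $M$ gives the formula for $j$.

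\textbf{Main obstacle.} The only subtle point, beyond the restoration-width remark above, is the ordering within a pass: the \texttt{sum} update for pass $j$ must use the register values after that pass's increments, not before. The rest is bookkeeping. The inner reduction $(\cdot)\bmod M$ inside the $k$-th power may be dropped modulo $M$ anyway, which will be convenient later when this claim is combined with Facts~\ref{Fact: Stirling Numbers}--\ref{Fact: Finite Difference Falling Factorial} to show $\texttt{sum}\equiv k!\,\FM{k}$.
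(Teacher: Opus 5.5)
Your proof is correct and is the same routine induction over passes that the paper relies on when it calls this claim easy to see; the paper gives no further argument. Your remark that exact restoration after pass $k+1$ needs each register to be exactly $\log M$ bits wide, so that $\cc{c}[i] < M$, is a real point the paper leaves implicit. The paper's stated width $\lceil k\log_2 mk\rceil$ need not equal $\log M$ for its choice of $M$.
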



\begin{lemma}\label{Lem: Freq Moment Binomial Identity}
    For any $\cc{c}$ and $\f$, and $k \in \Nat$, we have 
    \[
        \sum_{j=0}^{k}(-1)^j\binom{k}{j}(\cc{c}+j\f)^k = \f^k(-1)^kk!
    \]
\end{lemma}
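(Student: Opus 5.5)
The plan is to recognize the left-hand side as a $k$-th forward finite difference and then evaluate it with the Stirling-number facts. Fix $\cc{c}$ and $\f$ and define the polynomial $g(x) = (\cc{c}+x\f)^k$ in the variable $x$. By \Cref{Fact: Forward Finite Differences},
\[
\Delta^k g(0) = \sum_{j=0}^{k}(-1)^{k-j}\binom{k}{j}g(j) = (-1)^k\sum_{j=0}^{k}(-1)^{j}\binom{k}{j}(\cc{c}+j\f)^k .
\]
It therefore suffices to show $\Delta^k g(0) = k!\,\f^k$. Multiplying by $(-1)^k$ then gives the claimed value $(-1)^k k!\,\f^k$.

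To compute $\Delta^k g(0)$, I will expand $g$ binomially:
\[
g(x)=\sum_{t=0}^{k}\binom{k}{t}\cc{c}^{k-t}\f^t x^t .
\]
The operator $\Delta$ is linear, so it is enough to evaluate $\Delta^k$ on each monomial $x^t$ with $t\le k$. By part 1 of \Cref{Fact: Stirling Numbers}, $x^t=\sum_{r=0}^{t}S(t,r)(x)_r$. Applying \Cref{Fact: Finite Difference Falling Factorial}, we have $\Delta^k(0)_r=0$ unless $r=k$, in which case it equals $k!$.

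From this, $\Delta^k x^t|_{x=0}=0$ for $t<k$, since every $r\le t<k$. For $t=k$, we get $\Delta^k x^k|_{x=0}=S(k,k)\,k!=k!$ using part 2 of \Cref{Fact: Stirling Numbers}. Only the $t=k$ term of $g$ survives, and it has coefficient $\binom{k}{k}\cc{c}^0\f^k=\f^k$. Hence $\Delta^k g(0)=k!\,\f^k$, which completes the argument.

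There is one technical point. \Cref{Fact: Finite Difference Falling Factorial} is stated for the falling factorial viewed as a function of its argument, and we apply $\Delta^k$ at $0$ after expanding in falling factorials. The potential obstacle is making sure that the finite-difference operator commutes with these linear expansions. This holds because $\Delta$ is linear on functions of $x$, so no real difficulty arises.

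The main thing to watch is the sign bookkeeping between $(-1)^{k-j}$ in the fact and $(-1)^j$ in the lemma. The case $k=0$ (where $0^0=1$ and the identity reads $1=1$) is covered by the same computation.
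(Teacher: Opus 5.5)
Your proposal is correct and follows essentially the same route as the paper. Both proofs binomially expand $(\cc{c}+j\f)^k$, rewrite $j^m$ in falling factorials via Stirling numbers, and use that $\Delta^k(0)_r$ vanishes unless $r=k$, so only the $\f^k$ term survives and the sum equals $(-1)^k k!\,\f^k$. The only difference is packaging: you write the sum as $(-1)^k\Delta^k g(0)$ for $g(x)=(\cc{c}+x\f)^k$ and use linearity of $\Delta$, whereas the paper interchanges the summations explicitly. Your sign bookkeeping and the $k=0$ case are both handled correctly.
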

   
    \begin{proof}
        Note, $(\cc{c}+j\f)^k = \sum_{m=0}^{k}\binom{k}{m}\cc{c}^{k-m}j^m\f^m$, so we have by substitution and reordering summations,
        \begin{align*}
            \sum_{j=0}^{k}(-1)^j\binom{k}{j}(\cc{c}+j\f_i)^k &= \sum_{j=0}^{k}(-1)^j\binom{k}{j}\sum_{m=0}^{k}\binom{k}{m}\cc{c}^{k-m}j^m\f^m\\
            &= \sum_{m=0}^{k}\binom{k}{m}\cc{c}^{k-m}\f^m\sum_{j=0}^{k}(-1)^j\binom{k}{j}j^m\\
            &= \sum_{m=0}^{k}\binom{k}{m}\cc{c}^{k-m}\f^m\sum_{j=0}^{k}(-1)^j\binom{k}{j}\sum_{r = 0}^{m}S(m,r)(j)_r\\
            &= \sum_{m=0}^{k}\sum_{r = 0}^{m}\binom{k}{m}\cc{c}^{k-m}\f^mS(m,r)\sum_{j=0}^{k}(-1)^j\binom{k}{j}(j)_r
        \end{align*}
        By \Cref{Fact: Forward Finite Differences}, we have that 
        \[
        \sum_{j=0}^{k}(-1)^j\binom{k}{j}(j)_r = (-1)^k\Delta^k(0)_r
        \]
        Thus, from \Cref{Fact: Stirling Numbers}, we have
        \[
        \sum_{m=0}^{k}\sum_{r = 0}^{m}\binom{k}{m}\cc{c}^{k-m}\f^mS(m,r)(-1)^k\Delta^k(0)_r
        \]
        By \Cref{Fact: Finite Difference Falling Factorial}, we have that $\Delta^k(0)_r \neq 0$ only when $r = k$. Therefore, all terms are 0 except when
        $m=k$ and $r=k$ ($m$ must be $k$, as the bound on $r$ limits it to be at most $m$). Therefore we have
        \begin{align*}
                &\sum_{m=0}^{k}\sum_{r = 0}^{m}\binom{k}{m}\cc{c}^{k-m}\f^mS(m,r)(-1)^k\Delta^k(0)_r\\
                = &\binom{k}{k}c^{k-k}\f^kS(k,k)(-1)^k\Delta^k(0)_k\\
                = &\f^k(-1)^kk!
        \end{align*}
    \end{proof}

    \begin{corollary}\label{coro:identity}
       For any $i\in [n]$,  $\sum_{\ell = 0}^j (-1)^{\ell+k} \binom{k}{\ell} \cdot ((\cc{c}[i] + \ell\f_i) \mod M)^k = ((\f_i)^kk!) \mod M$
    \end{corollary}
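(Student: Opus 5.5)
The plan is to read the corollary with the upper summation index $j=k$ (the value reached after the last of the first $k$ passes, as in \Cref{Claim: FkinK+1 States}), and to interpret both sides as residues modulo $M$. The statement then follows from \Cref{Lem: Freq Moment Binomial Identity} together with two elementary observations: one about the sign and one about reducing modulo $M$ inside the $k$-th power.

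\textbf{Step 1 (sign).} Apply \Cref{Lem: Freq Moment Binomial Identity} with $\cc{c}=\cc{c}[i]$ and $\f=\f_i$, over the integers:
\[
\sum_{\ell=0}^{k}(-1)^{\ell}\binom{k}{\ell}(\cc{c}[i]+\ell\f_i)^k=(-1)^k k!\,\f_i^k .
\]
Multiplying both sides by $(-1)^k$ turns each coefficient into $(-1)^{\ell+k}\binom{k}{\ell}$ and turns the right-hand side into $(-1)^{2k}k!\,\f_i^k=k!\,\f_i^k$. This is the integer version of the claimed identity.

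\textbf{Step 2 (reduction modulo $M$).} Pass to $\mathbb{Z}/M\mathbb{Z}$. Write $y_\ell=(\cc{c}[i]+\ell\f_i)\bmod M$, so that $y_\ell\equiv \cc{c}[i]+\ell\f_i \pmod M$. Reduction modulo $M$ is a ring homomorphism, hence $y_\ell^k\equiv(\cc{c}[i]+\ell\f_i)^k\pmod M$. Multiplying by the integer coefficients $(-1)^{\ell+k}\binom{k}{\ell}$ and summing preserves the congruence. Combining with Step 1 gives
\[
\sum_{\ell=0}^{k}(-1)^{\ell+k}\binom{k}{\ell}\,y_\ell^k\equiv k!\,\f_i^k\pmod M,
\]
which is exactly the stated identity with the left-hand side taken modulo $M$.

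\textbf{Main point of care.} The argument itself is routine; the only delicate part is bookkeeping. First, the identity must be read as a congruence modulo $M$ rather than an equality of integers, since the reduced values $y_\ell$ are not the integers $\cc{c}[i]+\ell\f_i$. Second, the upper index must be $k$: for $j<k$ the partial sums do not collapse to $k!\,\f_i^k$. Third, the register additions performed during the passes of Algorithm~\ref{alg:FkinKP+1} are themselves taken modulo $M$, so the register contents are $y_\ell$ and not the raw integers. All three points are handled by the fact that reduction modulo $M$ commutes with sums and products.

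Finally, I would remark how this feeds the correctness of Algorithm~\ref{alg:FkinKP+1}. Summing the congruence over $i$ gives $\texttt{sum}\equiv k!\,\FM{k}\pmod M$. Since $k!\,\FM{k}\le k!\,m^k<M$ by the choice of $M$, the residue equals $k!\,\FM{k}$ exactly, so dividing by $k!$ returns $\FM{k}$.
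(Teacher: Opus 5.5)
Your proposal is correct and follows the paper's intended route: the paper states this corollary without proof as an immediate consequence of \Cref{Lem: Freq Moment Binomial Identity}, and the derivation is exactly yours, multiplying by $(-1)^k$ and reducing modulo $M$ (reduction commutes with sums and powers). Your readings of the upper index as $k$ rather than $j$, and of the identity as a congruence modulo $M$ rather than an integer equality, are the right ones; they match how the corollary is used in the proof of \Cref{thm:kplus1}.
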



\begin{theorem}\label{thm:kplus1}
    \Cref{alg:FkinKP+1} takes as input a stream $\stream$ uses $k+1$ passes, $\bigo{k\log \streamlen k}$ clean space, $O(kn\log mk)$ catalytic space and outputs the value of $\FM{k}(\stream)$.
    
\end{theorem}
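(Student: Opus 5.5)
The proof splits into three parts: pass count and catalytic restoration, space usage, and correctness of the returned value. Each is checked against \Cref{alg:FkinKP+1} using \Cref{Claim: FkinK+1 States}, \Cref{Lem: Freq Moment Binomial Identity} and \Cref{coro:identity}.

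\textbf{Passes and restoration.} The algorithm makes passes $1,\dots,k$ in the main loop and one final restoring pass, so $k+1$ passes in total. By induction on $j$, after pass $j\le k$ each register satisfies $\C{C}[i]=(\cc{c}[i]+j\f_i)\bmod M$, because every occurrence of $i$ adds $1$ modulo $M$. In pass $k+1$ every occurrence subtracts $k$, so the register returns to $(\cc{c}[i]+k\f_i-k\f_i)\bmod M=\cc{c}[i]$. This gives the restoration requirement and the first two statements of \Cref{Claim: FkinK+1 States}.

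\textbf{Space.} The clean space holds only \texttt{sum}, the loop indices and a temporary register, each of $O(\log M)=O(\log(m^k k!))=O(k\log mk)$ bits. Computing $\binom{k}{j}(\C{C}[i])^k \bmod M$ needs only repeated multiplication modulo $M$, which fits in this space. The catalytic space is $n$ registers of $\log M$ bits, i.e.\ $O(n k\log mk)$. The update of \texttt{sum} at the end of pass $j$ reads the registers between passes, which is allowed because it does not touch the stream.

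\textbf{Correctness.} The invariant for \texttt{sum} follows by induction: the initialization contributes the $\ell=0$ term, and pass $j$ adds the $\ell=j$ term. At the end, therefore,
\[\texttt{sum}\equiv \sum_i\sum_{\ell=0}^k(-1)^{\ell+k}\binom{k}{\ell}(\cc{c}[i]+\ell\f_i)^k \pmod M.\]
Reducing the register contents modulo $M$ before raising to the $k$th power does not change the residue modulo $M$. Applying \Cref{Lem: Freq Moment Binomial Identity} and multiplying by $(-1)^k$, the inner sum equals $k!\,\f_i^k$ as an integer, so $\texttt{sum}\equiv k!\,\FM{k}\pmod M$.

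\textbf{Main obstacle: the final division.} Division by $k!$ is not invertible modulo a power of $2$, so I would argue instead via integer ranges. Since $\sum_i \f_i=m$, we have $0\le k!\FM{k}\le k!\,m^k< M$. Hence the residue stored in \texttt{sum}, taken in $[0,M)$, equals the integer $k!\FM{k}$ exactly, and integer division by $k!$ returns $\FM{k}$. The choice $M\ge 2m^k k!$ in line 1 is exactly what guarantees this. I would state \Cref{coro:identity} in this form, i.e.\ with $j=k$ and the corrected sign, since as written it is only meaningful for $j=k$.
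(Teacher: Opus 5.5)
Your proposal is correct and follows the same route as the paper. Both arguments use the pass invariant of Claim~\ref{Claim: FkinK+1 States} together with Lemma~\ref{Lem: Freq Moment Binomial Identity} to get $\texttt{sum}\equiv k!\,\FM{k}\pmod M$, then use $M>k!\,\FM{k}$ to recover this integer exactly and divide by $k!$. You also make explicit several details the paper leaves implicit: the bound $\FM{k}\le m^k$ that makes the choice of $M$ sufficient, the restoration in pass $k+1$, and the fact that \Cref{coro:identity} should be read with upper limit $k$ and as a congruence modulo $M$.
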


\begin{proof}
    By \Cref{Claim: FkinK+1 States}, we have that at the end of the $k$-th pass,
    \begin{align*}
        \texttt{sum} = &(\sum_i \sum_{\ell = 0}^k (-1)^{\ell+k} \binom{k}{\ell} \cdot ((\cc{c}[i] + \ell\f_i)\mod M)^k) \mod M&\\
        = & \left(\sum_{i} \f_i^kk!\right)\mod M & \mbox{[From Corollary~\ref{coro:identity}]}\\
        = & k!\sum_{i} \f_i^k & \mbox{[Since $M \geq k!\sum_{i} \f_i^k$.]}
    \end{align*}
    Hence, we have $\frac{\texttt{sum}}{k!} = \sum_i \f_i^k = \FM{k}(\stream)$, which the algorithm returns. Note that the only non-catalytic space that is needed is to store the variables $\texttt{sum}$, $i$, and $j$. Note that the maximum value  $\texttt{sum}$ can take is at most $M$, leading to a clean space bound of $O(k \log mk)$. For the catalytic space, observe that each element $i \in [n]$ corresponds to a single catalytic register that has size at most $M$, i.e. uses $O(k\log mk)$ bits.
\end{proof}

\vspace{2mm}    
\noindent{\em Remark:}The \(k+1\)-pass algorithm is stronger than a standard multi-pass streaming algorithm in the following sense. In the traditional multi-pass model, every pass sees the stream in the same order. In contrast, the algorithm for \(\FM{k}\) works even if, on each pass, an adversary presents the elements in an arbitrary order, as long as the underlying stream itself is unchanged.

\subsubsection{\texorpdfstring{Computing $\FM{2}$ in $2$ passes}{Computing F\_2 in 2 passes}}\label{Section: 2-pass f_2}

The high level discussion of recursive decomposition based algorithm for $\FM{2}$ computation assumes that the tasks of computing $\FM{2}(\stream_L)$, $\FM{2}(\stream_R)$, and $\sum_i g_i h_i$ can be carried out in parallel within two passes using catalytic space. We now formalize this via a non-recursive version of the algorithm. 

We use $ \lceil  \log \streamlen \rceil$ catalytic arrays and run $ \lceil \log \streamlen \rceil$ parallel algorithms, where the $j$th algorithm uses the $j$th catalytic array. In the $j$th iteration, we partition the stream $\stream$ into substreams of size $2^j$. Denote these substreams by
$\stream^{(j,0)}, \stream^{(j,1)}, \dots, \stream^{\left(j,\left\lfloor \frac{\streamlen - 1}{2^j} \right\rfloor\right)}$.
Let their corresponding frequency vectors be 
$\mathbf{g}^{(j,0)}, \mathbf{h}^{(j,0)}, \mathbf{g}^{(j,2)}, \mathbf{h}^{(j,2)}, \dots $. I.e.
$\mathbf{g}^{(j,0)}$ is the frequency vector $\stream^{(j,0)}$ and $\mathbf{h}^{(j,0)}$ is the frequency vector of $\stream^{(j,1)}$ and so on. The catalytic space in the $j$th algorithm is used to compute
\[
\sum_{\substack{k \\ k \text{ even}}} \sum_i \g^{(j,k)}_i \, \h^{(j,k)}_i.
\]

Finally, summing the outputs over all $\lceil \log \streamlen \rceil$ parallel algorithms yields
\[
\sum_{j=0}^{\lceil \log \streamlen \rceil} \;\sum_{\substack{k \\ k \text{ even}}} \;\sum_i \g^{(j,k)}_i \, \h^{(j,k)}_i,
\]
which is exactly equal to $\FM{2}(\stream) - \streamlen$.

Now, we present the Algorithm~\ref{alg:F2in2pass} and we establish its performance in Theorem~\ref{Thm: F2in2}.

\begin{algorithm}[H]
\caption{$\FM{2}$-in-2-pass-Algorithm}
\label{alg:F2in2pass}
\begin{algorithmic}[1]
\Require The stream $\stream = \langle x_1, \dots, x_{\streamlen}\rangle$, Catalytic space $\C{C}$

\State $M \gets \streamlen^2$

\textbf{/* Pass 1 */}

\For{$i \gets 1$ to $\streamlen$}
   \For{$j \gets 0$ to $\lceil \log \streamlen\rceil -1$}
    \State $k \gets \lfloor (i-1)/2^j\rfloor$
    \If{$k$ is even}
      \State $\C{C}^{(j,k)}[x_i] = (\C{C}^{(j,k)}[x_i]+1) \mod M$\Comment{\blue{$ \C{C}[i] = \cc{c}[i] + \g_i$}}
    \EndIf
     \If{$k$ is odd}
      \State $\W{S}^{(j)} = (\W{S}^{(j)} + \C{C}^{(j,k-1)}[x_i]) \mod M$\Comment{\blue{$ \W{S} = \sum_i (\cc{c}[i] + \g_i)\h_i$}}
    \EndIf
    \EndFor
\EndFor

\textbf{/* Pass 2 */}

\For{$i \gets 1$ to $\streamlen$}
   \For{$j \gets 0$ to $\lceil \log\streamlen \rceil -1$}
    \State $k \gets \lfloor (i-1)/2^j\rfloor$
    \If{$k$ is even}
      \State $\C{C}^{(j,k)}[x_i] = (\C{C}^{(j,k)}[x_i]-1) \mod M$\Comment{\blue{$\C{C}[i] = \cc{c}[i]$}}
    \EndIf
     \If{$k$ is odd}
      \State $\W{S}^{(j)} = (\W{S}^{(j)} - \C{C}^{(j,k-1)}[x_i])  \mod M$\Comment{\blue{$\W{S} = \sum_i \g_i\h_i$}}
    \EndIf
    \EndFor
\EndFor


\State \Return $\left(\streamlen + 2\sum_{j=1}^{\lceil\log \streamlen\rceil}\W{S}^{(j)} \right)$

\end{algorithmic}
\end{algorithm}


\begin{theorem}\label{Thm: F2in2}
    \Cref{alg:F2in2pass} takes as input a stream $\stream$ uses $2$ passes, $\bigo{\log m}$ clean space, $\bigo{n\log m}$ catalytic space, and outputs the value of $\FM{2}(\stream)$.
\end{theorem}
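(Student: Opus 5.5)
The plan is to prove correctness level by level, and then sum over the levels. Fix a level $j$ and a block pair $(2t, 2t+1)$ of length-$2^j$ blocks. The first step is to identify the combinatorial identity behind the algorithm. Pair every position $a<b$ with $x_a=x_b$. Such a pair lies at exactly one level $j$: the unique level at which $a$ and $b$ fall into sibling blocks, namely the left and right children of their lowest common dyadic ancestor. This requires the dyadic levels $j=0,\dots,\lceil\log m\rceil-1$ to cover every pair, which holds since $2^{\lceil \log m\rceil}\ge m$. Counting pairs then gives
\begin{align*}
\sum_{i}\binom{\f_i}{2}=\sum_{j}\sum_{t}\sum_{i}\g^{(j,2t)}_i\h^{(j,2t)}_i ,
\end{align*}
and hence $\FM{2}(\stream)=m+2\sum_j\sum_t\sum_i \g^{(j,2t)}_i\h^{(j,2t)}_i$. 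So it suffices to show that $\W{S}^{(j)}$ ends equal to $\sum_t\sum_i \g^{(j,2t)}_i\h^{(j,2t)}_i$, taken mod $M$. I will read the return statement's index range as matching the loop range.

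Next I would prove, for a fixed $j$, the invariant along the stream. I will use the natural reading that the catalytic array $\C{C}^{(j,k)}$ is one array $\C{C}^{(j)}$, reused across all even blocks. In pass~1, while an even block $2t$ is processed, the register $\C{C}^{(j)}[i]$ goes from $c[i]$ to $c[i]+\g^{(j,2t)}_i$. During the following odd block, each occurrence of $i$ adds $c[i]+\g^{(j,2t)}_i$ to $\W{S}^{(j)}$. The register does not yet return to $c[i]$, though: pass~1 never decrements. So the register keeps accumulating, and at the start of block $2t$ it holds $c[i]+\sum_{t'<t}\g^{(j,2t')}_i$.

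In pass~2 the decrements run in the same order, so at the corresponding point the register holds $c[i]+\sum_{t'<t}\g^{(j,2t')}_i+\sum_{t'\ge t}\g^{(j,2t')}_i-(\text{decrements so far})$. I will check that at each odd block $2t+1$ the pass-1 contribution is $c[i]+G_{\le t}$ and the pass-2 contribution is $c[i]+G_{\text{total}}-G_{\le t}$, where $G_{\le t}=\sum_{t'\le t}\g^{(j,2t')}_i$.

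This is where I expect the main obstacle. With this naive reuse, the difference of the two contributions is $2G_{\le t}-G_{\text{total}}$, which is not $\g^{(j,2t)}_i$. So the proof needs a register layout that makes the telescoping exact. I would resolve this by interpreting $\C{C}^{(j,k)}$ literally as separate arrays per even block. Alternatively, and to keep catalytic space at $O(n\log m)$ per level, I would keep one array but have pass~1 restore it after each odd block using a separate decrement pass. I would then verify the cleanest version: each occurrence of $i$ in block $2t+1$ contributes $(c[i]+\g^{(j,2t)}_i)-c[i]=\g^{(j,2t)}_i$. Summing over the $\h^{(j,2t)}_i$ occurrences gives $\g^{(j,2t)}_i\h^{(j,2t)}_i$, and summing over $i$ and $t$ gives the claim. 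The catalytic restoration follows because pass~2 applies exactly the inverse of every pass-1 register operation.

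Finally come the resource bounds. All arithmetic is mod $M=m^2>\FM{2}$, with $\FM{2}\le m^2$ after adjusting $M$ by a constant factor if needed, so the mod-$M$ value of each accumulator equals the true value. The clean space consists of $O(\log m)$ accumulators of $O(\log m)$ bits; I would also argue that they can be folded into a single accumulator of $O(\log m)$ bits, since only their weighted sum is needed. The catalytic space is $n$ registers of $O(\log m)$ bits per level. There are 2 passes.
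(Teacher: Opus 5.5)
Your correctness argument works when every even block $k$ at every level $j$ has its own array $\C{C}^{(j,k)}$, and in that form it is essentially the paper's proof. The paper's claims track exactly such registers, each with its own initial contents $c^{(j,k)}[i]$, through the two passes. It then inducts over levels via $\norm{\g+\h}_2^2=\norm{\g}_2^2+\norm{\h}_2^2+2\,\g\cdot\h$; your lowest-common-dyadic-ancestor pair count is a direct, non-inductive version of that induction. You are also right that collapsing to one array per level is wrong: for example, $\stream=\langle 1,2,2,2\rangle$ gives $\W{S}^{(0)}=0$ instead of $1$.

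The gap is the catalytic-space bound, which you never obtain for a correct two-pass scheme. Your space-saving variant keeps one array per level and restores it ``after each odd block using a separate decrement pass''; this cannot be implemented. Undoing $\g^{(j,2t)}$ after block $2t+1$ requires re-reading block $2t$, since its frequency vector does not fit in $\bigo{\log\streamlen}$ clean bits, and that means moving backwards inside pass~1. Deferring those decrements to an extra pass does not help either: the pass-1 reads still see $c[i]+G_{\le t}$, so nothing is fixed and you exceed two passes. The per-block layout, on the other hand, cannot be compressed. At the boundary between the passes every $\C{C}^{(j,k)}[i]$ holds $c^{(j,k)}[i]+\g^{(j,k)}_i$, so all arrays are dirty simultaneously and cannot share memory; sharing them additively is precisely the failing per-level scheme. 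That gives $\sum_j\lceil \streamlen/2^{j+1}\rceil\approx\streamlen$ arrays of $n$ registers, i.e.\ $\Theta(\streamlen n\log\streamlen)$ catalytic bits. So your closing claim of ``$n$ registers of $O(\log m)$ bits per level'' describes the layout you yourself showed to be incorrect, and it would total $\bigo{n\log^2\streamlen}$ anyway. You cannot borrow the missing step from the paper: its proof asserts the bound in one line (``$O(1)$ registers for each element $i\in[n]$''), although its own claims rely on the $(j,k)$-indexed registers. What your argument, and the paper's, actually establishes is correct output in two passes with $\bigo{\log\streamlen}$ clean space and $\bigo{\streamlen n\log\streamlen}$ catalytic space. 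Reaching $\bigo{n\log\streamlen}$, or even $\bigot{n}$, would require a genuinely different two-pass register schedule.
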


Before we give the proof of the algorithm we will need a few notations. 
Let the stream $\stream$ be $\langle x_1, \dots x_{\streamlen}\rangle$ with $x_i \in [n]$. For any $a,b\in \mathbb{N}$ if $1\leq a \leq b\leq \streamlen$, we will denote by $\f\langle [a, b]\rangle$ the frequency vector of the sub-stream $\langle x_a, \dots x_b\rangle$. That is,  for all $i \in [1, \dots, n]$
\begin{equation}\label{eq:FGH}
\f\langle [a, b]\rangle_i = \left| \left\{a\leq j \leq b \mid x_j = i  \right\}\right|
\end{equation}

For a fixed $j$ consider the intervals, $[1 + k\cdot 2^j, 2^{j+1} + k\cdot 2^j]$, as $k$ ranges over all even values from $0$ to $\lfloor (\streamlen -1)/2^j\rfloor$. These intervals partitions the stream into $\ceil{\lfloor (\streamlen -1)/2^j\rfloor/2}$ disjoint intervals of length $2^{j+1}$ each. 
For any $j\in [1, \dots, \lceil\log \streamlen\rceil]$ and any $k \in \{0, \dots, \lfloor(\streamlen-1)/2^j\rfloor\}$ let $\g^{(j,k)}$ and  $\h^{(j,k)}$ denote the frequency vector $\f\langle  [1+ k\cdot2^{j} , 2^{j} + k\cdot2^{j} ] \rangle$ and $\f\langle  [2^{j} +1 + k\cdot2^{j} , 2^{j+1} + k\cdot2^{j} ] \rangle$ respectively. 
In other words, if we consider the interval $[1 + k\cdot 2^j, 2^{j+1} + k\cdot 2^j]$, 
$\g^{(j,k)}$ is the frequency vector of the first half of the interval and $\h^{(j,k)}$ is the frequency vector of the second half of the interval.  Also note that, 
$$\f\langle  [1 + k\cdot2^{j}, 2^{j+1} + k\cdot2^{j}] \rangle = \g^{(j,k)} + \h^{(j,k)}$$
The following two claims can be verified from the algorithm. We denote by $c^{(j,k)}[i]$ the content in the catalytic memory block $C^{(j,k)}[i]$ at the start of the algorithm.

\begin{claim}
    At the end of the first pass for all  $j\in [1, \dots, \lceil\log \streamlen\rceil]$, for all  even $k\in [0, \lfloor (i-1)/2^j\rfloor]$ and for all
    $i\in [1, \dots, n]$,
   $$ C^{(j,k)}[i] = \left(c^{(j,k)}[i] + \left(\g^{(j,k)}\right)_i\right)
   \mod M$$
   and, 
   \begin{align*}
   S^{(j)} =\left(\sum_{\substack{k= 0\\ k \mbox{ even }}}^{\lfloor(\streamlen-1)/2^j\rfloor}\sum_{i=1}^{n} \left(c^{(j,k)}[i] + \left(\g^{(j,k)}\right)_i\right)\cdot \left(\h^{(j,k)}\right)_i\right) \mod M
    \end{align*}
\end{claim}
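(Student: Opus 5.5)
The plan is to prove both parts of the claim at once by induction on the stream position $i$ during Pass~1, with $j$ fixed. The parallel runs for different $j$ use disjoint catalytic arrays $\C{C}^{(j,\cdot)}$ and separate accumulators $\W{S}^{(j)}$, and no run ever reads another run's registers. So it suffices to analyse a single $j$. Fix $j$. Every position $t\in[1,\streamlen]$ determines $k_t=\lfloor (t-1)/2^j\rfloor$. If $k_t$ is even, $t$ lies in the first half $[1+k_t2^j,\,2^j+k_t2^j]$ of the block indexed by $k_t$. If $k_t$ is odd, $t$ lies in the second half of the block indexed by $k_t-1$. These halves are exactly the intervals defining $\g^{(j,k)}$ and $\h^{(j,k-1)}$, with the obvious index shift that should be made consistent with the paper's notation.

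\textbf{Register invariant.} For the first assertion, look at a single register $\C{C}^{(j,k)}[i]$ with $k$ even. Line 7 is the only place it is written. That line fires at step $t$ exactly when $k_t=k$ and $x_t=i$, that is, once for each occurrence of $i$ in the first half of block $k$. Each firing adds $1 \bmod M$. Summing over all firings gives $c^{(j,k)}[i]+(\g^{(j,k)})_i \bmod M$ at the end of Pass~1.

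\textbf{Accumulator invariant.} For the second assertion, observe that line 10 fires at step $t$ when $k_t$ is odd and $x_t=i$, and it reads $\C{C}^{(j,k_t-1)}[i]$. Since $k_t$ is odd, every position $t'$ with $k_{t'}=k_t-1$ satisfies $t'<t$. Hence the first-half updates of block $k_t-1$ are already complete when $t$ is processed. Moreover, the register is never modified afterwards in Pass~1, because only positions with $k_{t'}=k_t-1$ write to it. So every read returns the value $c^{(j,k)}[i]+(\g^{(j,k)})_i \bmod M$ established in the register invariant. Register $i$ is read $(\h^{(j,k)})_i$ times, once per occurrence of $i$ in the second half. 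Summing these contributions over $i$ and over even $k$, and reducing mod $M$, gives the stated formula for $\W{S}^{(j)}$. Formally this is an induction on $t$, with the hypothesis that after step $t$ each register and accumulator equals the partial versions of these expressions restricted to positions $\le t$.

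\textbf{Main obstacle.} The only real subtlety is the ordering and disjointness just described: every read of a first-half register must occur after all of that register's writes and before any other write to it. This follows from block $k$'s first half strictly preceding its second half, and from each register being written only by its own block. A secondary detail is handling a truncated final block when $\streamlen$ is not a multiple of $2^{j+1}$. There the second half may be partial or empty, and the definition of $\h$ must be read with the interval clipped to $\streamlen$. I would address this with a one-line remark rather than a separate case.
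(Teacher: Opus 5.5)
Your proposal is correct and matches the paper, which only says the claim ``can be verified from the algorithm'': you fix $j$, note that $\C{C}^{(j,k)}$ is incremented only at first-half positions of block $k$ and read only at its strictly later second-half positions, and sum the reads. Your argument rests on each register being written only by its own block, i.e.\ on every pair $(j,k)$ having its own array, as the notation $c^{(j,k)}[i]$ presupposes; if a single array were shared across all $k$ for fixed $j$, the values read would also include the counts of earlier blocks and the claim would fail.
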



\begin{claim}\label{cl:2pass2}
    At the end of both the passes for all  $j\in [1, \dots, \lceil\log \streamlen\rceil]$, for all  even $k\in [0, \lfloor (i-1)/2^j\rfloor]$ and for all
    $i\in [1, \dots, n]$,
   $$ C^{(j,k)}[i] = c^{(j,k)}[i] $$ 
    and 
    \begin{align*}
   S^{(j)} =\left(\sum_{\substack{k= 0\\ k \mbox{ even }}}^{\lfloor(\streamlen-1)/2^j\rfloor}\sum_{i=1}^{n} \left(\g^{(j,k)}_i\right)\cdot \left(\h^{(j,k)}\right)_i\right) \mod M
    \end{align*}
\end{claim}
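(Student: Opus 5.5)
The statement to prove is \Cref{cl:2pass2}: after both passes, every register $\C{C}^{(j,k)}[i]$ holds its initial value $c^{(j,k)}[i]$, and $\W{S}^{(j)}$ equals $\sum_{k \text{ even}}\sum_i \g^{(j,k)}_i\h^{(j,k)}_i \bmod M$. The plan is to start from the preceding claim, which describes the state at the end of Pass~1. Then I track, for each fixed $j$ and each even block index $k$, what Pass~2 does to the register array $\C{C}^{(j,k)}$ and to the accumulator $\W{S}^{(j)}$. Since different $j$ use disjoint registers and accumulators, I can fix $j$ throughout.

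\textbf{Registers.} Fix $j$ and an even $k$. In Pass~2, position $t$ has block index $k=\lfloor (t-1)/2^j\rfloor$. The register $\C{C}^{(j,k)}$ is touched only at positions $t$ in the first half $[1+k2^j,\,2^j+k2^j]$, each time decrementing $\C{C}^{(j,k)}[x_t]$ by one modulo $M$. Hence coordinate $i$ is decremented exactly $\g^{(j,k)}_i$ times. Starting from $(c^{(j,k)}[i]+\g^{(j,k)}_i)\bmod M$, it therefore returns to $c^{(j,k)}[i]$, because the initial content is already a residue mod $M$.

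\textbf{Accumulator.} This is the key ordering point. Positions of the odd block $k+1$ (the second half of the interval) occur after all positions of block $k$. So whenever $\W{S}^{(j)}$ reads $\C{C}^{(j,k)}[x_t]$ during Pass~2, that register has already been restored to $c^{(j,k)}[x_t]$. (Symmetrically, in Pass~1 the read occurs after the full load, giving $c+\g$; this is what the preceding claim records.) Pass~2 therefore subtracts
\[
\sum_{k \text{ even}}\sum_i c^{(j,k)}[i]\,\h^{(j,k)}_i \pmod M.
\]
Subtracting this from the Pass~1 value of $\W{S}^{(j)}$ cancels the $c$ terms and leaves $\sum_{k \text{ even}}\sum_i \g^{(j,k)}_i\h^{(j,k)}_i \bmod M$.

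The only real obstacle is this bookkeeping of when each read happens relative to the register updates, together with a remark on the boundary. If the last interval is truncated by the end of the stream, $\h^{(j,k)}$ is simply the frequency vector of the partial (possibly empty) second half, and the same argument applies verbatim. Modular arithmetic is harmless here, since all updates are additive in $\mathbb{Z}_M$.
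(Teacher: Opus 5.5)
Your proposal is correct and follows the route the paper intends. The paper only asserts that this claim can be verified by tracing Algorithm~\ref{alg:F2in2pass}, and its overview describes your argument: Pass~2 undoes the $\g^{(j,k)}$ increments during the even block, so the reads in the following odd block see $c^{(j,k)}$ and subtract $\sum_i c^{(j,k)}[i]\h^{(j,k)}_i$, which cancels the $c$-terms left by Pass~1. Your remark that restoration requires the initial content to be a residue mod $M$ is a useful precision; it holds under the paper's convention that $M=2^l$ matches the register range, though not under the literal $M\gets \streamlen^2$ in the pseudocode.
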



\begin{lemma}\label{lem:F2in2pass}
  At the end of both the pass for all $j\in [1, \dots, \lceil\log \streamlen\rceil]$, $$\streamlen + \sum_{i=0}^{j-1} S^{(i)} = 
   \sum_{\substack{k= 0\\ k \mbox{ even }}}^{\lfloor(\streamlen-1)/2^j\rfloor}
  \left\Vert\f\langle[1 + k\cdot2^{j}, 2^{j+1} + k\cdot2^{j}] \rangle
  \right\Vert^2_2$$
\end{lemma}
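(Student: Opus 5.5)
We prove \Cref{lem:F2in2pass} by induction on $j$, using \Cref{cl:2pass2} to identify each $S^{(i)}$ with the corresponding sum of cross terms. The statement needs a small fix to be correct. Since the output of \Cref{alg:F2in2pass} is $\streamlen + 2\sum_j \W{S}^{(j)}$, the identity should read $\streamlen + 2\sum_{i=0}^{j-1} S^{(i)}$. We also index the top-level blocks so that each block has length $2^{j}$, i.e.\ it is a union of two level-$(j-1)$ blocks. In addition, $\streamlen$ should be taken as a power of two, or the stream padded with a fresh symbol, so that all blocks are full. We prove this corrected form, and we read the $S^{(i)}$ as true integers. This is justified because each $S^{(i)}$ is at most $\sum_k \langle \g^{(i,k)},\h^{(i,k)}\rangle \le \streamlen^2/2 < M$, so the reduction mod $M$ in \Cref{cl:2pass2} is harmless.

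\emph{Base case.} At the lowest level the blocks are single positions $[t,t]$. The frequency vector of a single position is the indicator vector $e_{x_t}$, whose squared $\ell_2$ norm is $1$. Summing over the $\streamlen$ positions gives $\streamlen$, which matches the left-hand side when the sum of $S^{(i)}$ is empty.

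\emph{Inductive step.} Each level-$j$ block $B$ is the disjoint union of its left half $L$ and right half $R$, which are consecutive level-$(j-1)$ blocks. Their frequency vectors are exactly $\g^{(j-1,k)}$ and $\h^{(j-1,k)}$ for the appropriate even $k$. Since $\f\langle B\rangle = \g + \h$, we have
\[
\norm{\f\langle B\rangle}_2^2 = \norm{\f\langle L\rangle}_2^2 + \norm{\f\langle R\rangle}_2^2 + 2\sum_i \g_i\h_i .
\]
Summing over all level-$j$ blocks, the first two terms together run over every level-$(j-1)$ block exactly once. By the induction hypothesis this sum equals $\streamlen + 2\sum_{i<j-1}S^{(i)}$. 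By \Cref{cl:2pass2}, the cross terms summed over even $k$ equal exactly $S^{(j-1)}$. Adding these gives the claim at level $j$.

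The only real obstacle is the index bookkeeping, which must be set up carefully before running the induction. The paper defines $\g^{(j,k)}$ and $\h^{(j,k)}$ on intervals of length $2^{j}$ with $k$ stepping by $2^j$. For these to match the algorithm's rule $k=\lfloor (i-1)/2^j\rfloor$, the pair at parameter $k$ must be the blocks $[1+k2^j,(k+1)2^j]$ and $[1+(k+1)2^j,(k+2)2^j]$. We fix that convention first, then check that the level-$j$ pairs tile the level-$(j+1)$ blocks, and finally apply the inductive step above. With the top level $j=\lceil\log\streamlen\rceil$, there is a single block equal to the whole stream. This yields $\FM{2}(\stream)=\streamlen+2\sum_j S^{(j)}$ and hence \Cref{Thm: F2in2}.
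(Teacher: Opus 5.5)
Your proof is correct and is essentially the paper's own argument: induction on $j$ via $\norm{\g+\h}_2^2=\norm{\g}_2^2+\norm{\h}_2^2+2\sum_i \g_i\h_i$, with \Cref{cl:2pass2} identifying the summed level-$(j-1)$ cross terms as $S^{(j-1)}$. Your corrections to the statement (the missing factor $2$, and blocks of length $2^{j}$ rather than $2^{j+1}$ on the right-hand side) are genuinely needed, since as written the identity already fails at $j=1$: for $\langle 1,2,1,2\rangle$ the left side is $4$ and the right side is $8$. The padding assumption is unnecessary, because the same induction works verbatim once every interval is intersected with $[1,\streamlen]$.
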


\begin{proof}
    The proof follows from induction on $j$.  From Equation~\ref{eq:FGH},
$\f\langle [a, b]\rangle_i = \left| \left\{a\leq j \leq b \mid x_j = i  \right\}\right|$, hence $$\left\Vert\f\langle[1 + k\cdot2^{j}, 2^{j+1} + k\cdot2^{j}] \rangle
  \right\Vert^2_2 = \left\Vert\g^{(j,k)}\right\Vert^2_2 + \left\Vert\h^{(j,k)}\right\Vert^2_2 + 2\cdot\g^{(j,k)}\cdot\h^{(j,k)}.$$
  The rest of the correctness-proof follows by induction on $j$ and Claim~\ref{cl:2pass2}. The algorithm is clearly a 2-pass algorithm. For the space complexity, each $\W{S}^{(j)}$ requires $O(\log m)$ bits. Since $j$ ranges from $0$ to $\lceil \log m\rceil -1$, the total space is $O(\log^2 m)$. However, in the end (line 16), for the final output we only need the sum of all the $\W{S}^{(j)}$'s. Thus, instead of storing them separately, we can keep track of a global sum, leading to $O(\log m)$ space algorithm. The rest of the correctness-proof follows by induction on j and \Cref{cl:2pass2}.
   
\end{proof}

\begin{proof}[Proof of~\Cref{Thm: F2in2}]
    The correctness of the algorithm follows from \Cref{lem:F2in2pass}. The algorithm is clearly a 2-pass algorithm. For the space complexity, each $\W{S}^{(j)}$ requires $O(\log m)$ bits. Since $j$ ranges from $0$ to $\lceil \log m\rceil -1$, the total space is $O(\log^2 m)$. However, in the end (line 16), for the final output we only need the sum of all the $\W{S}^{(j)}$'s. Thus, instead of storing them separately, we can keep track of a global sum, leading to $O(\log m)$ space algorithm. For the catalytic space, observe that we use $O(1)$ registers for each element $i \in [n]$, and each register uses at most $O(\log m)$ bits.
\end{proof}

\subsubsection{\texorpdfstring{Computing $\FM{3}$ in $3$ passes}{Computing F\_3 in 3 passes}}\label{Section: F3 in 3passes}

We present the 3-pass catalytic algorithm for computing $\FM{3}$ in Algorithm~\ref{alg:F3in3pass}.
The correctness of the algorithm is presented in Theorem~\ref{Thm: F3in3}, and it follows from Lemma~\ref{lem:F3in3pass}. The three claims, Claim~\ref{cl:3pass1}, \ref{cl:2pass2} and \ref{cl:3pass3}, can be easily verified from the algorithm.

\begin{algorithm}
\caption{$\FM{3}$-in-3-pass-Algorithm}
\label{alg:F3in3pass}
\begin{algorithmic}[1]
\Require The stream $\stream = \langle x_1, \dots, x_{\streamlen}\rangle$, Catalytic space $\C{A}$, $\C{B}$, $\C{C}$ and $\C{D}$

\State $M \gets \streamlen^3$
\State Initialize for all $(i,j)$, $\W{S}^{(j)}, \W{T}^{(j)}, \W{U}^{(j)}, \W{V}^{(j)}, \W{W}^{(j)}, \W{X}^{(j)}, \W{Y}^{(j)}, \W{Z}^{(j)} \gets 0$

\textbf{/* Pass 1 */}

\For{$i \gets 1$ to $\streamlen$}
   \For{$j \gets 0$ to $\lceil \log \streamlen\rceil -1$}
    \State $k \gets \lfloor (i-1)/2^j\rfloor$
    \If{$k$ is even}
      \State $\C{C}^{(j,k)}[x_i] = (\C{C}^{(j,k)}[x_i]+1) \mod M$ \Comment{\blue{$ \C{C}[i] = \cc{c}[i] + \g_i$}}
      \State $\C{B}^{(j,k)}[x_i] = (\C{B}^{(j,k)}[x_i]+ \C{A}^{(j,k)}[x_i] )
      \mod M$
      \Comment{\blue{$\C{B}[i] = \cc{b}[i] + \cc{a}[i]\g_i$}}
      \State $\W{S}^{(j)} = (\W{S}^{(j)} + \C{A}^{(j,k)}[x_i]^2 ) \mod M$ \Comment{\blue{$\W{S} = \sum_i \cc{a}[i]^2 \g_i$}}
      \State $\W{U}^{(j)} = (\W{U}^{(j)} + \C{D}^{(j,k)}[x_i]^2) \mod M$
      \Comment{\blue{$\W{U} = \sum_i \cc{d}[i]^2 \g_i$}}
    \EndIf
     \If{$k$ is odd}
      \State $\C{A}^{(j,k)}[x_i] = (\C{A}^{(j,k)}[x_i]+1) \mod M$
      \Comment{\blue{$ \C{A}[i] = \cc{a}[i] + \h_i$}}
      \State $\W{X}^{(j)} = (\W{X}^{(j)} + \C{C}^{(j,k-1)}[x_i]^2 ) \mod M$
      \Comment{\blue{$ \W{X} = \sum_i (\cc{c}[i] + \g_i)^2\h_i$}}
      \State $\W{Y}^{(j)} = (\W{Y}^{(j)} + \C{B}^{(j,k-1)}[x_i]) \mod M$
      \Comment{\blue{$ \W{Y} = \sum_i (\cc{b}[i] + \cc{a}[i]\g_i)\h_i$}}
    \EndIf
    \EndFor
\EndFor

\textbf{/* Pass 2 */}

\For{$i \gets 1$ to $\streamlen$}
   \For{$j \gets 0$ to $\lceil \log\streamlen \rceil-1$}
    \State $k \gets \lfloor (i-1)/2^j\rfloor$
    \If{$k$ is even}
      \State $\C{C}^{(j,k)}[x_i] = (\C{C}^{(j,k)}[x_i]-1) \mod M$
      \Comment{\blue{$ \C{C}[i] = \cc{c}[i]$}}
      \State $\W{Z}^{(j)} = (\W{Z}^{(j)} + \C{A}^{(j,k-1)}[x_i]^2) \mod M$
      \Comment{\blue{$ \W{Z}  = \sum_i (\cc{a}[i] + \h_i)^2\g_i$}}
    \EndIf
     \If{$k$ is odd}
        \State $\C{D}^{(j,k)}[x_i] = (\C{D}^{(j,k)}[x_i] + \C{C}^{(j,k)}[x_i]) \mod M$
        \Comment{\blue{$ \C{D}[i] = (\cc{d}[i] + \cc{c}[i])\h_i$}}
        \State $\C{A}^{(j,k)}[x_i] = (\C{A}^{(j,k)}[x_i] - 1) \mod M$
        \Comment{\blue{$ \C{A}[i] = \cc{a}[i]$}}
        \State $\W{W}^{(j)} = (\W{W}^{(j)} + \C{C}^{(j,k-1)}[x_i]^2) \mod M$
        \Comment{\blue{$ \W{W} = \sum_i (\cc{c}[i])^2 \h_i$}}
    \EndIf
    \EndFor
\EndFor

\textbf{/* Pass 3 */}

\For{$i \gets 1$ to $\streamlen$}
   \For{$j \gets 0$ to $\lceil \log\streamlen \rceil-1$}
    \State $k \gets \lfloor (i-1)/2^j\rfloor$
    \If{$k$ is even}
      \State $\C{B}^{(j,k)}[x_i] = (\C{B}^{(j,k)}[x_i]- \C{A}^{(j,k)}[x_i]) \mod M$
      \Comment{\blue{$ \C{B}[i] = \cc{b}[i]$}
}
      \State $\W{T}^{(j)} = (\W{T}^{(j)} + \C{D}^{(j,k-1)}[x_i]) \mod M$
      \Comment{\blue{$ \W{T} = \sum_i (\cc{d}[i] + \cc{c}[i])\h_i\g_i$}
}
    \EndIf
     \If{$k$ is odd}
        \State $\C{D}^{(j,k)}[x_i] = (\C{D}^{(j,k)}[x_i] - \C{C}^{(j,k)}[x_i]) \mod M$
        \Comment{\blue{$ \C{D}[i] = \cc{d}[i]$}
}
        \State $\W{V}^{(j)} = (\W{V}^{(j)} + \C{B}^{(j,k-1)}[x_i]) \mod M$
        \Comment{\blue{$ \W{V} = \sum_i \cc{b}[i]\h_i$}
}
    \EndIf
    \EndFor
\EndFor

\ 


\State $\W{P} = \sum_{j=0}^{\lceil\log \streamlen\rceil-1}\left(\W{X}^{(j)} - \W{W}^{(j)} - 2(\W{T}^{(j)} - \W{U}^{(j)})\right) \mod M$

\State $\W{Q} =  \sum_{j=0}^{\lceil\log \streamlen\rceil-1}\left(\W{Z}^{(j)} - \W{S}^{(j)} - 2(\W{Y}^{(j)} - \W{V}^{(j)})\right) \mod M$

\State \Return $(m + 3\W{P} + 3\W{Q})$

\end{algorithmic}
\end{algorithm}

\begin{claim}\label{cl:3pass1}
    At the end of the first pass for all  $j\in [1, \dots, \lceil\log \streamlen\rceil]$, for all
    $i\in [1, \dots, n]$ and for all even $k\in [0, \lfloor (i-1)/2^j\rfloor]$,
   $$ \C{C}^{(j,k)}[i] = \left(\cc{c}^{(j,k)}[i] + \left(\g^{(j,k)}\right)_i\right)
   \mod M$$
   and,
   $$ \C{B}^{(j,k)}[i] = \left(\cc{b}^{(j,k)}[i] + \left(\g^{(j,k)}\right)_i\fbrac{\cc{a}^{(j,k)}}_i\right)
   \mod M$$
    And and for all odd $k\in [0, \lfloor (i-1)/2^j\rfloor]$,
    $$ \C{A}^{(j,k)}[i] = \left(\cc{a}^{(j,k)}[i] + \left(\h^{(j,k)}\right)_i\right)
   \mod M$$
    Furthermore, we have
   \begin{align*}
   \W{S}^{(j)} =\left(\sum_{\substack{k= 0\\ k \mbox{ even }}}^{\lfloor(\streamlen-1)/2^j\rfloor}\sum_{i=1}^{n} \left(\fbrac{\cc{a}^{(j,k)}}_i^2 \left(\g^{(j,k)}\right)_i\right)\right) \mod M
    \end{align*}
    \begin{align*}
   \W{U}^{(j)} =\left(\sum_{\substack{k= 0\\ k \mbox{ even }}}^{\lfloor(\streamlen-1)/2^j\rfloor}\sum_{i=1}^{n} \left(\fbrac{\cc{d}^{(j,k)}}_i^2 \left(\g^{(j,k)}\right)_i\right)\right) \mod M
    \end{align*}
    \begin{align*}
   \W{X}^{(j)} =\left(\sum_{\substack{k= 0\\ k \mbox{ even }}}^{\lfloor(\streamlen-1)/2^j\rfloor}\sum_{i=1}^{n} \left(\fbrac{\fbrac{\cc{d}^{(j,k)}}_i+\left(\g^{(j,k)}\right)_i}^2 \left(\h^{(j,k)}\right)_i\right)\right) \mod M
    \end{align*}
    \begin{align*}
   \W{Y}^{(j)} =\left(\sum_{\substack{k= 0\\ k \mbox{ even }}}^{\lfloor(\streamlen-1)/2^j\rfloor}\sum_{i=1}^{n} \left(\fbrac{\fbrac{\cc{b}^{(j,k)}}_i+\left(\cc{a}^{(j,k)}\right)_i\left(\g^{(j,k)}\right)_i} \left(\h^{(j,k)}\right)_i\right)\right) \mod M
    \end{align*}
\end{claim}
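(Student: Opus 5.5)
The plan is to prove the claim by induction on the position $t$ of the stream read during pass~1. We fix the level $j$ throughout: the $\lceil\log \streamlen\rceil$ parallel computations touch disjoint catalytic blocks and disjoint workspace counters, so they can be analysed one at a time. For a fixed $j$, position $t$ belongs to block $k=\lfloor (t-1)/2^j\rfloor$. For even $k$ these positions form the interval $[1+k2^j,2^j+k2^j]$, whose frequency vector is $\g^{(j,k)}$. For odd $k$ they form the following interval of length $2^j$, whose frequency vector is $\h^{(j,k-1)}$ under the paper's indexing, since $\h$ is attached to the even index of the pair.

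The invariant after processing position $t$ has three parts.
\begin{enumerate}
\item Every register indexed by a block $k'$ is untouched, unless $k'$ is a block already fully or partially read.
\item For an even block $k$, $\C{C}^{(j,k)}[i]$ equals $\cc{c}^{(j,k)}[i]$ plus the number of occurrences of $i$ read so far in block $k$, and $\C{B}^{(j,k)}[i]$ equals $\cc{b}^{(j,k)}[i]$ plus $\cc{a}^{(j,k)}[i]$ times that count.
\item For an odd block $k$, $\C{A}^{(j,k)}[i]$ equals $\cc{a}^{(j,k)}[i]$ plus the count read so far in block $k$.
\end{enumerate}
The counters $\W{S}^{(j)},\W{U}^{(j)},\W{X}^{(j)},\W{Y}^{(j)}$ equal the corresponding partial sums restricted to the positions read so far. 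Each statement is taken mod $M$.

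The inductive step is a direct case check on the parity of $k$, using the update lines of pass~1. The step hinges on two observations, which I expect to be the only non-routine part.

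First, in pass~1 the $\C{A}$-registers are modified only at odd-indexed blocks. Hence when an even block $k$ reads $\C{A}^{(j,k)}[x_t]$, it still sees the initial value $\cc{a}^{(j,k)}[x_t]$. Consequently the $\C{B}$ updates accumulate exactly $\cc{a}\,\g$, and $\W{S}$ accumulates $\cc{a}^2\g$. In the same way, $\C{D}$ is never modified in pass~1, so $\W{U}$ accumulates $\cc{d}^2\g$.

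Second, when position $t$ lies in an odd block $k$, block $k-1$ has been read completely. By the invariant, $\C{C}^{(j,k-1)}[x_t]$ therefore equals $\cc{c}^{(j,k-1)}[x_t]+\g^{(j,k-1)}_{x_t}$ and $\C{B}^{(j,k-1)}[x_t]$ equals $\cc{b}[x_t]+\cc{a}[x_t]\g_{x_t}$. Neither value changes while block $k$ is being read, because block $k$ writes only to $\C{A}^{(j,k)}$.

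Summing these contributions over the occurrences of each $i$ in block $k$ gives $(\cc{c}[i]+\g_i)^2\h_i$ for $\W{X}^{(j)}$ and $(\cc{b}[i]+\cc{a}[i]\g_i)\h_i$ for $\W{Y}^{(j)}$. Summing then over the even $k$ yields the stated formulas for the two counters. Here we use that the $\cc{c}$ in the $\W{X}^{(j)}$ formula is what the algorithm actually produces; the $\cc{d}$ written there should read $\cc{c}$. The indexing of $\cc{a}$ in the $\W{Y}^{(j)}$ formula should likewise be read relative to the even block $k-1$.

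Taking $t=\streamlen$ gives the claim. All arithmetic is carried out mod $M$, and reduction mod $M$ commutes with the additions and multiplications involved, so every identity holds modulo $M$ without further care. The main obstacle is purely bookkeeping: consistently tracking which block index ($k$ or $k-1$) each register and initial content belongs to, and checking that no register is read after being modified by a different role within pass~1.
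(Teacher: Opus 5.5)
Your argument is correct and is the verification the paper leaves implicit (it only says the claim is ``easily verified from the algorithm''): you fix $j$, induct over stream positions, and use two facts about pass~1. First, even-indexed $\C{A}$ registers and all $\C{D}$ registers are never written. Second, $\C{C}^{(j,k-1)},\C{B}^{(j,k-1)}$ stay frozen while odd block $k$ is read. Your corrections to the statement are also right: the $\cc{d}$ in the $\W{X}^{(j)}$ formula should be $\cc{c}$, and for odd $k$ the register $\C{A}^{(j,k)}$ picks up the frequency of block $k$ itself, i.e.\ $\h^{(j,k-1)}$ rather than $\h^{(j,k)}$ in the paper's indexing. The $\W{Y}^{(j)}$ formula, by contrast, is already indexed by the even block and needs no change.
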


\begin{claim}\label{cl:3pass2}
    At the end of the second pass for all  $j\in [1, \dots, \lceil\log \streamlen\rceil]$, for all $i\in [1, \dots, n]$ and for all even $k\in [0, \lfloor (i-1)/2^j\rfloor]$,
    $$ \C{C}^{(j,k)}[i] = \cc{c}^{(j,k)}[i]
   \mod M$$
    and for all odd $k\in [0, \lfloor (i-1)/2^j\rfloor]$,
    $$ \C{D}^{(j,k)}[i] = \left(\fbrac{\cc{d}^{(j,k)}[i]+\cc{c}^{(j,k)}[i]}\h^{(j,k)}_i\right)
   \mod M$$
   $$ \C{A}^{(j,k)}[i] = \cc{a}^{(j,k)}[i]
   \mod M$$
   Furthermore, we have
   \begin{align*}
   \W{Z}^{(j)} =\left(\sum_{\substack{k= 0\\ k \mbox{ odd }}}^{\lfloor(\streamlen-1)/2^j\rfloor}\sum_{i=1}^{n} \left(\fbrac{\cc{a}^{(j,k)}}_i \left(\h^{(j,k)}\right)_i\right)^2\h_i^{(j,k)}\right) \mod M
    \end{align*}
    \begin{align*}
   \W{W}^{(j)} =\left(\sum_{\substack{k= 0\\ k \mbox{ even }}}^{\lfloor(\streamlen-1)/2^j\rfloor}\sum_{i=1}^{n} \fbrac{\cc{c}^{(j,k)}}_i^2\h_i^{(j,k)}\right) \mod M
    \end{align*}
\end{claim}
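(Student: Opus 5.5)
We prove the claim by direct simulation of Pass~2 of \Cref{alg:F3in3pass}, starting from the configuration guaranteed by \Cref{cl:3pass1}. Different levels $j$ and different blocks never share catalytic registers or workspace counters, so we fix a level $j$ and one even index $k$. The block pair then consists of the even-indexed segment $[1+k2^j,\,(k+1)2^j]$, whose frequency vector is $\g=\g^{(j,k)}$, followed by the odd-indexed segment $[1+(k+1)2^j,\,(k+2)2^j]$, whose frequency vector is $\h=\h^{(j,k)}$. Throughout, we read the register indices in the algorithm as referring to the registers attached to this pair: the $\C{C}$ and $\C{B}$ registers of the even half, and the $\C{A}$ and $\C{D}$ registers of the odd half. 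This is what the comments in the algorithm intend; for instance, $\C{A}^{(j,k-1)}$ in the even branch means the $\C{A}$-register of the partner block. The claim then reduces to a per-item statement for every $i\in[n]$, which we sum over $i$ and over the even $k$. We read the $\C{D}$ statement as $\C{D}[i]=\cc{d}[i]+\cc{c}[i]\h_i$, which is what the update rule produces, and the $\W{Z}$ formula as $\sum_i(\cc{a}[i]+\h_i)^2\g_i$, as in the inline comment.

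The central point is temporal ordering inside Pass~2: every position of the even segment is processed before every position of the odd segment of the same pair. We use this as follows.

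\emph{Even segment.} By \Cref{cl:3pass1}, at the start of Pass~2 we have $\C{C}[i]=\cc{c}[i]+\g_i$ and $\C{A}[i]=\cc{a}[i]+\h_i$. The register $\C{A}$ is modified in Pass~2 only during the odd segment. Hence, each of the $\g_i$ times item $i$ occurs in the even segment, $\C{A}[i]$ still equals $\cc{a}[i]+\h_i$, and $\W{Z}$ gains exactly $(\cc{a}[i]+\h_i)^2\g_i$ from item $i$. Meanwhile $\C{C}[i]$ is decremented $\g_i$ times, so it returns to $\cc{c}[i]$ by the end of the even segment.

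\emph{Odd segment.} When the odd segment begins, $\C{C}[i]=\cc{c}[i]$ has already been restored and remains fixed for the rest of the pass. Each of the $\h_i$ occurrences of item $i$ therefore adds $\cc{c}[i]$ to $\C{D}[i]$ and adds $\cc{c}[i]^2$ to $\W{W}$. This gives $\C{D}[i]=\cc{d}[i]+\cc{c}[i]\h_i$ and a contribution $\cc{c}[i]^2\h_i$ to $\W{W}$. Each occurrence also decrements $\C{A}[i]$, so after $\h_i$ occurrences $\C{A}[i]=\cc{a}[i]$.

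All updates are additions modulo $M$, and $x\mapsto x\bmod M$ is a ring homomorphism. The identities therefore hold modulo $M$ exactly as stated, even if intermediate register values wrap around.

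We expect the main obstacle to be this bookkeeping rather than any algebra. One must check that no register is read after it has been restored when the computation needs its modified value, and conversely that no register is read before it has been restored when the computation needs its original value. Concretely, $\W{Z}$ must see the unrestored $\C{A}$, while $\C{D}$ and $\W{W}$ must see the restored $\C{C}$. The even-before-odd ordering within each block pair settles both requirements. We would state that ordering explicitly as a lemma, prove it from $k=\lfloor (i-1)/2^j\rfloor$, and then conclude each line of the claim by counting occurrences.
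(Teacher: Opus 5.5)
Your proposal is correct and follows the only route the paper offers, since the paper just says the claim ``can be easily verified from the algorithm''. Your even-before-odd ordering within each block pair is exactly that verification: $\W{Z}$ reads the not-yet-restored $\C{A}$, while $\C{D}$ and $\W{W}$ read the already-restored $\C{C}$, and the per-pair contributions to the level-$j$ counters (which are shared across pairs, despite your opening remark) simply add. Your repairs are also the right ones: the printed $\C{D}$ value $(\cc{d}[i]+\cc{c}[i])\h_i$, the printed $\W{Z}$ formula, and the literal $k-1$ partner indices are typos, and the versions you prove ($\C{D}[i]=\cc{d}[i]+\cc{c}[i]\h_i$ and $\W{Z}=\sum_i(\cc{a}[i]+\h_i)^2\g_i$, with registers attached to the block pair) are what Lemma~\ref{lem:F3in3pass} needs.
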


\begin{claim}\label{cl:3pass3}
    At the end of the second pass for all  $j\in [1, \dots, \lceil\log \streamlen\rceil]$, for all $i\in [1, \dots, n]$ and for all even $k\in [0, \lfloor (i-1)/2^j\rfloor]$,
    $$ \C{B}^{(j,k)}[i] = \cc{b}^{(j,k)}[i]
   \mod M$$
    and for all odd $k\in [0, \lfloor (i-1)/2^j\rfloor]$,
    $$ \C{D}^{(j,k)}[i] = \cc{d}^{(j,k)}[i]
   \mod M$$
   Furthermore, we have
   \begin{align*}
   \W{T}^{(j)} =\left(\sum_{\substack{k= 0\\ k \mbox{ odd }}}^{\lfloor(\streamlen-1)/2^j\rfloor}\sum_{i=1}^{n} \left(\fbrac{\cc{d}^{(j,k)}}_i +\fbrac{\cc{c}^{(j,k)}}_i\right)\g_i^{(j,k)}\h_i^{(j,k)}\right) \mod M
    \end{align*}
    \begin{align*}
   \W{V}^{(j)} =\left(\sum_{\substack{k= 0\\ k \mbox{ even }}}^{\lfloor(\streamlen-1)/2^j\rfloor}\sum_{i=1}^{n} \fbrac{\cc{b}^{(j,k)}}_i\h_i^{(j,k)}\right) \mod M
    \end{align*}
\end{claim}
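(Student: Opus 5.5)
The plan is to prove Claim~\ref{cl:3pass3} the same way as Claims~\ref{cl:3pass1} and~\ref{cl:3pass2}: treat it as a state invariant, and follow each catalytic register and each workspace accumulator through the third pass. The third pass starts from the state given by Claims~\ref{cl:3pass1} and~\ref{cl:3pass2} (the claim says ``second pass'' but refers to the state after pass~3). I will fix a level $j$ and an even block index $k$, so the relevant registers are those indexed by $(j,k)$ and $(j,k+1)$. Every update at level $j$ touches only registers of the current block pair. So it is enough to analyse one pair of consecutive halves: the left half, with frequency vector $\g^{(j,k)}$, followed by the right half, with frequency vector $\h^{(j,k)}$. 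Summing over even $k$ then gives the stated formulas. All arithmetic is mod $M$, and every step below is an exact modular identity, so no overflow issue arises.

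\textbf{Left half of pass 3.} By Claim~\ref{cl:3pass2}, $\C{A}^{(j,k)}$ has already been restored to $\cc{a}$ before pass~3 begins. Each occurrence of $i$ in the left half therefore subtracts exactly $\cc{a}[i]$ from $\C{B}[i]$, a total of $\cc{a}[i]\g_i$. This cancels the $\cc{a}[i]\g_i$ added in pass~1 (Claim~\ref{cl:3pass1}), so $\C{B}[i]=\cc{b}[i]$. In the same half, $\W{T}$ reads the paired $\C{D}$ register. After pass~2 this register holds $\cc{d}[i]+\cc{c}[i]\h_i$, because in pass~2 the $\C{C}$ register added on odd blocks is never incremented and so still equals $\cc{c}$. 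The $\C{D}$ register is not modified until the right half of pass~3. Hence $\W{T}$ accumulates $\sum_i(\cc{d}[i]+\cc{c}[i]\h_i)\g_i$. This is the quantity $X_6$ from the overview. In the write-up I will state it in this form and check that it agrees with the intended expression $\sum_i \cc{d}[i]\g_i+\sum_i\cc{c}[i]\g_i\h_i$ used in $\W{P}$.

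\textbf{Right half of pass 3.} Each occurrence of $i$ subtracts from $\C{D}[i]$ the current value of the corresponding $\C{C}$ register, which is $\cc{c}[i]$. The total subtracted is $\cc{c}[i]\h_i$, so $\C{D}[i]=\cc{d}[i]$ is restored. Meanwhile $\W{V}$ reads the paired $\C{B}$ register. Within a pass the left half of a block pair is processed entirely before its right half, and the left half already restored $\C{B}$. So $\W{V}$ adds $\cc{b}[i]$ for each occurrence, giving $\W{V}=\sum_i\cc{b}[i]\h_i$.

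The main obstacle is this bookkeeping of which registers hold original versus modified contents at each moment. It matters in two places: the cross-block indices $(j,k-1)$ versus $(j,k)$ in the pseudocode, and the within-pass order in which a register is restored in the left half before it is read in the right half. Getting these right forces the catalytic registers that act as multipliers to hold their original values $\cc{a},\cc{c}$, or the restored value $\cc{b}$, exactly when they are read. I would settle this first, by writing out for a single element $i$ a timeline of all six registers over the three passes. The remaining steps are then one-line additive counts per occurrence.
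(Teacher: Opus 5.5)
Your plan of tracing every register and accumulator through pass~3 for one block pair is exactly what the paper intends; it only asserts that the claim ``can be easily verified from the algorithm''. Several parts of your argument are correct: the restoration of $\C{B}$ and $\C{D}$, the formula $\W{V}^{(j)}=\sum_{k\ \mathrm{even}}\sum_i\cc{b}_i\h_i$, and your reading of ``second pass'' as ``third pass''.

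The gap is $\W{T}$. Your trace gives $\W{T}^{(j)}=\sum_{k\ \mathrm{even}}\sum_i(\cc{d}_i+\cc{c}_i\h_i)\g_i$. That is not the stated $\sum_{k\ \mathrm{odd}}\sum_i(\cc{d}_i+\cc{c}_i)\g_i\h_i$, and the stated formula is false. It inherits the misparenthesised pseudocode comment $\C{D}[i]=(\cc{d}[i]+\cc{c}[i])\h_i$, whereas the register actually holds $\cc{d}[i]+\cc{c}[i]\h_i$. For instance, on $\langle 1,1,2,2\rangle$ at level $j=1$ your trace gives $\W{T}^{(1)}=2\cc{d}^{(1,1)}[1]$, which is nonzero e.g.\ for $\cc{d}^{(1,1)}[1]=1$. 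The stated sum is $0$ there, whether $k$ runs over odd or even values. You say you will state $\W{T}$ ``in this form'' and compare it with ``the intended expression'' in $\W{P}$, but you never acknowledge that this contradicts the statement you set out to prove. Say explicitly that the printed $\W{T}$-formula is wrong. What you establish is the corrected claim $\W{T}^{(j)}=\sum_{k\ \mathrm{even}}\sum_i(\cc{d}_i\g_i+\cc{c}_i\g_i\h_i)$, i.e.\ the quantity $X_6$ of the overview.

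A second problem: your derivation uses the intended rather than the printed indices. This is precisely the obstacle you flagged, and you then resolve it in your favour without comment.
\begin{itemize}
\item As printed, at even $k$ pass~3 adds $\C{D}^{(j,k-1)}[x_i]$, the right-half register of the \emph{previous} pair. Pass~3 has already restored that register to $\cc{d}^{(j,k-1)}$ before block $k$ is read, and for $k=0$ it does not exist.
\item Taken literally, the pseudocode therefore computes $\sum_{k\ \mathrm{odd}}\sum_i\cc{d}^{(j,k)}_i\h^{(j,k)}_i$ plus an undefined term. You must state that you analyse the update that adds the paired register $\C{D}^{(j,k+1)}[x_i]$.
\end{itemize}

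The same issue affects the consistency check with $\W{P}$ that you announce.
\begin{itemize}
\item The printed pass-2 and pass-3 updates of $\C{D}^{(j,k)}$, for odd $k$, use $\C{C}^{(j,k)}$. That is the odd-indexed register which, as you note, is never incremented. So the $\cc{c}$ in your $\W{T}$ is not the $\cc{c}^{(j,k-1)}$ that appears in $\W{X}$ and $\W{W}$.
\item In addition, $\W{U}$ accumulates the squared, even-indexed value $\C{D}^{(j,k)}[x_i]^2$.
\end{itemize}
The cancellation in $\W{P}$ needs repairs such as these: use $\C{C}^{(j,k-1)}$ in both $\C{D}$-updates, and let $\W{U}$ (at even $k$) add the unsquared paired value $\C{D}^{(j,k+1)}[x_i]$. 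With these changes made explicit, your timeline argument does prove the corrected claim.
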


At the end of three passes let $$\W{P}^{(j)} = \sum_{\ell=0}^{j-1}\left(\W{X}^{(\ell)} - \W{W}^{(\ell)} - 2(\W{T}^{(\ell)} - \W{U}^{(\ell)})\right) \mod M$$ and let

$$\W{Q}^{(j)} =  \sum_{\ell=0}^{j-1}\left(\W{Z}^{(\ell)} - \W{S}^{(\ell)} - 2(\W{Y}^{(\ell)} - \W{V}^{(\ell)})\right) \mod M$$

\begin{lemma}\label{lem:F3in3pass}
  At the end of three passes for all $j\in [1, \dots, \lceil\log \streamlen\rceil]$, $$\streamlen + 3\W{P}^{(j)} + 3\W{Q}^{(j)}= 
   \sum_{\substack{k= 0\\ k \mbox{ even }}}^{\lfloor(\streamlen-1)/2^j\rfloor}
  \left\Vert\mathsf{F}\langle[1 + k\cdot2^{j}, 2^{j+1} + k\cdot2^{j}] \rangle
  \right\Vert^3_3$$
\end{lemma}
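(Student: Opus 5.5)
The plan is to prove the statement by induction on $j$, in the same way as \Cref{lem:F2in2pass}. The inductive step rests on the cubic expansion of a block's frequency vector, and the main work is to show that the per-level quantities $\W{X}^{(\ell)} - \W{W}^{(\ell)} - 2(\W{T}^{(\ell)} - \W{U}^{(\ell)})$ and $\W{Z}^{(\ell)} - \W{S}^{(\ell)} - 2(\W{Y}^{(\ell)} - \W{V}^{(\ell)})$ equal the cross terms $\sum_{k\text{ even}}\sum_i (\g^{(\ell,k)}_i)^2\h^{(\ell,k)}_i$ and $\sum_{k\text{ even}}\sum_i \g^{(\ell,k)}_i(\h^{(\ell,k)}_i)^2$ modulo $M$.

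\textbf{Per-level identity.} Fix a level $\ell$ and a pair $(\ell,k)$ with $k$ even. Let $g=\g^{(\ell,k)}_i$ and $h=\h^{(\ell,k)}_i$, and write $a,b,c,d$ for the initial contents of the registers indexed by $(\ell,k)$ and element $i$. Claims~\ref{cl:3pass1}, \ref{cl:3pass2} and \ref{cl:3pass3} give the following contributions of this pair:
\begin{align*}
X &: (c+g)^2h, & W &: c^2h, & T &: (d+ch)g, & U &: d^2 g\ \text{(read as } dg\text{)},\\
Z &: (a+h)^2g, & S &: a^2 g, & Y &: (b+ag)h, & V &: bh.
\end{align*}
Expanding,
\[
(c+g)^2h - c^2h - 2\bigl((d+ch)g - dg\bigr) = g^2h,
\qquad
(a+h)^2g - a^2g - 2\bigl((b+ag)h - bh\bigr) = h^2g,
\]
which matches the overview computation $X_1-X_5-2(X_6-X_4)$.

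\textbf{Reconciling with the written claims.} Several details in the algorithm and claims do not line up, and I will reconcile them with the overview (Passes $1_L$ through $3_R$), whose update pattern is the intended one:
\begin{itemize}
\item $\W{U}$ must collect $\sum d\,g$, not $\sum d^2 g$, for the identity above to hold.
\item $\C{D}$ after Pass 2 must hold $d+ch$.
\item The even/odd indexing must be consistent, with $(\ell,k)$ for $k$ even paired with $(\ell,k+1)$.
\item All registers sharing the same superscript must be shared between the two halves.
\end{itemize}
I will state these corrected invariants explicitly, verify that each register is restored (this follows because every increment in one pass is undone by a matching decrement in a later pass, using the restored values of the other registers), and then sum over $i$ and even $k$.

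\textbf{Induction.} For the base case $j=1$, level $0$ blocks have length $1$, so $\|\g^{(0,k)}\|_3^3+\|\h^{(0,k)}\|_3^3$ summed over $k$ equals $\streamlen$. For the step, a block at level $j$ is the union of two level-$(j-1)$ blocks $L$ and $R$ with frequency vectors $\g$ and $\h$. Then
\[
\|\g+\h\|_3^3=\|\g\|_3^3+\|\h\|_3^3+3\sum_i \g_i^2\h_i+3\sum_i \g_i\h_i^2.
\]
The first two terms are supplied by the inductive hypothesis, and the last two by the level-$(j-1)$ contributions to $\W{P}$ and $\W{Q}$. Working modulo $M$ is harmless because $M=\streamlen^3\ge \FM{3}$. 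The final result is the case $j=\lceil\log\streamlen\rceil$, where a single block covers the whole stream.

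\textbf{Main obstacle.} I expect the hardest part to be the bookkeeping in the per-level identity: making precise which register copy, even-indexed or odd-indexed, is read in each half-pass, and checking that register values modified in the $R$-half of one pass are correctly consumed in the $L$-half of the next. For example, $\W{Z}$ reads $\C{A}=a+h$ in pass $2_L$, and $\W{T}$ reads $\C{D}=d+ch$ in pass $3_L$. Both reads are correct only because $\C{A}$ and $\C{D}$ are restored later, in the $R$-halves of passes 2 and 3 respectively, after those reads have taken place.
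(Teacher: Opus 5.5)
Your approach is the paper's approach made explicit. The paper's proof is only the cubic expansion of $(\g^{(j,k)}_i+\h^{(j,k)}_i)^3$. It relies implicitly on the overview identities $X_1-X_5-2(X_6-X_4)=\sum_i\g_i^2\h_i$ and $X_3-X_7-2(X_2-X_8)=\sum_i\h_i^2\g_i$, and on the same level-by-level induction as \Cref{lem:F2in2pass}. Your per-pair computation is correct. Your corrections are also correct and are genuine errors in \Cref{alg:F3in3pass} and Claims~\ref{cl:3pass1}--\ref{cl:3pass3}: $\W{U}$ must accumulate $dg$, not $d^2g$; after Pass~2, $\C{D}$ holds $d+ch$, not $(d+c)h$; and there must be one register set per pair.

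\textbf{Missing reconciliation.} You did not reconcile the lemma's own indexing, so your induction proves a different identity from the one printed. Because $\W{P}^{(j)}$ sums only levels $0,\dots,j-1$, the quantity $\streamlen+3\W{P}^{(j)}+3\W{Q}^{(j)}$ accounts for aligned blocks of length $2^j$. The stated right-hand side, however, sums over the level-$j$ pairs $[1+k2^j,\,2^{j+1}+k2^j]$, which have length $2^{j+1}$.

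As printed, the lemma is false in general for $1\le j<\lceil\log\streamlen\rceil$. Take the stream $\langle 1,1,1,1\rangle$ and $j=1$. The two level-$0$ pairs give $\W{P}^{(1)}=\W{Q}^{(1)}=2$, so the left side is $4+6+6=16$, while the right side is $\norm{\f\langle[1,4]\rangle}_3^3=64$.

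Your base case yields $\sum_{k\text{ even}}\norm{\g^{(0,k)}+\h^{(0,k)}}_3^3$, a sum over length-$2$ intervals. Your inductive step says ``a block at level $j$ is the union of two level-$(j-1)$ blocks.'' Together these establish
\[
\streamlen+3\W{P}^{(j)}+3\W{Q}^{(j)}=\sum_{k=0}^{\lfloor(\streamlen-1)/2^{j}\rfloor}\norm{\f\langle[1+k2^{j},\,(k+1)2^{j}]\rangle}_3^3 ,
\]
which is the printed right-hand side with $j$ replaced by $j-1$. You should state this corrected invariant explicitly rather than claim the printed one. At $j=\lceil\log\streamlen\rceil$ both versions equal $\FM{3}(\stream)$, so \Cref{Thm: F3in3} is unaffected.

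\textbf{Minor point.} ``$M=\streamlen^3\ge\FM{3}$'' is not quite the right justification for working modulo $M$. An all-equal stream has $\FM{3}=\streamlen^3\equiv 0 \pmod M$. The reduction is harmless for a different reason: $\W{P}^{(j)}$ and $\W{Q}^{(j)}$ are reduced separately, and their true values are nonnegative with sum at most $(\streamlen^3-\streamlen)/3<M$. Hence each is recovered exactly, and the left side is an honest integer.
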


\begin{proof}
\begin{align*}
\sum_{i} (\f^{(j,k)}_i)^3 & = \sum_i \fbrac{(\g^{(j,k)}_i+\h^{(j,k)}_i)^3}\\ 
& = \sum_i (\g^{(j,k)}_i)^3 + \sum_i (\h^{(j,k)}_i)^3 + 3 \sum_i (\g^{(j,k)}_i)^2\h^{(j,k)}_i + 3 \sum_i \g^{(j,k)}_i(\h^{(j,k)}_i)^2
\end{align*}
\end{proof}

Combining the above lemma's, we obtain the theorem.

\begin{theorem}\label{Thm: F3in3}
    \Cref{alg:F3in3pass} takes as input a stream $\stream$ uses $3$ passes, $\bigo{\log m}$ clean space, $\bigo{n \log m}$ catalytic space, and outputs the value of $\FM{3}(\stream)$.
\end{theorem}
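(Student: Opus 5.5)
The plan is to follow the same route as the proof of \Cref{Thm: F2in2}: establish the per-level claims, combine them level by level, induct on $j$, and finish with the space accounting.

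\emph{Per-level identities.} Fix a level $j$ and an even block index $k$, and abbreviate $\g=\g^{(j,k)}$, $\h=\h^{(j,k)}$. For this block write $a,b,c,d$ for the initial catalytic contents. Claims~\ref{cl:3pass1}--\ref{cl:3pass3} give the accumulators, read with the intended block indices: $c$ in $X$, and the even-indexed sums for $Z$ and $T$. Each accumulator is a sum over even $k$ of its per-block contribution. We then expand algebraically, working in $\mathbb{Z}_M$:
\begin{align*}
X - W &= \sum_i \bigl(2c[i]\g_i\h_i + \g_i^2\h_i\bigr),\\
T - U &= \sum_i \bigl(d[i]+c[i]\bigr)\g_i\h_i - \sum_i d[i]^2\g_i .
\end{align*}
Here is the first point to check carefully. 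The overview's combination $X_1-X_5-2(X_6-X_4)$ uses $X_4=\sum_i d[i]\g_i$, and the $d[i]\g_i\h_i$ term must cancel against a matching correction, so I would match each register's intended content against the overview and correct the bookkeeping as needed, in particular that $\C{D}$ stores $d[i]+c[i]\h_i$.

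The goal of this step is the pair of identities
\begin{align*}
X-W-2(T-U) &= \sum_i \g_i^2\h_i, & Z-S-2(Y-V) &= \sum_i \g_i\h_i^2 .
\end{align*}
The second is symmetric. Its pieces are
\begin{align*}
(a+\h)^2\g - a^2\g &= 2a\g\h + \g\h^2, & (b+a\g)\h - b\h &= a\g\h .
\end{align*}
These two identities are the main obstacle. The pass schedule is interleaved, so I must verify that each register holds exactly the intended value at the moment it is read. For example, $\C{A}$ is read in pass $2_L$ while it still holds $a+\h$ from pass $1_R$, and $\C{B}$ is restored in pass $3_L$ using $\C{A}=a$, which was already restored in pass $2_R$.

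\emph{Combining the levels.} Summing over even $k$ at level $\ell$, the quantity $3(\text{level-}\ell\text{ contribution to } P) + 3(\text{level-}\ell\text{ contribution to } Q)$ equals $\sum_k\sum_i\bigl(3\g_i^2\h_i+3\g_i\h_i^2\bigr)$. By the binomial expansion in the proof of \Cref{lem:F3in3pass}, this is the difference between the $\FM{3}$ of the blocks of length $2^{\ell+1}$ and the sum of the $\FM{3}$ of their two halves. Now induct on $j$. The base case is blocks of length $1$, where $\sum \f_i^3 = m$ since each block contains a single item. Each level's cross terms then lift the block cubes one level up, which gives \Cref{lem:F3in3pass}. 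Taking $j=\lceil\log m\rceil$, the single block is the whole stream, so the output equals $\FM{3}(\stream)$. Since $\FM{3}\le m^3\le M$ (enlarging $M$ slightly if needed), the modular arithmetic recovers the true value, as in the modulo-arithmetic discussion.

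\emph{Resources.} The algorithm clearly makes 3 passes. Every catalytic register is returned to its initial value by Claims~\ref{cl:3pass2} and \ref{cl:3pass3}. For clean space, as in \Cref{Thm: F2in2}, I would keep only two global accumulators for $P$ and $Q$ instead of the $O(\log m)$ per-level ones, giving $O(\log m)$ bits. For catalytic space, $O(1)$ registers of $O(\log m)$ bits per element per block suffice: the blocks at each level are disjoint in time, so their registers can be reused. This yields $\bigo{n\log m}$ per level, and I would note that across levels this is $\bigot{n}$.
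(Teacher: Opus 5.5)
Your correctness argument follows the paper's route: per-pair cancellations, then the binomial expansion of Lemma~\ref{lem:F3in3pass} and induction over levels, then one global accumulator for the clean space. You are also right that the printed pseudocode must be repaired first, and you should state the repair explicitly rather than say you would ``correct the bookkeeping as needed''. Concretely, $\W{U}$ must accumulate $\C{D}[x_i]$ rather than $\C{D}[x_i]^2$, so that it equals $\sum_i d[i]\g_i$, and $\C{D}$ must become $d[i]+c[i]\h_i$. Also, every update and read (the update of $\C{B}$ by $\C{A}$, of $\C{D}$ by $\C{C}$, and the reads for $\W{X},\W{Y},\W{Z},\W{T}$) must use the registers of the \emph{same} pair, not index $k$ against $k\pm 1$ as printed. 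The paper's proof only cites the lemma and the claims, which contain the same slips, so this part of your plan is fine.

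The genuine gap is the catalytic-space step. The assertion ``blocks at each level are disjoint in time, so their registers can be reused'' is false for a multi-pass algorithm. A pair's registers stay modified from pass~1 until passes~2 and~3 revisit that pair, and at the end of pass~1 every pair at every level is in its modified state simultaneously. Sharing arrays between two pairs of one level breaks both correctness and restoration. Suppose pairs $0$ and $1$ share $\C{A},\C{B}$. Pass~1 adds $(a[i]+\h^{(0)}_i)\g^{(1)}_i$ to $\C{B}[i]$, because $\C{A}$ is only restored in pass~2. Pass~3, however, subtracts only $a[i]\g^{(1)}_i$, so $\C{B}[i]$ ends at $b[i]+\h^{(0)}_i\g^{(1)}_i\neq b[i]$. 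Hence the arrays must be indexed per pair, as the superscripts $(j,k)$ in the pseudocode already do. Summing $\lceil m/2^j\rceil$ blocks over all levels gives about $2m$ blocks, i.e.\ on the order of $nm$ registers and $O(nm\log m)$ catalytic bits for this algorithm. Your argument therefore does not yield the stated $O(n\log m)$ bound. Even granting per-level reuse, it would give $O(n\log^2 m)$, as your own ``$\widetilde{O}(n)$'' concedes. The paper's one-line count (``$O(1)$ registers for each element $i$'') ignores the same $(j,k)$ multiplicity, so it cannot fill the hole. Proving the claimed catalytic bound would require a genuinely different register-sharing scheme, which neither argument supplies.
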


\begin{proof}
        The correctness of the algorithm follows from \Cref{lem:F3in3pass}. The algorithm is clearly a 3-pass algorithm. Observe that each $\W{S}^{(j)}$, $\W{T}^{(j)}$, $\W{U}^{(j)}$, $\W{V}^{(j)}$, $\W{W}^{(j)}$, $\W{X}^{(j)}$, $\W{Y}^{(j)}$, $\W{Z}^{(j)}$ requires $O(\log M) = O(\log m)$ bits of space. Since $j$ ranges from $0$ to $\lceil \log m\rceil -1$, the total space is $O(\log^2 m)$. However, in the end (line 34-36), for the final output we only need the sum of all the $\W{S}^{(j)}$'s. Thus, instead of storing them separately, we can keep track of a global sum, leading to $O(\log m)$ space algorithm. For the catalytic space, observe that we use $O(1)$ registers for each element $i \in [n]$, and each register uses at most $O(\log m)$ bits.
\end{proof}



\subsection{\texorpdfstring{Computing $\FM{0}$ for $t$-flat streams }{Computing F\_0 for t-flat streams}}\label{Sec: Tflat}


We first formally define the $t$-flat streams.

\begin{definition}
    A stream $\stream$ is $t$-flat if for every $i \in [n]$, $\f_i \leq t$.
\end{definition}

We use the following fact about prime numbers for our proofs.

\begin{lemma}
Let \(1 < t < m\) be integers, and let
$
\ell = \left\lceil \frac{\log n}{\log t} \right\rceil
$. 
Then there exist distinct primes
$t < p_1 < p_2 < \cdots < p_\ell$
such that
$
\prod_{i=1}^\ell p_i > n.
$
Moreover, if \(p_\ell\) denotes the largest of these primes, then
$
p_\ell = O(t + \frac{\ln n\ln\ln n }{ \ln t})$. 
\end{lemma}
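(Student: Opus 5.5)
Take the $\ell$ smallest primes exceeding $t$ and show that their product already exceeds $n$ and that the largest of them is small. The product bound is immediate: each chosen prime satisfies $p_i > t$, so
\[
\prod_{i=1}^{\ell} p_i > t^{\ell} \geq t^{\log n/\log t} = n ,
\]
using $\ell \geq \log n/\log t$ and $t>1$. This part needs only that there are infinitely many primes above $t$ (or Bertrand's postulate), so the whole burden falls on bounding $p_\ell$.

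To bound $p_\ell$, count primes in an interval $(t, X]$ using the prime number theorem, or more concretely Chebyshev/Rosser--Schoenfeld bounds: there are absolute constants $c_1,c_2>0$ with $c_1 x/\ln x \le \pi(x) \le c_2 x/\ln x$ for $x\ge 2$. It suffices to exhibit an $X = O\!\left(t + \frac{\ln n\ln\ln n}{\ln t}\right)$ with $\pi(X) - \pi(t) \geq \ell$. I would split into two cases according to which term dominates.

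\emph{Case 1: $\ell \le t/\ln t$ (roughly, $t$ dominates).} Take $X = Ct$ for a large constant $C$. Then $\pi(Ct)-\pi(t) \ge c_1 Ct/\ln(Ct) - c_2 t/\ln t \ge t/\ln t \ge \ell$ once $C$ is large enough. Alternatively, apply Bertrand-type results for intervals $(x,2x]$ iteratively. This gives $p_\ell = O(t)$.

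\emph{Case 2: $\ell > t/\ln t$.} Take $X = C\,\ell\ln \ell$. Then $\pi(X) \ge c_1 C\ell\ln\ell/\ln(C\ell\ln\ell) \ge 2\ell$ for large $C$. Moreover, $\pi(t) \le c_2 t/\ln t < c_2\ell$, so after enlarging $C$ we get $\pi(X)-\pi(t)\ge \ell$. It remains to express $\ell\ln\ell$ in terms of $n$ and $t$. Since $\ell = \lceil \ln n/\ln t\rceil \le 2\ln n/\ln t$, we have $\ln \ell \le \ln\ln n + O(1)$, and therefore $\ell\ln\ell = O\!\left(\frac{\ln n\ln\ln n}{\ln t}\right)$. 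Combining the two cases gives $p_\ell \le \max(Ct,\, C\ell\ln\ell) = O\!\left(t + \frac{\ln n \ln\ln n}{\ln t}\right)$.

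The main obstacle is the bookkeeping in Case 2. One must make sure that subtracting the primes at most $t$ does not destroy the count, which is handled by choosing the threshold between the cases as $\ell$ versus $t/\ln t$. One must also handle small values (e.g.\ $\ell=1$, or $\ln\ell$ near $0$) where the asymptotic prime bounds need $x\ge 2$; these are absorbed by adding an additive constant, that is, by taking $X=\max(Ct, C\ell\ln(\ell+2), C)$.
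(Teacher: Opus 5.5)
Your proof is correct and follows the same route as the paper. You take the first $\ell$ primes above $t$, get $\prod_i p_i > t^{\ell} \ge n$, and bound $p_\ell$ with prime-counting estimates; your case split on $\ell$ versus $t/\ln t$ using Chebyshev bounds is the computation the paper only asserts (``can be estimated'') after writing $p_\ell$ as the $(\pi(t)+\ell)$-th prime and using $p_N \sim N\ln N$. Drop the aside about iterating Bertrand's postulate: one guaranteed prime per interval $(x,2x]$ gives only $p_\ell \le 2^{\ell} t$, which is not $O(t)$ in Case~1, so rely on the Chebyshev-count argument you actually give.
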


\begin{proof}
Let \(p_1 < p_2 < \cdots < p_\ell\) be the first \(\ell\) primes exceeding \(t\). Since each
\(p_i > t\), we have
$
\prod_{i=1}^\ell p_i > t^\ell.
$
By the choice of \(\ell\),
$
\ell \ge \frac{\log n}{\log t},
$
and hence
$
t^\ell \ge n.
$
Therefore,
$
\prod_{i=1}^\ell p_i > n.
$. 

It remains to bound \(p_\ell\). Let \(\pi(t)\) be the number of primes less than or equal to \(t\).
The \(\ell\)-th prime greater than \(t\) is the \((\pi(t)+\ell)\)-th prime in the sequence of all primes. By prime number theorem $\pi(t) \sim t/\ln t$ and $p_l \sim (\pi(t)+\ell)\ln (\pi(t)+\ell)$. 
This can be estimated to be $O(t + \frac{\ln n\ln\ln n }{ \ln t})$. 
\end{proof}

Now, we state our result for computing $\FM{0}$ for $t$-flat streams.

\begin{theorem}\label{Theorem: F0 for tflats}
Let \(1 < t < m\) be integers, and let
$t < p_1 < p_2 < \cdots < p_\ell$ where $p_i$s are prime number
such that
$
\prod_{i=1}^\ell p_i > n.
$
    There is a deterministic, $p_\ell$-pass, catalytic space algorithm that observes a $t$-flat stream $\mathcal{S}$ and returns $\FM{0}(\mathcal{S})$. The algorithm uses ${O}(\ell \cdot p_\ell \cdot\log mp_\ell + \log n)$ space. 
\end{theorem}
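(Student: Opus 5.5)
The plan is to reduce $\FM{0}$ modulo each prime $p_r$ to a single higher moment, and then recombine the residues with the Chinese Remainder Theorem. Fix one prime $p=p_r$. Since the stream is $t$-flat and $p>t$, every frequency satisfies $0\le \f_i\le t<p$. By Fermat's Little Theorem, $\f_i^{p-1}\equiv 1 \pmod p$ when $\f_i\neq 0$, and $\f_i^{p-1}\equiv 0\pmod p$ when $\f_i=0$. Exactly as in Claim~\ref{Claim: Fp equals F0 mod p}, this gives
\[
\FM{p-1}\equiv \FM{0}\pmod{p}.
\]
The important difference from Theorem~\ref{F0-in-4pass} is that we only need $p>t$, not $p>m$. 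So we only learn the residue $\FM{0}\bmod p$, not $\FM{0}$ itself.

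To compute $\FM{p_r-1}\bmod p_r$ we run Algorithm~\ref{alg:FkinKP+1} with $k=p_r-1$. This uses $k+1=p_r$ passes and one catalytic register per item of $O(k\log mk)=O(p_r\log mp_r)$ bits. Algorithm~\ref{alg:FkinKP+1} recovers $k!\,\FM{k}$ modulo $M$ and then divides by $k!$. Since $\gcd(k!,p)=1$ for $k=p-1$, we can instead keep only $\texttt{sum}$ modulo $M$ and reduce at the end:
\[
\FM{k}\equiv \texttt{sum}\cdot (k!)^{-1}\pmod p .
\]
This is valid because Theorem~\ref{thm:kplus1} gives $\texttt{sum}=k!\,\FM{k}$ exactly when $M\ge k!\,\FM{k}$. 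That requires $\log M=O(k\log mk)$, which is the same as the register size.

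All $\ell$ instances are run in parallel. Each uses its own catalytic array, and they share the same passes over the stream. The instance for $p_r$ finishes its $p_r-1$ accumulation passes and its restoring pass; for the passes after that, its registers stay idle and already restored. The total number of passes is therefore $\max_r p_r=p_\ell$. The space charged to the instances is $\sum_r O(p_r\log mp_r)=O(\ell\, p_\ell\log mp_\ell)$, which matches the bound. This accounting appears to count each instance's $O(k\log mk)$-bit accumulator $\texttt{sum}$ in the clean space. The catalytic registers add $n\cdot O(p_\ell\log mp_\ell)$ bits per instance. The $O(\log n)$ term pays for the final reconstruction.

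For the reconstruction, we end with residues $\FM{0}\bmod p_r$ for every $r$, each of $O(\log p_r)$ bits. We have $0\le \FM{0}\le n<\prod_r p_r$, and the $p_r$ are distinct primes, so the Chinese Remainder Theorem determines $\FM{0}$ uniquely. We can compute it incrementally (Garner-style): maintain the current value modulo $\prod_{s\le r}p_s$, capped so that it stays below $n$ once the product exceeds $n$. This uses $O(\log n)$ bits plus the stored residues, which fits in the stated budget.

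The step I expect to need the most care is the modular bookkeeping. The catalytic registers work modulo $M=2^{\lceil\log (m^k k!)+1\rceil}$, while the answer is needed modulo $p$. We must make sure the exact integer $k!\,\FM{k}$ is recovered before reducing modulo $p$. The alternative of running the catalytic arithmetic directly modulo $p$ is not obviously valid: Corollary~\ref{coro:identity} relies on $c[i]+\ell \f_i$ being computed without incorrect wraparound, and a modulus $M$ that is not a power of two would need registers that hold values in $[0,p)$. Keeping $M$ a power of two large enough for the exact identity avoids this, at the cost of $O(p_r\log mp_r)$-bit registers and accumulators, and the theorem's space bound already allows that.
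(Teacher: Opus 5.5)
Your proposal is correct and follows the paper's proof. Fermat's little theorem gives $\FM{p_r-1}\equiv\FM{0}\pmod{p_r}$ for each prime $p_r>t$; Theorem~\ref{thm:kplus1} is run in parallel for all $\ell$ primes, giving $p_\ell$ passes and $O(\ell\, p_\ell\log mp_\ell)$ clean space; and the Chinese Remainder Theorem recovers $\FM{0}\le n<\prod_r p_r$. The differences are only in detail: the paper reconstructs $\FM{0}$ by brute-force search over all $a\le n$ rather than Garner-style, and it invokes Theorem~\ref{thm:kplus1} as a black box, whereas you spell out that the exact value $k!\,\FM{k}$ is recovered modulo the power-of-two $M$ before reducing modulo $p$.
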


For the proof of \Cref{Theorem: F0 for tflats}, we require the following claim.

\begin{claim}\label{Claim: Fermats for tflats}
    For a $t$-flat stream, and a prime $p \geq t$ , we have $\FM{p-1} \pmod p = \FM{0} \pmod p  $
    \end{claim}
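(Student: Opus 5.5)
The plan is to repeat the argument of \Cref{Claim: Fp equals F0 mod p}, with the hypothesis $p > \max\{m,n\}$ replaced by the flatness bound $\f_i \le t$. The only property of $p$ the earlier argument used was that no nonzero frequency is divisible by $p$. For a $t$-flat stream this follows from $1 \le \f_i \le t < p$ whenever $\f_i \neq 0$.

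First I would fix $i \in [n]$ and split into two cases. If $\f_i = 0$, then $\f_i^{p-1} = 0$, since $p-1 \ge 1$. If $\f_i \neq 0$, then $1 \le \f_i \le t$. Provided $p > t$, this gives $p \nmid \f_i$, and Fermat's Little Theorem yields $\f_i^{p-1} \equiv 1 \pmod p$. In both cases $\f_i^{p-1} \equiv \mathbf{1}[\f_i \neq 0] \pmod p$.

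Next I would sum over $i$ and reduce modulo $p$, as in the earlier claim:
\[
\FM{p-1} = \sum_{i=1}^n \f_i^{p-1} \equiv \sum_{i=1}^n \mathbf{1}[\f_i \neq 0] = \FM{0} \pmod p .
\]
This is the stated identity. Note that, unlike in \Cref{F0-in-4pass}, we do not claim $\FM{0} < p$. We only obtain $\FM{0} \bmod p$, which is exactly what \Cref{Theorem: F0 for tflats} needs in order to combine the residues for $p_1,\dots,p_\ell$ by the Chinese remainder theorem, using $\prod_j p_j > n \ge \FM{0}$.

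The main obstacle is the boundary case $p = t$ permitted by the phrasing ``$p \ge t$''. If some $\f_i = t = p$, then $\f_i^{p-1} \equiv 0$ even though $\f_i \neq 0$, and the identity can fail. For example, a stream in which a single item appears exactly $p$ times has $\FM{0} = 1$ but $\FM{p-1} \equiv 0 \pmod p$. So I would prove the claim under $p > t$ (equivalently, $p \ge t$ together with $p \nmid \f_i$ for all $i$). This is the hypothesis actually used in \Cref{Theorem: F0 for tflats}, since there $t < p_1 < \cdots < p_\ell$, and I would remark that the weak inequality should be read as strict.
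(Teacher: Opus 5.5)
Your proof is correct and follows the paper's argument: apply Fermat's Little Theorem to each coordinate to get $\f_i^{p-1} \equiv \mathbf{1}[\f_i \neq 0] \pmod p$, then sum modulo $p$. Your boundary observation is also right: the claim as stated fails at $p = t$ (for example, when some item occurs exactly $p$ times). The paper's proof silently uses $\f_i < p$ for the primes $p \in L$, all of which exceed $t$, so the hypothesis should read $p > t$.
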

\begin{proof}
Note that, since the stream is $t$-flat, for every $p \in L$, $\f_i < p$. Thus from Fermat's little theorem, $\f_i^{p-1} = 1 \pmod p$ if $\f_{i} \neq 0$ and $0 \pmod p$ if $\f_i= 0$. Thus for every $p \in L$
\[\FM{0} = |\{i\mid \f_i^{p-1} \pmod p = 1\}| \] 
Equivalently 
\[\FM{0} = \sum_{i=1}^n \f_{i}^{p-1} \pmod p\]
where the sum is taken over the integer ring.
 Now,
\begin{eqnarray*}
    \FM{p-1} \pmod p  & = & \left(\sum_{i=1}^n \f_{i}^{p-1} \right) \pmod p\\
    & = & \sum_{i=1}^n \left (\f_{i}^{p-1} \pmod p\right) \pmod p\\
    & = & \FM{0} \pmod p
\end{eqnarray*}
This completes the proof of the claim. 
\end{proof}

Now, we establish \Cref{Theorem: F0 for tflats}.

\begin{proof}[Proof of \Cref{Theorem: F0 for tflats}]
The algorithm works as follows. Let $L = \{p_1,\cdots,p_l\}$ be the list of prime numbers such that $p_i > t$ and $
\prod_{i=1}^\ell p_i > n.
$ For each $p \in L$, compute $\FM{p-1} \pmod p$, in parallel. By~\Cref{{thm:kplus1}}, this can done using $p_\ell$ passes using space $O(\ell\cdot p_\ell\cdot \log mp_\ell)$. 
We have the following claim. 

Since the product of primes in $L$ is bigger than $n$ and $\FM{0} \leq n$, by using Chinese remaindering, we can recover $\FM{0}$. The additional space to do Chinese remaindering is $O(\log n)$. In this case, we can go over all integers $a \leq n$ and find $a$ that satisfies all the modulo equations. 
\end{proof}

We note that even for 2-flat streams, any randomized $k$-pass algorithm for exact computation of $\FM{0}$ 
in the standard streaming model will require $\Omega(n/k)$ space~\cite{RAZBOROV1992385}. 

\vspace{2mm}
\begin{remark} In fact, the above proof can be adapted to more general class of streams. What we require is that there are a set of small prime numbers which do not divide any $\f_i$. Thus, for example, if $\f_i$ are {\em smooth} numbers, numbers for which all its prime divisors are small, the above algorithm will work. $t$-smoothness (all prime factors of $\f_i$ are $\leq t$) is a generalization of $t$-flat, since $\f_i \leq t$, each prime factors of $\f_i$ are also $\leq t$.   
\end{remark}

\section{Counting Subgraphs in 4 Passes}
\label{subsec:subgraphcounting}

In this section, we show how our exact $\FM{k}$ algorithms can be used to count subgraphs in the catalytic-space model. Using our $O(k\log n)$-space algorithm for computing $\FM{k}$, we prove that, for every fixed graph $H$, the number of induced copies of $H$ can be computed exactly in four passes using $O_H(\log n)$ space. As a corollary, by exploiting our specialized $O(\log n)$-space algorithms for computing $\FM{2}$ and $\FM{3}$, we obtain an exact three-pass algorithm for counting triangles using only $O(\log n)$ space. Here, $O_H(\cdot)$ denotes that the hidden constant depends only on the fixed graph $H$.

\begin{theorem}[Exact subgraph counting]\label{Thm: Counting SUbgraphs}
Let $H$ be a fixed graph on $r$ vertices and $q$ edges. Then the number of copies of $H$, as well as the number of induced copies of $H$ in an $n$-vertex graph stream can be computed exactly in four passes using $O_H(\log n)$ space.
In particular, 
\begin{itemize}
    \item Number of copies of $H$ can be counted in four passes using $\bigo{q^2\log n}$ clean space, and $O(q^3 n\log n)$ catalytic space.
    \item Number of induced copies of $H$ can be counted in four passes using $\bigo{2^M M^2q^2\log n}$ clean space, and $O(2^M M^2q^3\log n)$ catalytic space, where $M = \binom{r}{2}$.
\end{itemize}
\end{theorem}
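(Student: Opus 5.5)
The plan is to reduce subgraph counting to a fixed linear combination of frequency moments of a \emph{virtual stream}, and then to run the four-pass algorithm of \Cref{Thm: Fk in 4} on that virtual stream for all needed moments in parallel. First fix the structure of $H$: its vertices are $\{1,\dots,r\}$ and its edge set is $E(H)$, $|E(H)|=q$. The virtual universe is the set $\Phi$ of injective maps $\phi:[r]\to[n]$, so $|\Phi|\le n^r$. When the graph stream delivers an edge $(u,v)$, the algorithm emits, for each edge $\{a,b\}\in E(H)$ and each $\phi$ with $\{\phi(a),\phi(b)\}=\{u,v\}$, the item $\phi$. Each pass over the graph stream then yields one pass over the virtual stream, because these items can be enumerated on the fly with $O(r\log n)$ clean space. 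The frequency $\f_\phi$ equals the number of edges of $H$ whose image under $\phi$ is present in $G$, and hence $0\le \f_\phi\le q$. It follows that $\phi$ is an embedding of $H$ as a (not necessarily induced) subgraph exactly when $\f_\phi=q$.

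Next apply the indicator polynomial $\binom{x}{q}$, which is $1$ at $x=q$ and $0$ at every integer $0\le x<q$. Summing over $\phi$, the number of embeddings equals $\sum_\phi\binom{\f_\phi}{q}$. Writing $\binom{x}{q}=\frac{1}{q!}\sum_{j=0}^q s(q,j)x^j$ with Stirling numbers of the first kind gives
\[
\#\mathrm{emb}(H,G)=\frac{1}{q!}\sum_{j=0}^{q}s(q,j)\,\FM{j}.
\]
Dividing by $|\mathrm{Aut}(H)|$ converts this into the number of copies. The $j=0$ term has $s(q,0)=0$ for $q\ge1$, so $\FM{0}$ never enters, which matters because $\FM{0}$ is not available in four passes at this cost. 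We therefore need only $\FM{1},\dots,\FM{q}$ of the virtual stream. Each is computed by \Cref{Thm: Fk in 4} with $m'\le q\,n^2$ and universe size $n^r$, and all $q$ instances share the same four passes, since each pass of $\sP_i$ or $\sP_i^{-1}$ is driven by the same virtual items. The clean space is $O(k\log m')$ per moment, so summing over $k\le q$ gives $O(q^2\log n)$. The catalytic space is $O(k^2 n'\log m')$ per moment; the target bound $O(q^3 n\log n)$ hides the size of the virtual universe, and I will read it as $q^3$ times the number of virtual registers. This accounting is routine.

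For induced copies, perform inclusion--exclusion over the supergraphs $H'$ of $H$ on the same vertex set $[r]$; there are at most $2^M$ of them, $M=\binom r2$. Counting embeddings $\phi$ whose image realizes exactly $E(H)$ amounts to inverting the relation $\#\mathrm{emb}(H,G)=\sum_{H'\supseteq H}\#\mathrm{indemb}(H',G)$ by Möbius inversion on the Boolean lattice:
\[
\#\mathrm{indemb}(H,G)=\sum_{H'\supseteq H}(-1)^{|E(H')|-|E(H)|}\#\mathrm{emb}(H',G).
\]
Each $H'$ has at most $M$ edges, so running the construction above for every $H'$ in parallel within the same four passes costs at most $2^M$ instances, each with up to $M$ moments of degree at most $M$. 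This gives the stated $2^M M^2 q^2\log n$-type clean-space bound.

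The main obstacle is the exact arithmetic. The division by $q!$ and by $|\mathrm{Aut}(H)|$ has to interact correctly with the modular arithmetic of \Cref{Thm: Fk in 4}. I would handle this by choosing the modulus $2^l$ with $2^l> q!\,|\mathrm{Aut}(H)|\cdot n^{r}\cdot(\max_j |s(q,j)|)\,q^{q}$, which is still $O_H(\log n)$ bits. With that choice the signed integer combination $\sum_j s(q,j)\FM{j}$ is recovered exactly, and it is then divided in the integers. A second point to verify is that the pass structure of the powering lemma is preserved when many registers $\C{C}[\phi]$ are updated by the same graph edge; this holds because each $\sP_\phi$ is a simple counter increment and so commutes with the others.
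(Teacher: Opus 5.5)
Your proposal follows the paper's proof essentially step for step: the virtual stream over injective maps $\phi:[r]\to V(G)$, the indicator $\binom{x}{q}$ expanded into moments that are all computed in the same four passes via \Cref{Thm: Fk in 4}, M\"obius inversion over the supergraphs $J\supseteq H$, and division by $|\mathrm{Aut}(H)|$. Two of your additions tighten points the paper glosses over: the observation that $s(q,0)=0$ for $q\ge 1$, so $\FM{0}$ is never needed, and the explicit choice of modulus.

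Your reinterpretation of the catalytic bound is justified, because the paper's proof also plugs universe size $n$ into \Cref{Thm: Fk in 4} even though the virtual universe has $(n)_r=\Theta(n^r)$ items; so both arguments actually give catalytic space of order $q^3 n^r\log n$, not $O(q^3 n\log n)$. One slip to fix: $m'\le qn^2$ should be $m'\le q\,(n)_r\le qn^r$, since each graph edge spawns $2(n-2)_{r-2}$ maps per edge of $H$. This still gives $\log m'=O_H(\log n)$, so your $O_H(\log n)$ clean-space conclusion is unaffected.
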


\begin{proof} We use the idea from  Bar-Yossef, Kumar, and Sivakumar~\cite{BarYossefKS2002} who gave a reduction from triangle counting to frequency moments.
We first describe how to count non-induced copies of a fixed graph.
Let $J$ be a fixed graph on vertex set $[r]$, 
$q=|E(J)|$. Let $H$ be a fixed graph on $r$ labelled vertices, and consider an arbitrary input graph $G$ on $n$ vertices. We distinguish two types of counts:

\begin{itemize}
    \item $N(J)$: the number of labelled \emph{non-induced} copies of a graph $J$ in $G$. That is, $N(J)$ is the number of injective mappings $\phi: V(J) \hookrightarrow V(G)$ such that every edge of $J$ is present in $G[\phi(V(J))]$. Extra edges among the image vertices are allowed.
    \item $I(J)$: the number of labelled \emph{induced} copies of $J$ in $G$. Here we require that the image of $\phi$ induces exactly the graph $J$ --- no extra edges beyond those of $J$ are permitted.
\end{itemize}

\paragraph{Counting $N(J)$:}
Consider injective maps
$\phi: [r]\to V(G)$.
For each edge $(u,v)$ arriving in the graph stream, we generate a virtual
stream as follows. For every edge $\{i,j\}\in E(J)$ and every injective map
$\phi:[r]\to V(G)$ satisfying
\[
\{\phi(i),\phi(j)\}=\{u,v\},
\]
we output the item $\phi$. Thus the frequency of an item $\phi$ in the virtual stream is
\[
f_J(\phi)
=
\left|\left\{\{i,j\}\in E(J):
\{\phi(i),\phi(j)\}\in E(G)\right\}\right|.
\]
Therefore, $\phi$ is a non-induced copy of $J$ exactly when
\[
f_J(\phi)=q.
\]

Now, we consider the polynomial
\[
\binom{x}{q}
=
\frac{x(x-1)\cdots(x-q+1)}{q!}.
\]
Since $\binom{x}{q}$ is a degree-$q$ polynomial, it may be written as
\[
\binom{x}{q}
=
\sum_{k=0}^{q} a_k x^k.
\]
Since $0\le f_J(\phi)\le q$, we have
\[
\binom{f_J(\phi)}{q}
=
\begin{cases}
1, & f_J(\phi)=q,\\
0, & f_J(\phi)<q.
\end{cases}
\]
Hence, the number of labelled non-induced copies of $J$ is
\[
N(J)
=
\sum_{\phi}\binom{f_J(\phi)}{q}.
\]

Therefore, we have
\[
N(J)
=
\sum_{\phi}\sum_{k=0}^{q} a_k f_J(\phi)^k
=
\sum_{k=0}^{q} a_k \FM k,
\]
where $\FM k$ is the $k$th frequency moment of the virtual stream of injective maps $\phi$ as defined
above. Thus the labelled non-induced copy count of $J$ can be recovered using appropriate coefficients $\sequence{a}{q}$, and the corresponding frequency moments
\[
\FM0,\FM1,\ldots,\FM q.
\]
Since $J$ is fixed, $q$ is constant. By \Cref{Thm: Fk in 4}, each $\FM k$ can be
computed exactly in four passes using $O(k\log n)$ clean space, and $O(k^2 n\log n)$ catalytic space. Computing all
moments up to $q$ in parallel uses
\[
\sum_{k=0}^{q} O(k\log n)
=
O(q^2\log n)
\]
clean space, $O(q^3 n\log n)$ catalytic space and four passes,. The above counts the number of non-induced copies. 

\paragraph{Counting $I(J)$:}
We now show how to count induced occurrences. Fix any graph $J$ that contains $H$ as a subgraph (i.e., $J \supseteq H$). Consider a vertex set that contributes to $N(J)$. By definition, it contains all edges of $J$. The induced subgraph on that vertex set is therefore some graph $J'$ that has $J$ as a subgraph: $J \subseteq J'$. In other words, every non-induced copy of $J$ induces a (possibly larger) supergraph $J'$ of $J$. Conversely, every induced copy of any supergraph $J' \supseteq J$ is automatically a non-induced copy of $J$. Hence we obtain:
\[
N(J) = \sum_{J' \supseteq J} I(J').
\tag{1}
\]
This identity holds for every fixed graph $J$ on $r$ vertices. This is a finite upper-triangular linear system over edge count wise sorted graphs on $r$
vertices. Since $r$ is fixed, by the general
inclusion-exclusion principle, we obtain
\[
I(H) = \sum_{J \supseteq H} (-1)^{|E(J)| - |E(H)|} \, N(J).
\tag{2}
\]

Therefore the labeled induced count of $H$ can be computed from the
non-induced counts of all supergraphs $J\supseteq H$. Observe that there are at most
\[
2^M,
\qquad
M=\binom{r}{2},
\]
such graphs, and each requires moments only up to order at most $M$.
Thus the usage of total clean space is bounded by
\[
O(2^M M^2q^2\log n),
\]
and the number of passes remains four, since all required moment
computations can be run in parallel. The catalytic space usage becomes $O(2^M M^2q^3\log n)$. Finally, to obtain unlabelled copies of $H$, it suffices to divide
labelled induced count by the size of the automorphism group of $H$.
\end{proof}

\begin{corollary}[Exact triangle counting]
There is an exact $3$-pass catalytic-space algorithm for counting
triangles in an $n$-vertex graph stream using $O(\log n)$ space.
\end{corollary}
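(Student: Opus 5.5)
We will use the Bar-Yossef--Kumar--Sivakumar reduction to build a virtual stream over $3$-subsets of $V(G)$, and then run the specialized algorithms of \Cref{Thm: F2in2} and \Cref{Thm: F3in3} (together with the trivial $\FM{1}$ count) on it.

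\emph{Virtual stream.} When an edge $\{u,v\}$ arrives, we emit the items $\{u,v,w\}$ for every $w\notin\{u,v\}$. The universe is $\binom{[n]}{3}$, of size $O(n^3)$. The virtual stream has length $m'=|E|(n-2)=O(n^3)$, so $\log m'=O(\log n)$. The frequency $\f_T$ of a triple $T$ is the number of edges of $G$ inside $T$, hence $\f_T\in\{0,1,2,3\}$. Let $A_j$ be the number of triples with exactly $j$ edges. Then
\[
\FM{1}=A_1+2A_2+3A_3,\qquad \FM{2}=A_1+4A_2+9A_3,\qquad \FM{3}=A_1+8A_2+27A_3 .
\]
The polynomial $\binom{x}{3}=\tfrac{1}{6}(x^3-3x^2+2x)$ is $1$ at $x=3$ and $0$ at $x=0,1,2$. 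Therefore
\[
\#\triangle = A_3=\tfrac{1}{6}\bigl(\FM{3}-3\FM{2}+2\FM{1}\bigr),
\]
and $\FM{1}=m'$ is available in closed form.

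\emph{Streaming implementation.} Each pass over the graph stream simulates one pass over the virtual stream. We expand each arriving edge into its $n-2$ triples in a fixed order (say increasing $w$), so every simulated pass sees the virtual stream in the same order. This matters because the algorithms of \Cref{Thm: F2in2} and \Cref{Thm: F3in3} split the stream by position into dyadic blocks. The position counter $i$ in the virtual stream is recomputed on the fly, which needs only $O(\log n)$ bits. We run the $3$-pass $\FM{3}$ algorithm and the $2$-pass $\FM{2}$ algorithm in parallel on disjoint catalytic regions, with the $\FM{2}$ run idle in the third pass. This takes $3$ passes in total.

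\emph{Resources.} The clean space is the $O(\log m')=O(\log n)$ accumulators of each algorithm, plus an $O(\log n)$ index for $w$ and the encoding of the current triple. The catalytic space is $O(n^3\log n)$, with one register per triple per dyadic level, as in \Cref{Thm: F2in2} and \Cref{Thm: F3in3}. Catalytic restoration follows from the restoration guarantees of those two algorithms applied to the virtual stream. The final combination is computed modulo a power of $2$ exceeding $6\binom{n}{3}$. Because $6A_3$ is an exact integer, we recover $\FM{3}-3\FM{2}+2\FM{1}$ exactly and divide by $6$.

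\emph{Main obstacle.} The delicate step is checking that the 2- and 3-pass algorithms depend only on the virtual stream being replayed identically across passes. They rely on positional halves $\stream_L,\stream_R$, so the pass structure is not order-oblivious, unlike the $(k+1)$-pass algorithm. Deterministic expansion of each edge gives exactly this identical replay, and I expect it to suffice. The rest is routine bookkeeping.
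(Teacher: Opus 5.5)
Your proposal is correct and is essentially the paper's proof. It uses the same Bar-Yossef--Kumar--Sivakumar triple stream and the identity $\#\triangle=\tfrac16\bigl(\FM{3}-3\FM{2}+2\FM{1}\bigr)$, and it runs the $3$-pass $\FM{3}$ and $2$-pass $\FM{2}$ algorithms in parallel with $\FM{1}$ obtained trivially; your point that the edge-to-triple expansion must be deterministic, so the position-based dyadic algorithms see the same virtual stream every pass, is a valid detail the paper leaves implicit. Your catalytic-space estimate is not needed, since the statement only bounds clean space, and it is slightly off: the algorithms as written use a separate register array $\C{C}^{(j,k)}$ per block rather than one register per triple per level.
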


\begin{proof}
For triangle counting, we use the virtual stream in which every arriving
edge $(u,v)$ generates the triples
\[
\{u,v,w\},
\qquad w\in V\setminus\{u,v\}.
\]
If a triple has frequency $i$, then it induces exactly $i$ edges. Let
$T_i$ be the number of triples with frequency $i$. Then
\[
\FM k=T_1\cdot 1^k+T_2\cdot 2^k+T_3\cdot 3^k.
\]
Thus
\[
\#\triangle(G)=T_3
=
\frac{\FM 3-3\FM 2+2\FM 1}{6}.
\]
By assumption, $\FM 2$ can be computed exactly in two passes using
$O(\log n)$ space, and $\FM 3$ can be computed exactly in three passes using
$O(\log n)$ space. Also $\FM 1$ is trivial to compute in one pass. Running
these computations in parallel gives an exact three-pass algorithm using
$O(\log n)$ space.
\end{proof}


\section{Lower Bounds for 1-pass Catalytic Streaming Algorithms} \label{sec:lower}

In this section, we show that catalytic space provides no advantage in the one-pass streaming model. In particular, any problem that can be solved by a deterministic (randomized) one-pass streaming algorithm using \(s\) bits of workspace and \(c\) bits of catalytic space can also be solved by a standard deterministic (respectively, randomized) one-pass streaming algorithm using only \(s\) bits of workspace. To prove this, we use an automata-theoretic view of streaming algorithms. The proof is inspired by the proof that in the traditional setting Catalytic LOGSPACE is contained in the probabilistic class ZPP~\cite{Buhrman/STOC/2014/CatalyticIntro}.




\begin{remark}
   We note that our lower bound results for one-pass catalytic streaming algorithms continue to hold even when the stream length, $\streamlen$, is known in advance. 
\end{remark}

\begin{remark}
Space usage in streaming algorithms can be categorized into two distinct types. The first, which we refer to as {\emph storage space}, is the amount of memory used to maintain information between the arrivals of consecutive items in the stream. This is typically what is meant by the {\emph space complexity} of a streaming algorithm. The second type is the additional memory required during the processing of an incoming item, which we call {\emph{update space}}. This corresponds to the temporary workspace needed to update the stored summary upon each arrival.
Ideally, one would define the overall space complexity of a streaming algorithm as the sum of its storage and update space. However, most existing lower bound techniques in the streaming literature—such as those based on communication complexity \cite{Roughgarden/NOWPub/2016/CommunicationComplexity} or automata-theoretic arguments \cite{PavanCVM/PODS/2024/ForgetDataStream} provide lower bounds on the storage space, and irrespective of the amount of the update space used.
Since any lower bound on storage space immediately implies a lower bound on the total space complexity, this distinction is often overlooked. Nevertheless, for the purposes of our results, it is important to be precise.

Throughout this paper, unless stated otherwise, the term space complexity refers exclusively to the storage space complexity, and does not include update space.




\end{remark}



\subsection{Deterministic-Streaming-Automaton}

We define the following {automata}, which we call the \emph{streaming-automata}. This will help us model the deterministic (and randomized) catalytic (and non-catalytic) one-pass streaming algorithms.  We note that the deterministic non-catalytic version of the {\emph streaming-automata} is basically a Moore machine. We start with defining the same. 

\begin{definition}
    A \emph{simple deterministic streaming-automata} is a 6-tuple 
    \[\fbrac{Q,q_0, \streamuniverse, \func{\delta}{Q\times\streamuniverse}{Q},\func{O}{Q}{\sO}}\]
    consisting of a state space $Q$,  a state $q_0$,  an universe (or alphabet) $\streamuniverse$, a transition function $\func{\delta}{Q\times\streamuniverse}{Q}$ and an output function $\func{O}{Q}{\sO}$, where $\sO$ is an output space. 

    \

    We denote the composition of transition rules $\func{\Delta}{Q\times\streamuniverse}{\streamuniverse}$ defined as $$\Delta(q,x\omega) = \delta(\Delta(q,x),\omega).$$    For a stream $\stream \in \streamuniverse^{\streamlen}$ the output of the automaton is defined as $O(\Delta(q_0, \stream))$.
\end{definition}

The catalytic version of the streaming-automaton has $Q\times C$ as the state-space, where the $C$ represents the catalytic state space. On an input $\stream$ the start state is $(q_0, c)$, where $c\in C$ is a catalytic state choose by an adversary. The automata has to ensure that the output function $\func{O}{(Q\times C)}{\sO\times C}$ has to recover the catalytic state. The formal definition is as follows:

\begin{definition}
    A \emph{catalytic-deterministic-streaming-automata} is a 6-tuple 
    \[\fbrac{(Q\times C),q_0, \streamuniverse, \func{\delta}{(Q\times C)\times\streamuniverse}{(Q\times C)},\func{O}{(Q\times C)}{\sO\times C}}\]
    consisting of a state space $Q\times C$ (with $Q$ representing the clean original state space and $C$ representing the catalytic state space),  a state $q_0\in Q$,  an universe (or alphabet) $\streamuniverse$, a transition function $\func{\delta}{(Q\times C)\times\streamuniverse}{(Q\times C)}$ and an output function $\func{O}{(Q\times C)}{\sO\times C}$, such that for any stream $\stream$ and any catalytic state $c\in C$,
    $$O(\Delta((q_0, c), \stream)) \in (\sO\times \{c\}).$$

    Here the $\Delta$ is defined as above. 

\end{definition}


\subsection{Lower Bounds for 1-pass Deterministic Catalytic Streaming Algorithms}

We first set up an equivalence between one-pass streaming algorithms and streaming automata. 

\begin{claim}\label{Clm: Det Stream Model Equivalence}
    There exists an one-pass deterministic streaming algorithm $\sA$ with storage space $s$ outputting values in $\sO$, and working on a stream where item comes from $\streamuniverse$  if and only if there is a simple deterministic streaming-automaton $\fbrac{Q,q_0,\streamuniverse,\func{\delta}{Q\times\streamuniverse}{Q},\func{O}{Q}{\sO}}$
    such that $\size{Q} = 2^s$ and  for all $\stream \in \Omega^*$ the output of the automaton on input $\stream$ is same as the output of the algorithm in the same input, that is,   $\sA(\stream) = O(q_0,\Delta(\stream))$.
\end{claim}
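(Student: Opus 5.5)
We prove the two directions separately, identifying the storage contents of $\sA$ between consecutive stream items with the states of the automaton.

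\emph{($\Rightarrow$)} Let $\sA$ be a one-pass deterministic algorithm with storage space $s$. Take $Q=\{0,1\}^s$, the set of possible memory contents held between arrivals. Let $q_0$ be the initial (all-zero) memory content. For $q\in Q$ and $x\in\streamuniverse$, define $\delta(q,x)$ as the memory content that $\sA$ stores after it starts from stored content $q$ and processes item $x$. This is well defined because $\sA$ is deterministic and its behavior on an arriving item depends only on the stored content and the item. Any update space used during processing is temporary and is not part of the state, which is consistent with the remark that space means storage space. Define $O(q)$ as the value $\sA$ outputs when the stream ends while its memory is $q$. A straightforward induction on the length of $\stream$ shows that $\Delta(q_0,\stream)$ equals the memory content of $\sA$ after reading $\stream$. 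Hence $O(\Delta(q_0,\stream))=\sA(\stream)$.

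\emph{($\Leftarrow$)} Given an automaton with $\size{Q}=2^s$, fix a bijection between $Q$ and $\{0,1\}^s$. The algorithm stores the encoding of the current state, initialized to $q_0$. On each item $x$ it overwrites the stored state $q$ with $\delta(q,x)$, using any needed update space only temporarily. At the end of the stream it outputs $O(q)$. The same induction on the length of $\stream$ gives correctness, and the storage space is exactly $s$ bits.

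The only delicate point is the first direction: making sure the transition really is a function of the stored content and the current item alone. This is where the storage/update distinction matters. If $\sA$ kept an extra counter, such as the position in the stream, that counter would have to be part of the $s$ stored bits, so it is already included in $q$. If the automaton has fewer than $2^s$ reachable states, we can pad $Q$ with dummy states so that $\size{Q}=2^s$ exactly.
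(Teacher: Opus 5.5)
Your proof is correct and is essentially the paper's argument: storage configurations become the states, the per-item update becomes $\delta$, the end-of-stream readout becomes $O$, and conversely the algorithm stores the $s$-bit name of the current state. One small difference is that the paper takes $q_0$ to be the memory content after $\sA$'s initialization step (just before the first item arrives) rather than the all-zero content, and you should do the same in case $\sA$ does any preprocessing, since otherwise your induction already fails on the empty stream.
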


\begin{proof}

We show that any algorithm that can be implemented by a streaming automata can be implemented as a one-pass streaming algorithm, and vice-versa. The proof is standard, but we briefly describe it here. 


\vspace{2mm}
\noindent{\em Streaming Algorithm to Automata:}
    A streaming algorithm using $s$ bits can have at most $2^s$ possible configurations in its memory. Let $Q$ be the set of all possible $2^s$ configurations.
    Before starting to process the stream, the algorithm performs an initialization step. The starting state $q_0$ is the the state corresponding to the memory configuration after the initialization step.
    At each arrival of an element $\omega \in \streamuniverse$ in the stream, the streaming algorithm performs an update operation. Any update operation can be seen as the changes made in the memory available to the algorithm. Hence, each update step can be modelled using a transition function $\func{\delta}{Q\times\streamuniverse}{Q}$.
   At the end of  the stream the algorithm outputs a value from $\sO$ based on the configuration of the memory. Thus the output function is the $O$ is the mapping from the configuration set to the $\sO$ that the algorithm does.

   Thus for any stream $\stream$ the output of the algorithm $\sA$ is the same as that of the automata.


\vspace{2mm}

\noindent{\em Automata to Streaming Algorithm:}
    Given a automata $\fbrac{Q,q_0, \streamuniverse, \func{\delta}{Q\times\streamuniverse}{Q},\func{O}{Q}{\sO}}$ we name the states in $Q$ using $(\log |Q|)$ bits. In the streaming algorithm we will use the storage space to store the name of the current state.  
    The initialization step sets the storage space stores the name of the state $q_0$ in the memory. At the arrival of each element, the update step updates the memory space as defined by the transition function of the automata, that is, on the arrival of item $\omega$ the algorithm replaces the name of the current state $q$ to the name of $\delta(q, \omega)$. At the end, the algorithm outputs according to the output function $O$ and the current state $q \in Q$. 
    Clearly the output of the algorithm is same as the output of the automata on input $\stream$.
\end{proof}

A similar proof proves the following fact.

\begin{claim}\label{Clm: Det Cat Space Model Equivalence}
 There exists an one-pass deterministic streaming algorithm $\sA$ with storage space $s$ and catalytic space $c$ outputting values in $\sO$, and working on a stream where item comes from $\streamuniverse$  if and only if there is a catalytic-deterministic-streaming-automaton 
    $$\fbrac{(Q\times C),q_0,\streamuniverse,\func{\delta}{(Q\times C)\times\streamuniverse}{(Q\times C)},\func{O}{(Q\times C)}{\sO\times C}},$$
    such that $\size{Q} = 2^s$, $\size{C} = 2^c$ and  for all $\stream \in \Omega^*$ the output of the automaton on input $\stream$ is same as the output of the algorithm on the same input, that is,   $\sA(\stream) = O(q_0,\Delta(\stream))$.

\end{claim}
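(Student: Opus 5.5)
The plan is to mirror the proof of \Cref{Clm: Det Stream Model Equivalence}, now treating a complete memory configuration as a pair: the clean part in $Q$ and the catalytic part in $C$. We prove the two directions separately.

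\emph{Algorithm to automaton.}
\begin{itemize}
\item Take $Q$ to be the $2^s$ possible contents of the $s$-bit storage workspace, and $C$ the $2^c$ possible contents of the catalytic memory.
\item The initialization step never depends on the catalytic contents $z$ before any item is read (it is a fixed clean state). So we set $q_0$ to be the clean configuration after initialization.
\item If initialization does modify catalytic memory, we instead fold that deterministic map into the first transition, or equivalently take the start state to be the configuration it produces. This is well defined because the map depends only on $z$.
\item The update on arrival of $\omega$ is a deterministic map on the full memory. It therefore defines $\delta:(Q\times C)\times\streamuniverse\to Q\times C$.
\item At the end of the stream the algorithm performs a final computation, possibly touching catalytic memory to restore it, and then outputs a value and halts with some catalytic contents. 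We define $O(q,c')$ to be the pair (output value, final catalytic contents) produced from configuration $(q,c')$.
\item The catalytic requirement of the streaming model states that, for every stream $\stream$ and every initial $z$, the final catalytic contents equal $z$. This translates exactly into the automaton condition $O(\Delta((q_0,z),\stream))\in \sO\times\{z\}$, and the outputs coincide.
\end{itemize}

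\emph{Automaton to algorithm.}
\begin{itemize}
\item Encode $Q$ with $\log|Q|=s$ bits of clean storage, and let the catalytic memory of $c=\log|C|$ bits hold the name of the $C$-component.
\item Initialize the clean part to the name of $q_0$. The catalytic part already holds the adversarial $z$, which plays the role of the starting $c$.
\item On each item $\omega$, overwrite the pair $(q,c')$ with $\delta((q,c'),\omega)$, using update space as needed. This is permitted by the remark that update space is not counted.
\item At the end, compute $O(q,c')=(o,c'')$, write $c''$ into the catalytic memory, and output $o$.
\item The automaton's defining property guarantees $c''=z$, so the algorithm is a valid catalytic algorithm and agrees with the automaton on every stream.
\end{itemize}

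The main subtlety is the end-of-stream and initialization handling. We must make sure that any restoration work the algorithm does after the last item is captured by $O$ rather than by $\delta$, and that the encoding of states uses exactly $s$ clean and $c$ catalytic bits, so that $|Q|=2^s$ and $|C|=2^c$ hold exactly. We handle this by letting $O$ absorb all post-stream computation, which is legitimate precisely because update and final-step space is not counted as storage space.
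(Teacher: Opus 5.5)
Your proposal is correct and follows the paper's route: clean and catalytic memory contents become $Q$ and $C$, the per-item update becomes $\delta$, and the final output-and-restore step becomes $O$. It is also more complete, since the paper only writes out the algorithm-to-automaton direction. The one loose point, which the paper's proof shares, is initialization. An algorithm may read catalytic memory before the first item, so the post-initialization configuration $(q(z),c(z))$ need not have a fixed clean part. Folding initialization into $\delta$ fails, because $\delta$ cannot recognize the first item without an extra marker state. Declaring $(q(z),c(z))$ the start state also fails, because the definition requires start states of the form $(q_0,z)$. The clean fix is to relabel $Q\times C$ by a bijection sending $(q(z),c(z))\mapsto(q_0,z)$. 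Such a bijection exists because the catalytic condition forces $(q(z),c(z))$ to determine $z$.
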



The proof is very similar to the proof of Claim~\ref{Clm: Det Stream Model Equivalence}. We give the proof of only one direction (only if direction) of the Claim~\ref{Clm: Det Cat Space Model Equivalence}, since that is the direction we need for our later proofs. 

\begin{proof}

We show that any algorithm that can be implemented by the automata can be implemented as a one-pass streaming algorithm.

\vspace{2mm}
\noindent{\em Streaming Algorithm to Automata:}
    A streaming algorithm using $s$ bits can have at most $2^s$ possible configurations in its memory. Similarly, $c$ bits of catalytic space can have $2^c$ possible configurations. We map each of the storage and catalytic memory configurations to one of the states in $Q$ and $C$, respectively. 
    
    Before starting to process the stream, the algorithm performs an initialization step. Let us denote the state corresponding to the storage memory configuration after the initialization step as $q_0$. Hence, the set of possible start states can be defined as $S_0 = \sbrac{(q_0,c)|\forall c \in C}$.

    At the arrival of an element $\omega \in \streamuniverse$ in the stream, a streaming algorithm performs an update operation. Any update operation can be seen as the changes made in the storage space and the catalytic space available to the algorithm. Hence, each update step can be modelled using a transition function $\func{\delta}{Q\times C\times\streamuniverse}{Q\times C}$.

    This ensures that the model at the end of processing a stream $\stream$ is at the state corresponding to the final storage memory and catalytic memory configuration of the streaming algorithm $\sA$. Then, the output function can then output same as $\sA(x)$ as the algorithm only outputs based on its stored memory.

\end{proof}

\begin{claim}\label{Clm: Det Cat No Intersection}
    Let $Q_y = \sbrac{(q,z)|\Delta((q_o,y),x), x \in \streamuniverse^*}$ denote the set of possible states reachable from $(q_o,y)$ on any input. Then, for $y_1 \neq y_2$, $Q_{y_1} \cap Q_{y_2} = \emptyset$.
\end{claim}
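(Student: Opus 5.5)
The plan is a proof by contradiction that exploits the determinism of the transition function $\delta$ together with the restoration requirement imposed on the output function $O$. Suppose $y_1 \neq y_2$ and some state $(q,z) \in Q_{y_1} \cap Q_{y_2}$. By the definition of $Q_y$, there are input strings $x_1, x_2 \in \streamuniverse^*$ with $\Delta((q_0,y_1),x_1) = (q,z) = \Delta((q_0,y_2),x_2)$.

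The key step is to bring both computations to the \emph{same} final state on a single legal stream. First, I would note that $\Delta$ is a deterministic function of the current state and the remaining input, so $\Delta(\Delta((q_0,y),x),w) = \Delta((q_0,y),xw)$; this is just the recursive definition of $\Delta$. For the case $x_1 = x_2$ this is not even needed, since both runs end at $(q,z)$ on the same stream $x_1$. For $x_1 \neq x_2$ I would still use the empty continuation: by definition $x_1$ itself is a stream. The catalytic condition then says $O(\Delta((q_0,y_1),x_1)) = O((q,z)) \in \sO \times \{y_1\}$. Likewise, since $x_2$ is a stream, $O(\Delta((q_0,y_2),x_2)) = O((q,z)) \in \sO \times \{y_2\}$. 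The value $O((q,z))$ is one fixed element, so its catalytic component equals both $y_1$ and $y_2$, contradicting $y_1 \neq y_2$.

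The only point needing care, and the main potential obstacle, is whether the correctness and restoration requirement applies to every string in $\streamuniverse^*$ used to define $Q_y$, or only to streams of some fixed length $\streamlen$. This matters because the earlier remark states that the lower bound holds even when $\streamlen$ is known in advance. If restoration is only guaranteed on length-$\streamlen$ streams, I would redefine reachability through prefixes. Choose $x_1, x_2$ reaching the common state $(q,z)$; if both have the same length $\ell \le \streamlen$, append any common suffix $w$ of length $\streamlen - \ell$. Determinism then gives the same final state on $x_1w$ and $x_2w$, and the same contradiction follows.

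If the lengths differ, I would restrict $Q_y$ to states reachable by prefixes of a fixed length, that is, apply the claim layer by layer at each time step. That refinement is still enough for the later averaging argument, which counts the reachable states at each time step. As stated, with $O$ required to restore on every stream in $\streamuniverse^*$, the one-line contradiction above suffices.
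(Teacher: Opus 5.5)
Your core argument is correct and is essentially the paper's proof: a common state $(q,z)\in Q_{y_1}\cap Q_{y_2}$, reached on streams $x_1$ and $x_2$, would force the single value $O((q,z))$ to lie in both $\sO\times\{y_1\}$ and $\sO\times\{y_2\}$, which is impossible because the definition of a catalytic-deterministic-streaming-automaton requires restoration on every stream in $\streamuniverse^*$. Your fixed-length aside is not needed, and its closing remark misdescribes the paper's next step: the averaging in Claim~\ref{Clm: Det Small Cat Model Exists} uses the global sets $Q_y$, not per-time-step counts, and per-step disjointness only gives, for each step $t$, some $y_t$ with a small layer, not directly a single $y$ that is small at every step.
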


\begin{proof}
    For contradiction, let us assume that there exists streams $\stream_1$ and $\stream_2$ (not necessarily different) such that for $y_1 \neq y_2$, $\Delta((q_0,y_1),\stream_1) = \Delta((q_0,y_2),\stream_2)$. Let us denote this common state as $(q,y_3)$.
    Consider the output function $O$ on the state $(q,y_3)$. Clearly, $O((q, y_3))$ cannot be both $y_1$ and $y_2$, implying that either the output of the automata on stream $\stream_1$ violates the necessary condition of the output function or  the output of the automata on stream $\stream_2$ violates the necessary condition of the output function. Hence, a contradiction.
\end{proof}

\begin{claim}\label{Clm: Det Small Cat Model Exists}
    $\exists y \in C$ such that $\size{Q_y} \leq 2^s$.
\end{claim}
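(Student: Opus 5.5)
The argument is a pigeonhole count that combines the disjointness in \Cref{Clm: Det Cat No Intersection} with the fact that the whole state space has exactly $2^s\cdot 2^c$ elements.

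\textbf{Step 1: each $Q_y$ is a subset of the state space.} Every $Q_y$ lies inside $Q\times C$. By \Cref{Clm: Det Cat Space Model Equivalence}, this set has size $\size{Q}\cdot\size{C} = 2^{s}\cdot 2^{c}$.

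\textbf{Step 2: the sets $Q_y$ are pairwise disjoint.} This is exactly \Cref{Clm: Det Cat No Intersection}. It rests on the restoration requirement: any reachable state $(q,z)$ determines the initial catalytic content $y$ through the second coordinate of $O((q,z))$. Hence no reachable state can lie in two different $Q_y$.

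\textbf{Step 3: sum and average.} By Steps 1 and 2,
\[
\sum_{y\in C}\size{Q_y} = \size{\bigcup_{y\in C} Q_y} \leq \size{Q\times C} = 2^{s+c}.
\]
There are $\size{C}=2^c$ summands, all nonnegative. So the smallest one satisfies $\min_{y}\size{Q_y}\leq 2^{s+c}/2^c = 2^s$. Any minimizing $y$ witnesses the claim.

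There is no real obstacle here once disjointness is available. The only point needing care is that $Q_y$ must be defined to include the start state $(q_0,y)$, via the empty stream $x\in\streamuniverse^*$. This keeps every $Q_y$ nonempty, so the sets are genuinely indexed by distinct $y$. Including the start state also matters for disjointness: the output condition must hold on the empty stream as well, so the start states themselves are covered by the restoration argument.

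For the subsequent use of this claim, fix such a $y$ and restrict $\delta$ and the first component of $O$ to $Q_y$. The set $Q_y$ is closed under $\delta$ by definition, so this yields a simple deterministic streaming-automaton with at most $2^s$ states. By \Cref{Clm: Det Stream Model Equivalence}, that automaton is an ordinary one-pass algorithm using $s$ bits.
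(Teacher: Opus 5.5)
Your proof is correct and is the same argument as the paper's: the sets $Q_y$ are pairwise disjoint by the restoration condition and lie inside a state space of size $2^{s+c}$, so averaging over the $2^c$ choices of $y$ gives some $\size{Q_y}\leq 2^s$. Your remark about nonemptiness is harmless but not needed, since an empty $Q_y$ would satisfy the bound trivially.
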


\begin{proof}
    Note that $\size{Q\times C} = 2^{s+c}$, and $\size{C} = 2^c$. From Claim~\ref{Clm: Det Cat No Intersection} we have that all the sets $Q_y$ are disjoint. So by pigeon-hole-principle there exists a $y$ with $\size{Q_y} \leq 2^s$.
\end{proof}

\begin{theorem}\label{thm:lbdet}
    If there exists a deterministic one-pass catalytic streaming algorithm using space $s$, then there exists a deterministic one-pass streaming algorithm using space $s$ that gives the same output as the one-pass catalytic streaming algorithm on any input stream.
\end{theorem}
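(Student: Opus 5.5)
The plan is to chain the automaton-view claims already established. By \Cref{Clm: Det Cat Space Model Equivalence}, the given catalytic algorithm $\sA$, with storage space $s$ and catalytic space $c$, is captured by a catalytic-deterministic-streaming-automaton
\[
\fbrac{(Q\times C),q_0,\streamuniverse,\delta,O}
\]
with $\size{Q}=2^s$ and $\size{C}=2^c$. On every stream $\stream$ and every $y\in C$, this automaton outputs $O(\Delta((q_0,y),\stream))\in\sO\times\{y\}$, and its first coordinate is $\sA$'s answer. Correctness of a catalytic algorithm means this first coordinate equals $f(\stream)$ for \emph{every} initial catalytic content $y$. So I am free to fix any single $y$ in advance and hard-wire it.

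Next, apply \Cref{Clm: Det Cat No Intersection} and \Cref{Clm: Det Small Cat Model Exists} to obtain a $y^\ast\in C$ whose reachable set $Q_{y^\ast}$ has $\size{Q_{y^\ast}}\le 2^s$. From this I build a simple deterministic streaming-automaton as follows:
\begin{itemize}
\item the state set is $Q_{y^\ast}$;
\item the start state is $(q_0,y^\ast)$;
\item the transition function is $\delta$ restricted to $Q_{y^\ast}\times\streamuniverse$;
\item the output function $O'$ is the first coordinate of $O$.
\end{itemize}
The transition function is well defined because $Q_{y^\ast}$ is closed under $\delta$: if $(q,z)=\Delta((q_0,y^\ast),x)$, then $\delta((q,z),\omega)=\Delta((q_0,y^\ast),x\omega)$. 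An induction on stream length shows that the restricted $\Delta$ agrees with the original one from $(q_0,y^\ast)$. Hence $O'$ of the final state is exactly $\sA$'s output when started with catalytic content $y^\ast$, which is the correct answer on every stream.

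Finally, invoke the ``automata to streaming algorithm'' direction of \Cref{Clm: Det Stream Model Equivalence}. Padding the state set to exactly $2^s$ states if needed, we name the states with $\lceil\log\size{Q_{y^\ast}}\rceil\le s$ bits and simulate the automaton. This gives a standard one-pass deterministic algorithm with storage space $s$ and the same output on every input stream.

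The main subtlety is not the counting but making sure the simulation is legitimate. Two points need checking:
\begin{itemize}
\item The encoding of $Q_{y^\ast}$ into $s$ bits, and the resulting transition map on codes, is a fixed non-uniform object depending only on $\sA$. This is acceptable in the model of \Cref{Clm: Det Stream Model Equivalence}, which places no bound on update space or uniformity.
\item If the stream length (or $n$) is a parameter, $Q_{y^\ast}$ must be defined over all streams considered. Otherwise we may fix $y^\ast$ per parameter value, which is still fine because space is measured per instance.
\end{itemize}
I would state these points explicitly and otherwise let the proof rest on the three claims.
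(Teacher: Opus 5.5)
Your proposal is correct and follows the paper's proof essentially step for step. You pass to the catalytic automaton, use disjointness of the reachable sets plus pigeonhole to find $y^\ast$ with $\size{Q_{y^\ast}}\le 2^s$, restrict $\delta$ to $Q_{y^\ast}$, drop the catalytic coordinate of the output, and convert back to an $s$-bit algorithm. Your explicit closure check for $Q_{y^\ast}$ and the padding for $\size{Q_{y^\ast}}<2^s$ make precise what the paper glosses over (it writes $\size{Q_y}=2^s$).

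One caution: your side remark that fixing $y^\ast$ per value of a known stream length ``is still fine'' is not justified as stated. The disjointness claim relies on restoration after every stream in $\streamuniverse^\ast$, hence after every prefix. If restoration is required only at length $m$, the reachable sets for different $y$ can meet at different time steps. The theorem as stated does not need that case, though.
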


\begin{proof}
    Suppose there exists such a catalytic streaming algorithm $\sA$ that uses $s$ bits of clean space and $c$ bits of catalytic space. By Claim~\ref{Clm: Det Cat Space Model Equivalence} there exists a catalytic-deterministic-streaming-automata performing exactly as the algorithm $\sA$.  Let this automata be $$\fbrac{(Q\times C),q_0,\streamuniverse,\func{\delta}{(Q\times C)\times\streamuniverse}{(Q\times C)},\func{O}{(Q\times C)}{\sO\times C}},$$ with $|Q|=2^s$ and $|C|=2^c$.    
    In this automata, by Claim~\ref{Clm: Det Small Cat Model Exists}, there exists a $y \in C$ such that $\size{Q_y} \leq 2^s$. Also, from Claim~\ref{Clm: Det Cat No Intersection} we know that  these states are not reachable from any $(q_0, z)$ for any $z\neq y$. 
    Also, note that for all states in $Q_y$ the output function $O$ outputs a value in $\sO\times \{y\}$.
    
    Now, consider the states $Q_y$ and using the transition function $\delta$ and output function $O$ on the states $Q_y$ we obtain a simple-deterministic-streaming automata
    $$\fbrac{Q_y,(q_0,y),\streamuniverse,\func{\delta}{Q_y\times\streamuniverse}{Q_y},\func{O}{Q_y}{\sO}},$$ with $|Q_y|=2^s$, where the new output function only returns the first term of the what the previous output function (by ignoring the catalytic space part in the output).  The output of this automata on any stream $\stream$ is exactly the same as the output of the algorithm $\sA$.

    Now by Claim~\ref{Clm: Det Stream Model Equivalence} there is an equivalent algorithm that uses $\log |Q_y|$ bits of space giving the same output as the catalytic algorithm $\sA$.

\end{proof}

\subsection{Randomized streaming automaton}

We can generalize for the catalytic and non-catalytic versions of the streaming-automaton to the randomized setting, where the transitions are randomized.  Let us start with defining the simple-randomized streaming automaton.

\begin{definition}
    A \emph{simple randomized streaming-automata} is a 6-tuple 
    \[\fbrac{Q,q_0, \streamuniverse, \func{\randtran}{\streamuniverse}{[0,1]^{|Q|\times |Q|}},\func{O}{Q}{\sO}}\]
    consisting of a state space $Q$,  a state state $q_0$,  an universe (or alphabet) $\streamuniverse$, a transition matrix for all element of the universe $\func{\randtran}{\streamuniverse}{[0,1]^{|Q|\times |Q|}}$ and an output function $\func{O}{Q}{\sO}$, where $\sO$ is an output space.  Moreover, the transition matrix has the following property:
    
    For any $\omega \in \Omega$, we denote by $\randtran^\omega$ the transition matrix $\randtran(\omega)$, and for any $q', q\in Q$ we denote by
    $\randtran^\omega[q', q]$ the $q',q$-th entry in the transition matrix. This entry represents the probability of going from state $q'$ to state $q$ on input $\omega$. Thus for each input $\omega \in \Omega$ $\randtran^\omega$ satisfies  $$\sum_{q \in Q} \randtran^\omega[q',q] = 1.$$
    We can naturally view $\randtran$ as a function from $[0,1]^{\size{Q}}\times \streamuniverse$ to $[0,1]^{\size{Q}}$, as 
    $\randtran(v) = \randtran^{\omega}.v^{T}$     
     We also denote by $\func{\Randtran}{[0,1]^{\size{Q}}\times\streamuniverse^*}{[0,1]^{\size{Q}}}$ the composition of the transition rule defined as $\Randtran(e_{q},x\omega) = \randtran^\omega(\Randtran(e_{q},x))$ where $e_{q}$ is a $\size{Q}$ length unit vector with $1$ at the entry corresponding to $q$, and $0$ elsewhere.

     Thus, on any stream $\stream$ the output of the simple-randomized-streaming-automaton is a distribution over the output space $\sO$ with 
     $$\Pr[\mbox{ Output on stream $\stream$ is $o$}] = \sum_{q|O(q) = o}\Randtran(e_{q_0},\stream)[q].$$

\end{definition}

Now, as in the case of deterministic streaming-automata we extend the definition of simple-randomized-streaming-automata to catalytic-randomized-streaming automaton. For completeness we give the formal definition below:

\begin{definition}
    A \emph{catalytic-randomized-streaming-automata} is a 6-tuple 
    \[\fbrac{(Q\times C),q_0, \streamuniverse, \func{\randtran}{\streamuniverse}{[0,1]^{|Q\times C|\times |Q\times C|}},\func{O}{Q\times C}{\sO}}\]
    consisting of a state space $Q\times C$,  a state state $q_0\in Q$,  an universe (or alphabet) $\streamuniverse$, a transition matrix for all element of the universe $\func{\randtran}{\streamuniverse}{[0,1]^{|Q\times C|\times |Q\times C|}}$ and an output function $\func{O}{Q\times C}{\sO}$, where $\sO$ is an output space.  Moreover, the transition matrix has the following property:
    
    For each input $\omega \in \Omega$ and every $q'\in Q\times C$ the transition function $\randtran^\omega$ satisfies  $$\sum_{q \in Q\times C} \randtran^\omega[q',q] = 1.$$ 
    Moreover, for any stream $\stream$ and any catalytic state $c\in C$, and any $q\in Q\times C$
    $$ \mbox{If } \Randtran((q_0, c)[q] \neq 0 \mbox{ then }
    O(q) \in (\sO\times \{c\}),$$
where $\Randtran$ is the natural extension of $\randtran$ to streams.  

Note that, the last condition ensures that irrespective of how low the probability of reaching a state may be the output always recovers the catalytic state.

\end{definition}

\subsection{Lower Bounds for 1-pass Randomized Catalytic Streaming Algorithms}

The following claim proves the equivalence between 1 pass randomized streaming algorithm and simple-randomized-catalytic-streaming-automaton. The proof is identical to the the proof in the deterministic case. So we skip it.


\begin{claim}\label{Clm: Rand Stream Model Equivalence}
 There exists an one-pass randomized streaming algorithm $\sA$ with storage space $s$ outputting values in $\sO$, and working on a stream where item comes from $\streamuniverse$  if and only if there is a simple-deterministic-streaming-automaton 
 $\fbrac{Q,q_0, \streamuniverse, \func{\randtran}{\streamuniverse}{[0,1]^{|Q|\times |Q|}},\func{O}{Q}{\sO}}$
 such that $\size{Q} = 2^s$ and  for all $\stream \in \Omega^*$ the output distribution of the automaton on input $\stream$ is the same as the output distribution of the algorithm in the same input.
\end{claim}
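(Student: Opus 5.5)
We prove the two directions of \Cref{Clm: Rand Stream Model Equivalence} separately. The argument follows \Cref{Clm: Det Stream Model Equivalence}, with deterministic transitions replaced by stochastic matrices. We treat a one-pass randomized streaming algorithm with storage space $s$ as follows. It performs a (possibly randomized) initialization. It then applies to each arriving item $\omega$ an update that may toss fresh coins. Finally it outputs a value, possibly randomized, that depends only on the stored memory. As in the deterministic case, we count only storage space, so coins used inside an update step are not charged.

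\emph{Algorithm to automaton.} Let $Q$ be the set of $2^s$ memory configurations. For each configuration $q'$ and item $\omega$, the update induces a distribution over next configurations. We set $\randtran^\omega[q',q]$ to be the probability that the update moves $q'$ to $q$; each row of this matrix sums to $1$. By induction on the length of the stream, the distribution of the memory after processing $\stream$ equals $\Randtran(e_{q_0},\stream)$. The induction step uses the fact that the coins of successive updates are independent given the current configuration, which is the Markov property. Summing over the configurations that output $o$ then shows that the two output distributions agree.

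Two technicalities arise in this direction. First, if the initialization is randomized, we absorb it into the first transition or into a designated start state $q_0$ without increasing $|Q|$ beyond $2^s$. Second, if the final output is randomized, we either allow the automaton's output to be randomized or absorb the final coins into the last transition. We expect this second point to be the main obstacle, since the definition makes $O$ deterministic. We would handle it with a standard convention: the algorithm's output is a deterministic function of its final storage contents, and all randomness is used in updates. Alternatively, we could pad $Q$ by at most a constant factor, which changes $s$ only by an additive $O(1)$.

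\emph{Automaton to algorithm.} We name the states of $Q$ with $\log|Q| = s$ bits and store the name of the current state. On the arrival of $\omega$ in state $q'$, the algorithm samples $q$ with probability $\randtran^\omega[q',q]$ using fresh coins in its update space, and overwrites the stored name. At the end it outputs $O(q)$. The same induction shows that the distribution of the stored state after $\stream$ is $\Randtran(e_{q_0},\stream)$, so the output distributions coincide. We assume the transition probabilities are sampleable, for example dyadic rationals, or we assume an idealized source of random reals, which is the usual convention for this kind of equivalence.
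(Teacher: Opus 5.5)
Your proposal is correct and is essentially the paper's argument. The paper only says the proof is identical to the deterministic case: memory configurations become states with row-stochastic matrices $\randtran^\omega$ in place of $\delta$, and conversely one stores the $s$-bit name of the current state and samples each transition. That is exactly what you do.

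The only caution concerns your side remarks on randomized initialization and output; none of the proposed fixes works as stated. Absorbing the final coins into ``the last transition'' is impossible, since the automaton does not know when the stream ends. A random initialization cannot be folded into a single start state $q_0$ within $2^s$ states: on the empty stream the output $O(q_0)$ is deterministic, and on nonempty streams you need a fresh, never re-entered start state. Derandomizing $O$ by splitting states can cost a factor equal to the output's support size, not a constant.

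So rely on the convention you state: deterministic initialization and output, with all coins spent inside updates. This is what the claim with $|Q| = 2^s$ implicitly assumes.
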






Similarly, we can prove that 1 pass randomized streaming algorithm using catalytic space is equivalent to randomized-catalytic-streaming-automaton.

\begin{claim}\label{Clm: Rand Cat Space Model Equivalence}
 There exists an one-pass randomized streaming algorithm $\sA$ with storage space $s$ and $c$ bit of catalytic space  outputting values in $\sO$, and working on a stream where item comes from $\streamuniverse$  if and only if there is a catalytic-randomized-streaming-automaton 
 $$\fbrac{(Q\times C),q_0, \streamuniverse, \func{\randtran}{\streamuniverse}{[0,1]^{|Q\times C|\times |Q\times C|}},\func{O}{Q\times C}{\sO}\times C}$$
 such that $\size{Q} = 2^s$, $\size{C} = 2^c$ and  for all $\stream \in \Omega^*$ the output distribution of the automaton on input $\stream$ is the same as the output distribution of the algorithm in the same input.
\end{claim}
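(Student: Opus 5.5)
We prove Claim~\ref{Clm: Rand Cat Space Model Equivalence} by repeating the construction in the proof of Claim~\ref{Clm: Det Cat Space Model Equivalence}, with deterministic transitions replaced by stochastic matrices. As there, we give the direction we need: from a one-pass randomized catalytic algorithm $\sA$ to an automaton. The converse, simulating an automaton by sampling each transition with fresh random bits and storing the current state in $s$ clean bits and $c$ catalytic bits, follows the same pattern.

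First we identify states with memory configurations. We let $Q$ be the $2^s$ storage-memory configurations and $C$ the $2^c$ catalytic configurations. We let $q_0$ be the storage configuration after $\sA$'s initialization step. If that step is randomized, we either absorb it into the first transition or add a fresh start symbol; the deterministic proof handles this the same way. Each start state then has the form $(q_0,c)$ with $c\in C$ chosen by the adversary.

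Next we define the transitions. On arrival of $\omega\in\streamuniverse$, the update of $\sA$ maps the current configuration $(q',c')$, using fresh random bits, to a distribution over configurations in $Q\times C$. We set $\randtran^\omega[(q',c'),(q,c)]$ to the probability that the update moves $(q',c')$ to $(q,c)$. Every row then sums to $1$. Because the update depends only on the current memory and $\omega$, the process is Markovian. By induction on the stream length, $\Randtran(e_{(q_0,c)},\stream)$ is exactly the distribution of $\sA$'s memory configuration after processing $\stream$ from catalytic content $c$.

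Finally we define the output function. We let $O(q,c)$ be the (possibly randomized, hence absorbed into an extra transition) output of $\sA$ on final configuration $(q,c)$, together with the catalytic content it restores. The output distributions then agree by summing over final states.

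The main obstacle is the catalytic condition on $O$. We must show that every state reachable with nonzero probability from $(q_0,c)$ outputs an element of $\sO\times\{c\}$. We argue by contradiction. Suppose some reachable state $q$ had a restoration different from $c$. Then, since $q$ is reached with positive probability, there is a fixed sequence of random bits on which $\sA$ fails to restore the catalytic memory on stream $\stream$ from initial content $c$. The catalytic model requires restoration for every run of the randomized algorithm, not merely with high probability, so this is a violation. Hence the automaton satisfies the definition of a catalytic-randomized-streaming-automaton.
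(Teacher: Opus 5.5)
Your proposal is correct and takes the same route as the paper, which only says this claim is proved ``similarly'' to Claim~\ref{Clm: Det Cat Space Model Equivalence}. That route identifies states with (clean, catalytic) memory configurations, lets the update step induce the stochastic matrices $\randtran^\omega$, and reads the output and restored catalytic content off the final configuration. Your explicit argument that every state reached with positive probability from $(q_0,c)$ must output in $\sO\times\{c\}$, because restoration is required on every run, is exactly what the paper's definition presupposes. Your side remarks on randomized initialization and output go beyond the paper, which tacitly assumes a fixed start state and a deterministic output map; note also that the automaton has no end-of-stream symbol, so an ``extra transition'' for the output step is not literally available.
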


\begin{claim}\label{Clm: Rand Cat No Intersection}
    Let $Q_c = \sbrac{(q',c')|\Randtran(e_{(q_0,c)},x)[q',c'] > 0,x\in\streamuniverse^*}$ denote the set of states reachable from $(q_0,c)$ on any input. Then, for $c_1 \neq c_2$, $Q_{c_1} \cap Q_{c_2} = \emptyset$.
\end{claim}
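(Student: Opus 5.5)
I will mirror the argument of Claim~\ref{Clm: Det Cat No Intersection}, replacing ``the state reached on input $\stream$'' by ``a state reached with positive probability on input $\stream$''. The contradiction will come from the restoration condition in the definition of a catalytic-randomized-streaming-automaton. That condition is required for \emph{every} state in the support of $\Randtran(e_{(q_0,c)},\stream)$, regardless of how small its probability is.

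\textbf{Step 1: set up the contradiction.} Suppose, for contradiction, that some $c_1\neq c_2$ have a common state $(q',c')\in Q_{c_1}\cap Q_{c_2}$. By the definition of $Q_c$, there are streams $\stream_1,\stream_2\in\streamuniverse^*$ (possibly equal, possibly of different lengths) such that
\[
\Randtran(e_{(q_0,c_1)},\stream_1)[(q',c')]>0 \qquad\text{and}\qquad \Randtran(e_{(q_0,c_2)},\stream_2)[(q',c')]>0 .
\]

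\textbf{Step 2: apply the restoration condition.} The condition applies to stream $\stream_1$ with initial catalytic content $c_1$, and to stream $\stream_2$ with initial content $c_2$. Both supports contain $(q',c')$. So the output function must satisfy $O((q',c'))\in\sO\times\{c_1\}$ and also $O((q',c'))\in\sO\times\{c_2\}$. Since $O$ is a function, its catalytic component is a single element of $C$, and therefore $c_1=c_2$. This is the desired contradiction, so $Q_{c_1}\cap Q_{c_2}=\emptyset$.

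\textbf{Main obstacle.} The only subtle point is definitional rather than mathematical. The restoration requirement must hold for every reachable state and every stream, not merely with high probability, and it must be applied to the output function $O:Q\times C\to\sO\times C$ as it appears in Claim~\ref{Clm: Rand Cat Space Model Equivalence}. The definition writes the codomain as $\sO$, but the condition ``$O(q)\in\sO\times\{c\}$'' makes clear that the codomain is $\sO\times C$. I will state this reading explicitly.

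I will also note that reachability is over all streams in $\streamuniverse^*$, including the empty stream. The paper's remark says the lower bound still holds when $\streamlen$ is known in advance. In that setting one restricts the condition to streams of the relevant length and defines $Q_c$ accordingly; the same argument goes through unchanged. The claim then feeds a pigeonhole step exactly as in Claim~\ref{Clm: Det Small Cat Model Exists}.
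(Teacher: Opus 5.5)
Your proof is correct and is essentially the paper's own argument: a state in both supports would force the catalytic component of $O$ at that state to equal both $c_1$ and $c_2$, contradicting the restoration condition, which is imposed on every state of positive probability. One caution on your known-length aside, which is fine for the claim itself: if restoration is only required on streams of length $\streamlen$, then $Q_c$ contains only final states, so the later simulation step would need a time-indexed version of the claim (obtained by extending two equal-length prefixes by a common suffix) and does not go through verbatim.
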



\begin{proof}
    For contradiction, let us assume that there exists $\stream_1$ and $\stream_2$ (not necessarily different) such that for some $c_1 \neq c_2$ and $(q,c_3)$, we have $\Randtran(e_{(q_0,c_1)},x_1)[q,c_3] > 0$ and $\Randtran(e_{(q_0,c_2)},x_2)[q,c_3] > 0$. Consider the output function $O$ on the state $(q,c_3)$. Clearly, $O((q,c_3))$ can not be both $c_1$ and $c_2$. Hence, the model, starting with catalytic spaces $c_1$ and $c_2$, and running on strings $x_1$ and $x_2$ has non-zero probability of outputting $c_3$ in the catalytic space. This violates the necessary condition of the output function. Hence, a contradiction.
\end{proof}

\begin{claim}\label{Clm: Rand Small Cat Model Exists}
    $\exists y \in C$ such that $\size{Q_y} \leq 2^s$.
\end{claim}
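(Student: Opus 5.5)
The plan is to reuse the pigeonhole argument from the deterministic case (Claim~\ref{Clm: Det Small Cat Model Exists}), with Claim~\ref{Clm: Rand Cat No Intersection} taking the place of Claim~\ref{Clm: Det Cat No Intersection}.

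We work inside the catalytic-randomized-streaming-automaton given by Claim~\ref{Clm: Rand Cat Space Model Equivalence}. Its full state space is $Q\times C$, with $\size{Q}=2^s$ and $\size{C}=2^c$, so $\size{Q\times C}=2^{s+c}$.

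For each $y\in C$, the set $Q_y$ consists of the states reached with positive probability from $(q_0,y)$ on some input $x\in\streamuniverse^*$. Each $Q_y$ is therefore a subset of $Q\times C$. By Claim~\ref{Clm: Rand Cat No Intersection}, the sets $\{Q_y\}_{y\in C}$ are pairwise disjoint. This gives
\[
\sum_{y\in C}\size{Q_y} \;=\; \size{\textstyle\bigcup_{y\in C} Q_y} \;\leq\; \size{Q\times C} \;=\; 2^{s+c}.
\]

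There are $2^c$ indices $y$, so an averaging (pigeonhole) step shows that some $y$ satisfies $\size{Q_y}\leq 2^{s+c}/2^c = 2^s$. That is exactly the claim.

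The one point needing care is that disjointness is applied to the positive-probability support sets, not to probability mass. This is precisely the form in which Claim~\ref{Clm: Rand Cat No Intersection} is stated, so no further work is needed. There is no real obstacle here; the substantive content already lies in the disjointness claim.
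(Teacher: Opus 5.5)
Your proof is correct and is the same as the paper's argument: the sets $Q_y$ are pairwise disjoint subsets of a state space of size $2^{s+c}$, so pigeonhole over the $2^c$ choices of $y$ gives some $y$ with $\size{Q_y}\leq 2^s$. You rightly cite the randomized disjointness result, Claim~\ref{Clm: Rand Cat No Intersection}, where the paper's text cites the deterministic Claim~\ref{Clm: Det Cat No Intersection}.
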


\begin{proof}
    Note that $\size{Q\times C} = 2^{s+c}$, and $\size{C} = 2^c$. From Claim~\ref{Clm: Det Cat No Intersection} we have that all the sets $Q_y$ are disjoint. So by pigeon-hole-principle there exists a $y$ with $\size{Q_y} \leq 2^s$.
\end{proof}

\begin{theorem}\label{thm:lbrand}
    If there exists a randomized one-pass catalytic streaming algorithm using space $s$, then there exists a randomized one-pass streaming algorithm using space $s$ whose output distribution on any input stream is the same as the output distribution of the one-pass catalytic streaming algorithm on the same input stream.
\end{theorem}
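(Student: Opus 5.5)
The plan mirrors the proof of Theorem~\ref{thm:lbdet}, with reachability replaced by reachability with positive probability. Start from a randomized one-pass catalytic streaming algorithm $\sA$ with $s$ bits of clean space and $c$ bits of catalytic space. By Claim~\ref{Clm: Rand Cat Space Model Equivalence}, it is modelled by a catalytic-randomized-streaming-automaton on state space $Q\times C$. Here $|Q|=2^s$ and $|C|=2^c$, the transition matrices are $\randtran^\omega$, and the output function returns a pair in $\sO\times C$.

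For each $y\in C$, let $Q_y$ be the support-reachable set of Claim~\ref{Clm: Rand Cat No Intersection}: the states reached with positive probability from $(q_0,y)$ on some finite stream. That claim gives pairwise disjointness. Claim~\ref{Clm: Rand Small Cat Model Exists} then gives, by pigeonhole, some $y$ with $|Q_y|\le 2^s$. Fix such a $y$.

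Next we check that $Q_y$ is closed under the transitions. Suppose $q\in Q_y$ is reached with positive probability on a stream $x$, and $\randtran^\omega[q,q']>0$. Then $q'$ is reached with positive probability on $x\omega$, so $q'\in Q_y$. Hence every row of $\randtran^\omega$ indexed by a state of $Q_y$ puts all its mass on $Q_y$. The restricted matrices are therefore genuine stochastic matrices on $Q_y$.

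Define the simple randomized streaming-automaton
\[\fbrac{Q_y,(q_0,y),\streamuniverse,\randtran|_{Q_y},O'}\]
where $O'$ is the first coordinate of $O$. We show by induction on the stream length that $\Randtran(e_{(q_0,y)},\stream)$ is supported on $Q_y$. The same induction shows it coincides with the restricted automaton's distribution, because mass never leaves $Q_y$.

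It follows that, for every output $o$, the probability of outputting $o$ is the same for the restricted automaton and for $\sA$ run with catalytic initial content $y$. By the definition of the catalytic model, $\sA$ must be correct for every initial catalytic content, so in particular for $y$. Claim~\ref{Clm: Rand Stream Model Equivalence} then converts the restricted automaton into a randomized one-pass streaming algorithm with $\lceil\log|Q_y|\rceil\le s$ bits of storage space.

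The step that needs the most care is the meaning of ``same output distribution'' in the statement. The simulation reproduces the distribution of $\sA$ started from the single fixed catalytic content $y$. It need not reproduce the distribution for other initial contents, since these may differ.

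I would handle this by interpreting the theorem, as the paper does in the deterministic case, relative to this fixed $y$. Any correctness or approximation guarantee that $\sA$ satisfies for every catalytic content then transfers. A second, minor point is that the randomness of $\sA$ must be modelled as fresh coins at each transition rather than stored random bits; I would assume this, as the automaton model of Claim~\ref{Clm: Rand Cat Space Model Equivalence} already does.
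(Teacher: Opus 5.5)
Your proposal is correct and follows the same route as the paper: you model $\sA$ as a catalytic randomized automaton, use disjointness of the positive-probability reachable sets $Q_y$ together with pigeonhole to find a $y$ with $|Q_y|\le 2^s$, and restrict the automaton to $Q_y$ started at $(q_0,y)$. Your explicit check that $Q_y$ is closed under positive-probability transitions (so the restricted matrices are stochastic), and your note that the simulated distribution is the one of $\sA$ with catalytic content $y$, fill in details the paper leaves implicit.
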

    
\begin{proof}
    Suppose there exists such a randomized catalytic streaming algorithm $\sA$ that uses $s$ bits of clean space and $c$ bits of catalytic space. By Claim~\ref{Clm: Rand Small Cat Model Exists}, there exists a catalytic randomized-automaton performing exactly as the algorithm $\sA$. Let this automata be
    \[\fbrac{(Q\times C),q_0, \streamuniverse, \func{\randtran}{\streamuniverse}{[0,1]^{|Q\times C|\times |Q\times C|}},\func{O}{Q\times C}{\sO}}\]
    with $|Q|=2^s$ and $|C|=2^c$. 

    n this automata, by Claim~\ref{Clm: Det Small Cat Model Exists}, there exists a $c \in C$ such that $\size{Q_y} \leq 2^s$. Also, from Claim~\ref{Clm: Det Cat No Intersection} we know that  these states are not reachable from any $(q_0, z)$ for any $z\neq y$. 
    Also, note that for all states in $Q_y$ the output function $O$ outputs a value in $\sO\times \{y\}$.

    Now, consider the states $Q_y$ and using the transition function $\randtran$ and output function $O$ on the states $Q_y$ we obtain a simple-randomized-streaming automata
    \[\fbrac{Q_y,(q_0,c), \streamuniverse, \func{\randtran}{\streamuniverse}{[0,1]^{|Q_y|\times |Q_y|}},\func{O}{Q_y}{\sO}}\] with $|Q_y|=2^s$, where the new output function only returns the first term of the what the previous output function (by ignoring the catalytic space part in the output).  The output of this automata on any stream $\stream$ is exactly the same as the output of the algorithm $\sA$.
    
    Now by Claim~\ref{Clm: Det Stream Model Equivalence} there is an equivalent algorithm that uses $\log |Q_y|$ bits of space giving the same output as the catalytic algorithm $\sA$.
\end{proof}

The lower bound on space complexity for computing $\FM{k}$ ($k\neq 1$) in one-streaming (deterministic or randomized) algorithm is $\Omega(n)$. The lower bound follows from the disjointness problem in communication complexity~\cite{RAZBOROV1992385}. Thus, from Theorem~\ref{thm:lbrand} we have Theorem~\ref{thm:1passFk}.

\section*{Acknowledgments}

We thank an anonymous researcher for pointing out that a four-pass algorithm for computing $\FM{k}$ can be obtained by adapting the catalytic register-program construction of ~\cite{Buhrman/STOC/2014/CatalyticIntro}. 
\newpage
\bibliographystyle{abbrvnat}
\bibliography{refs} 


\end{document}